\newtheorem{lemma}{Lemma}
\newtheorem{theorem}{Theorem}
\newtheorem{condition}{Condition}
\newtheorem{definition}{Definition}
\newtheorem{assumption}{Assumption}
\newcommand{\e}{\mathbb{E}}
\newcommand{\cov}{\mathrm{Cov}}
\newcommand{\tr}{\mathrm{tr}}
\newcommand{\argmax}{\mathrm{argmax}}
\newcommand{\vect}{\mathrm{vec}}
\newcommand\indep{\protect\mathpalette{\protect\independenT}{\perp}}
\def\independenT#1#2{\mathrel{\rlap{$#1#2$}\mkern2mu{#1#2}}}
\DeclareMathOperator*{\argmin}{argmin}
\DeclareMathOperator*{\card}{card}
\numberwithin{equation}{section}
\begin{document}
\title{\Large{\textbf{Joint Functional Gaussian Graphical Models}}}
\author{Ilias Moysidis\\
    imoysidi@fredhutch.org \\
	Public Health Sciences Division\\
	Fred Hutchinson Cancer Research Center\\
	Seattle, WA 98109-1024, USA\\
	Bing Li\\
	bxl9@psu.edu \\
	Department of Statistics\\
	Pennsylvania State University\\
	Pennsylvania, PA 16803, USA}
\date{}

\maketitle

\begin{abstract}
	Functional graphical models explore dependence relationships of random processes. This is achieved through estimating the precision matrix of the coefficients from the Karhunen-Loeve expansion. This paper deals with the problem of estimating functional graphs that consist of the same random processes and share some of the dependence structure. By estimating a single graph we would be shrouding the uniqueness of different sub groups within the data. By estimating a different graph for each sub group we would be dividing our sample size. Instead, we propose a method that allows joint estimation of the graphs while taking into account the intrinsic differences of each sub group. This is achieved by a hierarchical penalty that first penalizes on a common level and then on an individual level. We develop a computation method for our estimator that deals with the non-convex nature of the objective function. We compare the performance of our method with existing ones on a number of different simulated scenarios. We apply our method to an EEG data set that consists of an alcoholic and a non-alcoholic group, to construct brain networks. 
\end{abstract}

\section{Introduction}

Functional graphical models provide insight on the conditional dependence structure among the components of a multivariate random function. Datasets such as those arising from functional magnetic resonance imaging (fMRI) and electroencephalography (EEG) motivate research in this area. In particular, for the EEG dataset, a curve is recorded at each location of the brain and a network is constructed based on a sample of subjects each having a vector of curves. However, these datasets often consist of samples originating from different subpopulations that share behavioral or genetic characteristics. Such subpopulations could be ADHD and non-ADHD patients or alcoholic and non-alcoholic subjects. We can assume that the similarities between the groups translate into a common graph structure. Merging the data together to estimate a single graph would ignore the differences among subpopulations. Dividing the data to estimate a graph individually for each subpopulation would waste potential information in the common structure that lies across the different subpopulations. The goal of this paper is to develop a model that uses all the data to estimate the common structure, but also leaves room for differences between the graphs. We call this model the joint functional graphical model.

Consider a vector of random functions $\mathbf{g}=(g_{1},\ldots,g_{p})$. The graphical model of $\mathbf{g}$ is represented by an undirected graph $\mathcal{G}=(\mathcal{V},\mathcal{E})$, where $\mathcal{V}=\{1,\ldots,p\}$ is the set of nodes and $\mathcal{E}\subseteq\{(i,j):i,j\in\mathcal{V},i\ne j\}$ is the set of edges. For convenience we assume $i<j$ for $(i,j)\in\mathcal{E}$, because for an undirected graph $(i,j)$ and $(j,i)$ represent the same edge. The set of edges $\mathcal{E}$ is defined by the relation
\begin{align}
	(i,j)\notin\mathcal{E}\quad\Leftrightarrow\quad g_{i}\indep g_{j}|\mathbf{g}_{\backslash\{i,j\}}\label{rel:1}
\end{align}
where $\mathbf{g}_{\backslash\{i,j\}}$ represents $\mathbf{g}$ with its $i$-th and $j$-th components removed.

Functional graphical models have undergone dynamic development in the recent years. \cite{zhu2016bayesian} extended the notions of Markov distribution and hyper Markov laws to the functional case and developed a Bayesian approach for functional graphical models by proposing a hyper-inverse Wishart process prior for the covariance operator, and assuming the random elements are multivariate Gaussian processes. \cite{qiao2019functional} introduced the functional Gaussian graphical model (FGGM) where the random functions are assumed to be Gaussian random elements in a Hilbert space, and the network is constructed using an association of the relation \eqref{rel:1} with the coefficients of their Karhunen-Loeve expansion. \cite{soleali} proposed a semiparametric functional copula Gaussian graphical model for random functions that are not originally Gaussian processes, but for which there exist one-to-one transformations of their Karhunen-Loeve expansion coefficients that makes them Gaussian. \cite{li2018nonparametric} bypasses the Gaussian assumption altogether by replacing the conditional independence relationship with additive conditional independence. This approach allows for nonlinear or heteroscedastic relations between the random processes.

In the multivariate case, the research on covariance selection dates back to \cite{dempster1972covariance} who proposed backward selection, in which we start with a fully connected graph and at each step remove edges that are not significant according to a partial correlation test until all remaining edges are significant. Another approach to constructing graphical models is via sparse inducing penalty which has become popular because it can be applied to the case where the dimension of the graph is much bigger than the sample size. \cite{meinshausen2006high} reformulated the covariance selection problem to an $\ell_{1}$ penalized variable selection problem where each variable is regressed against the others to estimate its set of effective predictors. \cite{yuan2007model} propose an $\ell_{1}$ penalized loglikelihood method which has the advantage of merging variable selection and precision matrix estimation into one problem. 

The first paper that deals with the joint estimation problem in the multivariate case is by \cite{guo2011joint}. They decompose each precision matrix into a shared component and a unique component, and proposed a nonconvex hierarchical penalty that utilizes the information from all the data by penalizing first on a common and then on an individual level. \cite{danaher2014joint} developed the computationally attractive methods of Fused Graphical Lasso and Group Graphical Lasso that allow joint estimation while also have the advantage of employing convex penalties. The first forces a common structure using a penalty that promotes identical edge values across subpopulations, while the latter only propels shrinkage on a common level. A general framework for joint precision matrix estimation is provided in \cite{saegusa2016joint}. They use a graph Laplacian penalty that allows for different levels of similarity between subpopulations. Motivated by the high dimensionality of the datasets, \cite{cai2016joint} propose a computationally fast method for jointly estimating precision matrices, following the steps of the Dantzig Selector \citep{candes2007dantzig}.

In this paper we focus on the joint functional graphical model under the Gaussian assumption, and we call our method the joint functional Gaussian graphical model (JFGGM). We propose a nonconvex objective function that achieves regularization which encourages both a common graph structure and individual sparsity. To estimate the global minimizer we are using the local linear approximation (LLA) method, which is an algorithm that at each step proposes a convex, non-smooth objective function. To calculate the global minimizer at each step of the LLA algorithm we employ the alternating direction method of multipliers (ADMM) algorithm. Furthermore, we establish the asymptotic consistency of our method with overwhelming probability. In addition, we do a simulation study, where we compare our method with the FGGM applied on each subpopulation individually. Finally, we estimate the graph of the EEG dataset with the use of our method.

The rest of the paper is organized as follows. In section 2 we give an overview of the functional graphical lasso (fglasso) method, and introduce the JFGGM. In section 3 we develop the algorithmic procedure for estimating the JFGGM. In section 4 we establish the asymptotic consistency of our estimator. In section 5 we compare by simulation our method with the FGGM. In section 6 we apply JFGGM to the EEG dataset.

\section{Methodology}\label{methodology}

In this section we provide the theoretical background for the functional graphical model presented in \cite{qiao2019functional}. We, then, propose our method, the JFGGM, for joint estimation of graphical models that come from different subpopulations.

\subsection{Functional Graphical Models}\label{functional:graphical:models}
Let $(\mathbf{\Omega},\mathcal{F},P)$ be a probability space, let $T$ be an interval in $\mathbb{R}$, let $\mathbb{H}_{1},\ldots,\mathbb{H}_{p}$ be Hilbert spaces of real-valued functions on $T$. Let $g_{i}:\Omega\rightarrow\mathbb{H}_{i}$ be a random element in $\mathbb{H}_{i}$, so that $(g_{1},\ldots,g_{p})$ is a random element in the direct sum $\oplus_{i=1}^{p}\mathbb{H}_{i}$, which is the Cartesian product $\mathbb{H}_{1}\times\cdots\times\mathbb{H}_{p}$ together with the inner product $\langle f_{1},g_{1}\rangle_{\mathbb{H}_{1}}+\ldots+\langle f_{p},g_{p}\rangle_{\mathbb{H}_{p}}$.

The functional Gaussian graphical model (FGGM) developed by \cite{qiao2019functional} is under the assumption that $\mathbf{g}=(g_{1},\ldots,g_{p})$ is a multivariate Gaussian random element in $\mathbb{H}_{1}\times\cdots\times\mathbb{H}_{p}$. Each $\mathbb{H}_{i}$ is an $M$-dimensional subspace of the space of square integrable functions $\mathbb{L}_{2}(T,\mathcal{B}(T),\mu)$, where $\mathcal{B}(T)$ is the Borel $\sigma$-field generated by the open sets in $T$ and $\mu$ is the Lebesgue measure. Under this assumption, relation \eqref{rel:1} reduces to
\begin{align}\label{rel:2}
	(i,j)\notin\mathcal{E}\quad\Leftrightarrow\quad \cov[g_{i}(s),g_{j}(t)\,|\,g_{\backslash\{i,j\}}]=0\enspace\forall\, s,t\in T. 
\end{align}

Associated with the stochastic process $g_{j}$ is the mean function $h_{j}(t)=\e (g_{j}(t))$ and the covariance function $K_{j}(s,t)=\cov(g_{j}(s),g_{j}(t))$. Processes with well defined mean and covariance functions are called second-order processes. If such a process is mean-square continuous, then the Karhunen-Loeve expansion theorem applies. To move forward, we need the following assumption. 

\medskip
\begin{assumption}\label{assumption:1}
	Each $g_{j}$ is jointly measurable with respect to the product $\sigma$-field $\mathcal{B}(T)\times\mathcal{F}$ and $g_{j}(\cdot,\omega)\in\mathbb{H}_{j}$ for each $\omega\in\mathbf{\Omega}$. Furthermore, we have $\e\|\mathbf{g}\|^{2}<\infty$.
\end{assumption}
This assumption implies the existence of the covariance operator of the $j$-th process
\begin{align*}
	\mathscr{K}_{j}:\mathbb{H}_{j}\rightarrow\mathbb{H}_{j},\quad f\mapsto\int_{T}K_{j}(\cdot,t)f(t)dt,
\end{align*}
and its spectral decomposition. That is, there exist pairs $\{(\lambda_{jm},\phi_{jm})\}_{m=1}^{M}$, with $\lambda_{j1}\geq\ldots\geq\lambda_{jM}$, such that
\begin{align*}
	\int_{T}K_{j}(s,t)\phi_{jm}(t)dt=\lambda_{jm}\phi_{jm}(s),\quad s\in T,\quad m=1,\ldots,M,
\end{align*}
and
\begin{align*}
	\int_{T}\phi_{jm}(t)\phi_{jm'}(t)dt=\delta_{mm'},\quad m,m'=1,\ldots,M,
\end{align*}
where $\delta_{mm'}$ is the Kronecker $\delta$-function. To link the spectral decomposition of $\mathscr{K}_{j}$ with the Karhunen-Loeve expansion of $g_{j}$ we need the following assumption.

\medskip
\begin{assumption}\label{assumption:2}
	For each $j$, the mean and covariance functions of $g_{j}$ are continuous.
\end{assumption}
Without loss of generality we assume that $h_{j}=\mathbf{0}$. Under Assumptions~\ref{assumption:1} and \ref{assumption:2}, we have the following Karhunen-Loeve expansion for $g_{j}$:
\begin{align*}
	g_{j}(t)=\sum_{m=1}^{M}a_{jm}\phi_{jm}(t),
\end{align*}
where $a_{jm}=\int_{T}g_{j}(t)\phi_{jm}(t)dt$ is distributed as $\mathcal{N}(0,\lambda_{jm})$, and $a_{jm}\indep a_{jm'}$ for $m\ne m'$. A more detailed analysis of these matters can be found in \cite{hsing2015theoretical}.

Let $\mathbf{a}_{j}=(a_{j1},\ldots,a_{jM})^{\intercal}$. Since $\mathbf{g}$ is a mean zero Gaussian random element, the vector $\mathbf{a}=(\mathbf{a}_{1}^{\intercal},\ldots,\mathbf{a}_{p}^{\intercal})^{\intercal}$ follows a $pM$-dimensional normal distribution with mean $\mathbf{0}$ and covariance matrix $\mathbf{\Sigma}=(\mathbf{\Sigma}_{jl})_{p\times p}\in\mathbb{R}^{pM\times pM}$ where $\mathbf{\Sigma}_{jl}=\cov(\mathbf{a}_{j},\mathbf{a}_{l})\in\mathbb{R}^{M\times M}$. Let $\mathbf{\Omega}=\mathbf{\Sigma}^{-1}$ be the precision matrix. The following theorem \citep{qiao2019functional} links the conditional independence of random elements with the zero entries of $\mathbf{\Omega}$. 

\medskip
\begin{theorem}\label{theorem:1}
	Let $i,j\in\mathcal{V},\, i\ne j$, and let $\mathbf{\Omega}_{ij}$ be the $M\times M$ matrix corresponding to the $(i,j)$-th block matrix of $\mathbf{\Omega}$. Then,
	\begin{align*}
		g_{i}\indep g_{j}|\mathbf{g}_{\backslash\{i,j\}}\quad\Leftrightarrow\quad\mathbf{\Omega}_{ij}=\mathbf{0}.
	\end{align*}
\end{theorem}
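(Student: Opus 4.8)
The plan is to reduce the infinite-dimensional conditional independence statement to a finite-dimensional one about the Gaussian coefficient vectors, and then to invoke the standard correspondence between conditional independence of jointly Gaussian blocks and the zero pattern of the precision matrix.

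First I would observe that, because each $\mathbb{H}_{j}$ is $M$-dimensional with fixed orthonormal basis $\{\phi_{jm}\}_{m=1}^{M}$, the map $g_{j}\mapsto\mathbf{a}_{j}$ given by $a_{jm}=\int_{T}g_{j}(t)\phi_{jm}(t)\,dt$ and its inverse $\mathbf{a}_{j}\mapsto\sum_{m}a_{jm}\phi_{jm}$ are both continuous, hence Borel measurable. Consequently $\sigma(g_{j})=\sigma(\mathbf{a}_{j})$ for every $j$, and likewise $\sigma(\mathbf{g}_{\backslash\{i,j\}})=\sigma(\mathbf{a}_{\backslash\{i,j\}})$. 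Since conditional independence is a statement purely about the generated $\sigma$-fields, this yields the equivalence
\[
g_{i}\indep g_{j}\mid\mathbf{g}_{\backslash\{i,j\}}\quad\Longleftrightarrow\quad \mathbf{a}_{i}\indep\mathbf{a}_{j}\mid\mathbf{a}_{\backslash\{i,j\}},
\]
so it suffices to characterize the right-hand side.

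Next I would use that $\mathbf{a}$ is $\mathcal{N}(\mathbf{0},\mathbf{\Sigma})$ with $\mathbf{\Omega}=\mathbf{\Sigma}^{-1}$. Partitioning $\mathbf{a}$ into the blocks $\mathbf{a}_{i}$, $\mathbf{a}_{j}$ and the remainder $\mathbf{a}_{\backslash\{i,j\}}$, the conditional law of $(\mathbf{a}_{i},\mathbf{a}_{j})$ given $\mathbf{a}_{\backslash\{i,j\}}$ is Gaussian whose precision matrix is exactly the principal submatrix of $\mathbf{\Omega}$ indexed by the coordinates of $\mathbf{a}_{i}$ and $\mathbf{a}_{j}$, namely $\begin{pmatrix}\mathbf{\Omega}_{ii} & \mathbf{\Omega}_{ij}\\ \mathbf{\Omega}_{ji} & \mathbf{\Omega}_{jj}\end{pmatrix}$; crucially this submatrix, and hence the conditional dependence structure, does not depend on the conditioning value. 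Inspecting the conditional Gaussian density, its exponent contains the cross term $\mathbf{a}_{i}^{\intercal}\mathbf{\Omega}_{ij}\mathbf{a}_{j}$, so the density factorizes into a function of $\mathbf{a}_{i}$ times a function of $\mathbf{a}_{j}$ if and only if $\mathbf{\Omega}_{ij}=\mathbf{0}$. This gives $\mathbf{a}_{i}\indep\mathbf{a}_{j}\mid\mathbf{a}_{\backslash\{i,j\}}\Leftrightarrow\mathbf{\Omega}_{ij}=\mathbf{0}$, which combined with the first step proves the theorem.

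I expect the main obstacle to be the first, measure-theoretic step rather than the Gaussian algebra: one must argue carefully that conditioning on the random functions $\mathbf{g}_{\backslash\{i,j\}}$ is the same as conditioning on their coordinate vectors, so that the two conditional independence relations really are identical and not merely analogous. The finite dimensionality of each $\mathbb{H}_{j}$ and the fixed, deterministic nature of the bases $\{\phi_{jm}\}$ are what make this identification clean. An alternative I would keep in reserve, in case the density argument is deemed less transparent, is to route through relation \eqref{rel:2}: expand $\cov[g_{i}(s),g_{j}(t)\mid g_{\backslash\{i,j\}}]=\boldsymbol{\phi}_{i}(s)^{\intercal}\cov[\mathbf{a}_{i},\mathbf{a}_{j}\mid\mathbf{a}_{\backslash\{i,j\}}]\boldsymbol{\phi}_{j}(t)$ with $\boldsymbol{\phi}_{i}(s)=(\phi_{i1}(s),\ldots,\phi_{iM}(s))^{\intercal}$, use orthonormality to conclude that this vanishes for all $s,t$ if and only if the conditional cross-covariance matrix is zero, and identify that matrix as the off-diagonal block of the inverse of the above $2M\times 2M$ submatrix, which is zero if and only if $\mathbf{\Omega}_{ij}=\mathbf{0}$.
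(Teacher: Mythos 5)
The paper never proves Theorem~\ref{theorem:1} itself: it is imported from \cite{qiao2019functional}, so there is no internal proof to compare your attempt against. Your argument is correct and is essentially the standard one behind that cited result: since $\{\phi_{jm}\}_{m=1}^{M}$ is an orthonormal basis of the $M$-dimensional space $\mathbb{H}_{j}$, the coefficient map is a linear isometric bijection, giving $\sigma(g_{j})=\sigma(\mathbf{a}_{j})$ and hence the identity of the two conditional independence statements, after which the claim reduces to the classical multivariate Gaussian fact that $\mathbf{a}_{i}\indep\mathbf{a}_{j}\mid\mathbf{a}_{\backslash\{i,j\}}$ holds if and only if the corresponding off-diagonal block of the precision matrix vanishes (the conditional precision of $(\mathbf{a}_{i},\mathbf{a}_{j})$ being the principal submatrix of $\mathbf{\Omega}$, independent of the conditioning value).
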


The defining relation \eqref{rel:2} of the functional graph is difficult to work with. Theorem \ref{theorem:1} provides an equivalent condition relating it to the coefficients of the Karhunen-Loeve expansion, thus, reducing the functional setting to the multivariate setting, which has been extensively studied. 

Denote the estimator of $\mathbf{\Sigma}$ by $\hat{\mathbf{\Sigma}}$, whose precise definition is given in Section \ref{computation}, where we give detailed analysis of $\hat{\mathbf{\Sigma}}$ at the sample level. Based on Theorem \ref{theorem:1}, \cite{qiao2019functional} introduce the fglasso to estimate $\mathbf{\Omega}$ by
\begin{align*}
	\hat{\mathbf{\Omega}}=\argmin_{\mathbf{\Omega}\succ\mathbf{0}}\left\{\tr(\hat{\mathbf{\Sigma}}\mathbf{\Omega})-\log\det(\mathbf{\Omega})+\lambda\sum_{i\ne j}\|\mathbf{\Omega}_{ij}\|_{F}\right\},
\end{align*}
where $\|\cdot\|_{F}$ is the Frobenius norm. The loss function $\tr(\hat{\mathbf{\Sigma}}\mathbf{\Omega})-\log\det(\mathbf{\Omega})$ is the negative loglikelihood and is responsible for producing a precision matrix that is going to make the data most likely to be observed given the assumptions of the model, and the penalty function is responsible for enforcing sparsity on the estimator. Notice that to encourage blockwise sparsity for $\mathbf{\Omega}$, they use a groupwise penalty induced by the Frobenius norm.

\subsection{Joint Functional Graphical Models}\label{subsection:jfgm}
We now relax the assumption that all of the observed functional data $\mathbf{g}_{i}=(g_{i1},\ldots,g_{ip})^{\intercal}$ are realizations of the same Gaussian process. Instead, we assume that there are $K$ different subpopulations whose graphs, though not identical, share a set of common edges. Suppose, for each $k=1,\ldots,K$, we observe an i.i.d. sample $\mathbf{g}_{1}^{(k)},\ldots,\mathbf{g}_{n}^{(k)}$ of random elements in $\oplus_{j=1}^{p}\mathbb{H}_{j}$, where $\mathbf{g}_{i}^{(k)}=(g_{i1}^{(k)},\ldots,g_{ip}^{(k)})^{\intercal}$. We assume that $\mathbf{g}_{i}^{(k)}$ is a zero mean Gaussian random element. By performing Karhunen-Loeve expansion on each $g_{ij}^{(k)}$ we obtain the vector of coefficients $\mathbf{a}_{ij}^{(k)}=(a_{ij1}^{(k)},\ldots,a_{ijM}^{(k)})^{\intercal}$. Since each $\mathbf{g}_{i}^{(k)}$ is a mean zero Gaussian random element, the vector $\mathbf{a}_{i}^{(k)}=(\mathbf{a}_{i1}^{(k)\,\intercal},\ldots,\mathbf{a}_{ip}^{(k)\,\intercal})^{\intercal}$ is distributed as $\mathcal{N}_{pM}(\mathbf{0},\mathbf{\Sigma}^{(k)})$, for $i=1,\ldots,n,k=1,\ldots,K$. Let $\mathbf{\Omega}^{(k)}=\mathbf{\Sigma}^{(k)-1}$ be the precision matrix corresponding to the $k$-th subpopulation.

To take advantage of the information across the subpopulations we reparameterize each off-diagonal block $\mathbf{\Omega}_{jl}^{(k)}$ of $\mathbf{\Omega}^{(k)}$ as $\theta_{jl}\mathbf{\Gamma}_{jl}^{(k)}$, where $\theta_{jl}\in\mathbb{R}$ is common to all subpopulations and $\mathbf{\Gamma}_{jl}^{(k)}\in\mathbb{R}^{M\times M}$ is specific to subpopulation $k$. For identifiability, we assume $\theta_{jl}\geq0$. To preserve symmetry we require $\theta_{lj}=\theta_{jl}$ and $\mathbf{\Gamma}_{lj}^{(k)\,\intercal}=\mathbf{\Gamma}_{jl}^{(k)}$. For the diagonal-block matrices we require $\theta_{jj}=1$ and $\mathbf{\Gamma}_{jj}^{(k)}=\mathbf{\Omega}_{jj}^{(k)}$. In this reparameterization the common factor $\theta_{jl}$ controls the presence of the common edge between the nodes $j$ and $l$ in all of the graphs, and $\mathbf{\Gamma}_{jl}^{(k)}$ accommodates the differences between individual graphs. This reparameterization is similar to that of \cite{guo2011joint} with the difference that in our case the individual element is a matrix rather than a number. To estimate this model we propose to minimize
\begin{align}\label{obj:1}
	\sum_{k=1}^{K}\left[\tr\left(\hat{\mathbf{\Sigma}}^{(k)}\mathbf{\Omega}^{(k)}\right)-\log\det\left(\mathbf{\Omega}^{(k)}\right)\right]+\lambda_{1}\sum_{j\ne l}\theta_{jl}+\lambda_{2}\sum_{j\ne l}\sum_{k=1}^{K}\|\mathbf{\Gamma}_{jl}^{(k)}\|_{F}  
\end{align}
over all $\theta_{jl}$ and $\mathbf{\Gamma}_{jl}^{(k)}$ specified above. The first penalty function penalizes the common factors $\theta_{jl}$ and is responsible for identifying the zeros across all precision matrices. That is, if $\theta_{jl}$ is zero then there is no edge between the nodes $j$ and $l$ in all $K$ graphs. The second penalty function penalizes the individual factors $\mathbf{\Gamma}_{jl}^{(k)}$ and is responsible for identifying the zeros that are specific to each graph. That is, for a nonzero $\theta_{jl}$ some of the matrices $\mathbf{\Gamma}_{jl}^{(1)},\ldots,\mathbf{\Gamma}_{jl}^{(K)}$ can be zero, which means that the edge connecting the nodes $j$ and $l$ may be absent in some of the $K$ graphs but present in others. 

However, the objective function \eqref{obj:1} is hard to minimize because of its complexity. It includes two groups of variables over which we have to optimize, and two parameters that we have to tune. Additionally, due to the restrictions on the variables that ensure identifiability and positive definiteness, the domain of the objective function is not directly intuitive. As will be shown in Theorem~\ref{theorem:2}, \eqref{obj:1} is equivalent to the much simpler form
\begin{align}\label{obj:3}
	\sum_{k=1}^{K}\left[\tr\left(\hat{\mathbf{\Sigma}}^{(k)}\mathbf{\Omega}^{(k)}\right)-\log\det\left(\mathbf{\Omega}^{(k)}\right)\right]+2(\lambda_{1}\lambda_{2})^{1/2}\sum_{j\ne l}\left(\sum_{k=1}^{K}\|\mathbf{\Omega}_{jl}^{(k)}\|_{F}\right)^{1/2},
\end{align}
where $\mathbf{\Omega}^{(1)},\ldots,\mathbf{\Omega}^{(K)}$ are positive definite matrices.

Let $\mathbf{\Theta}=(\theta_{jl})$, $\mathbf{\Gamma}^{(k)}=(\mathbf{\Gamma}_{jl}^{(k)})$, $\mathbf{\Omega}^{(k)}=(\mathbf{\Omega}_{jl}^{(k)})$ for $\theta_{jl}, \mathbf{\Gamma}_{jl}^{(k)}, \mathbf{\Omega}_{jl}^{(k)}$ defined above, and define the lists $\mathbf{\Gamma}=(\mathbf{\Gamma}^{(1)},\ldots,\mathbf{\Gamma}^{(K)})$, $\mathbf{\Omega}=(\mathbf{\Omega}^{(1)},\ldots,\mathbf{\Omega}^{(K)})$, $\hat{\mathbf{\Gamma}}=(\hat{\mathbf{\Gamma}}^{(1)},\ldots,\hat{\mathbf{\Gamma}}^{(K)})$, $\hat{\mathbf{\Omega}}=(\hat{\mathbf{\Omega}}^{(1)},\ldots,\hat{\mathbf{\Omega}}^{(K)})$. The proof of the following theorem can be found in the appendix.
\medskip
\begin{theorem}\label{theorem:2}
	Let $(\hat{\mathbf{\Theta}},\hat{\mathbf{\Gamma}})$ be a local minimizer of \eqref{obj:1}. Then, there exists a local minimizer $\hat{\mathbf{\Omega}}$ of \eqref{obj:3} such that $\hat{\mathbf{\Omega}}_{jl}^{(k)}=\hat{\theta}_{jl}\hat{\mathbf{\Gamma}}_{jl}^{(k)}$ for all $j,l,k$. Conversely, let $\hat{\mathbf{\Omega}}$ be a local minimizer of \eqref{obj:3}. Then, there exists a local minimizer $(\hat{\mathbf{\Theta}},\hat{\mathbf{\Gamma}})$ of \eqref{obj:1} such that $\hat{\theta}_{jl}\hat{\mathbf{\Gamma}}_{jl}^{(k)}=\hat{\mathbf{\Omega}}_{jl}^{(k)}$ for all $j,l,k$. 
\end{theorem}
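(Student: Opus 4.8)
The plan is to regard \eqref{obj:1} and \eqref{obj:3} as two encodings of the same problem and to transfer local minimality in both directions through an elementary arithmetic--geometric mean (AM--GM) step. Write $F(\mathbf{\Theta},\mathbf{\Gamma})$ for the objective in \eqref{obj:1} and $G(\mathbf{\Omega})$ for the objective in \eqref{obj:3}. First I would record two structural facts: the loss $\sum_{k}[\tr(\hat{\mathbf{\Sigma}}^{(k)}\mathbf{\Omega}^{(k)})-\log\det(\mathbf{\Omega}^{(k)})]$ shared by $F$ and $G$ depends on $(\mathbf{\Theta},\mathbf{\Gamma})$ only through the products $\mathbf{\Omega}_{jl}^{(k)}=\theta_{jl}\mathbf{\Gamma}_{jl}^{(k)}$, and the diagonal blocks are fixed ($\theta_{jj}=1$, $\mathbf{\Gamma}_{jj}^{(k)}=\mathbf{\Omega}_{jj}^{(k)}$) and unpenalized, so only the off-diagonal pairs $j\ne l$ matter. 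The arithmetic core is that, for a fixed pair with fixed products, $\theta_{jl}\ge 0$ gives $\sum_k\|\mathbf{\Omega}_{jl}^{(k)}\|_F=\theta_{jl}\sum_k\|\mathbf{\Gamma}_{jl}^{(k)}\|_F$, whence AM--GM yields $\lambda_1\theta_{jl}+\lambda_2\sum_k\|\mathbf{\Gamma}_{jl}^{(k)}\|_F\ge 2(\lambda_1\lambda_2)^{1/2}\bigl(\sum_k\|\mathbf{\Omega}_{jl}^{(k)}\|_F\bigr)^{1/2}$, with equality precisely when the balancing condition $\lambda_1\theta_{jl}=\lambda_2\sum_k\|\mathbf{\Gamma}_{jl}^{(k)}\|_F$ holds. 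Summing over pairs shows $F\ge G$ along the products, with equality exactly under balancing.

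For the reverse direction, given a local minimizer $\hat{\mathbf{\Omega}}$ of $G$, I would define $(\hat{\mathbf{\Theta}},\hat{\mathbf{\Gamma}})$ by the balancing choice: with $\hat R_{jl}=\sum_k\|\hat{\mathbf{\Omega}}_{jl}^{(k)}\|_F$, set $\hat\theta_{jl}=(\lambda_2\hat R_{jl}/\lambda_1)^{1/2}$ and $\hat{\mathbf{\Gamma}}_{jl}^{(k)}=\hat{\mathbf{\Omega}}_{jl}^{(k)}/\hat\theta_{jl}$ when $\hat R_{jl}>0$, and $\hat\theta_{jl}=0$, $\hat{\mathbf{\Gamma}}_{jl}^{(k)}=\mathbf{0}$ when $\hat R_{jl}=0$. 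This reproduces the products and, by the equality case, gives $F(\hat{\mathbf{\Theta}},\hat{\mathbf{\Gamma}})=G(\hat{\mathbf{\Omega}})$. For any $(\mathbf{\Theta},\mathbf{\Gamma})$ near $(\hat{\mathbf{\Theta}},\hat{\mathbf{\Gamma}})$, let $\mathbf{\Omega}$ collect the products $\mathbf{\Omega}_{jl}^{(k)}=\theta_{jl}\mathbf{\Gamma}_{jl}^{(k)}$; these stay near $\hat{\mathbf{\Omega}}$ because the product map is bilinear, hence continuous, and remain positive definite since feasibility of $F$ forces it. Chaining the AM--GM inequality with local minimality of $\hat{\mathbf{\Omega}}$ then gives $F(\mathbf{\Theta},\mathbf{\Gamma})\ge G(\mathbf{\Omega})\ge G(\hat{\mathbf{\Omega}})=F(\hat{\mathbf{\Theta}},\hat{\mathbf{\Gamma}})$, so $(\hat{\mathbf{\Theta}},\hat{\mathbf{\Gamma}})$ is a local minimizer of $F$.

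The forward direction is where the work lies, and I expect it to be the main obstacle, because the product map is many-to-one and the penalty $(\cdot)^{1/2}$ is nonsmooth at the origin. Starting from a local minimizer $(\hat{\mathbf{\Theta}},\hat{\mathbf{\Gamma}})$ of $F$, I would first prove that balancing holds at it: for a pair with $\hat\theta_{jl}>0$, the in-fiber rescaling $\theta_{jl}\mapsto c\hat\theta_{jl}$, $\mathbf{\Gamma}_{jl}^{(k)}\mapsto\hat{\mathbf{\Gamma}}_{jl}^{(k)}/c$ leaves the products, hence the loss, unchanged while varying the penalty, so unless balancing holds some $c$ near $1$ would strictly decrease $F$, contradicting local minimality; for a pair with $\hat\theta_{jl}=0$, shrinking $\hat{\mathbf{\Gamma}}_{jl}^{(k)}$ toward $\mathbf{0}$ keeps the products zero while strictly decreasing the penalty unless $\hat{\mathbf{\Gamma}}_{jl}^{(k)}=\mathbf{0}$ already, forcing the degenerate blocks to vanish. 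With balancing in hand, $F(\hat{\mathbf{\Theta}},\hat{\mathbf{\Gamma}})=G(\hat{\mathbf{\Omega}})$ for the products $\hat{\mathbf{\Omega}}$. Then I would introduce the optimal-representative map $\Psi$ sending $\mathbf{\Omega}$ to the balancing pair exactly as in the reverse direction, note that $\Psi(\mathbf{\Omega})$ has products $\mathbf{\Omega}$ with $F(\Psi(\mathbf{\Omega}))=G(\mathbf{\Omega})$, and verify that $\Psi$ is continuous at $\hat{\mathbf{\Omega}}$ with $\Psi(\hat{\mathbf{\Omega}})=(\hat{\mathbf{\Theta}},\hat{\mathbf{\Gamma}})$. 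The delicate point is continuity at the vanishing blocks, where $\hat\theta_{jl}=0$ and $\hat{\mathbf{\Gamma}}_{jl}^{(k)}=\mathbf{0}$: there one checks $\theta_{jl}=(\lambda_2 R_{jl}/\lambda_1)^{1/2}\to 0$ and $\|\mathbf{\Gamma}_{jl}^{(k)}\|_F\le (\lambda_1 R_{jl}/\lambda_2)^{1/2}\to 0$ as $\mathbf{\Omega}\to\hat{\mathbf{\Omega}}$, so $\Psi$ is indeed continuous at $\hat{\mathbf{\Omega}}$. Finally, for $\mathbf{\Omega}$ in a small enough neighborhood of $\hat{\mathbf{\Omega}}$, positive definiteness being an open condition, $\Psi(\mathbf{\Omega})$ lies in the neighborhood on which $F\ge F(\hat{\mathbf{\Theta}},\hat{\mathbf{\Gamma}})$, giving $G(\mathbf{\Omega})=F(\Psi(\mathbf{\Omega}))\ge F(\hat{\mathbf{\Theta}},\hat{\mathbf{\Gamma}})=G(\hat{\mathbf{\Omega}})$, so $\hat{\mathbf{\Omega}}$ is a local minimizer of $G$. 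The crux throughout is the treatment of the zero blocks, both in forcing $\hat{\mathbf{\Gamma}}_{jl}^{(k)}=\mathbf{0}$ when $\hat\theta_{jl}=0$ and in establishing continuity of $\Psi$ there, since away from them the smooth AM--GM correspondence makes the transfer routine.
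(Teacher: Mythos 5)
Your proposal is correct and follows essentially the same route as the paper: your balancing condition (established by the same in-fiber rescaling argument for $\hat\theta_{jl}>0$ and the shrink-to-zero argument for $\hat\theta_{jl}=0$) is exactly the paper's Lemma~\ref{lemma:2}, and your optimal-representative map $\Psi$ with its continuity check at the vanishing blocks is exactly the construction in the paper's Lemma~\ref{lemma:4}. The only organizational differences are that the paper first passes through an intermediate objective \eqref{obj:2} with the single parameter $\lambda=\lambda_{1}\lambda_{2}$ (its Lemma~\ref{lemma:1}) while you carry $\lambda_{1},\lambda_{2}$ directly through an explicit AM--GM inequality, and that you write out the easier direction which the paper dismisses as ``proven similarly.''
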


\section{Sample-Level Implementation}\label{computation}

The goal of this section is to develop a procedure to calculate the minimizer of \eqref{obj:3}. We begin by providing a formula for the computation of the sample covariance matrix $\hat{\mathbf{\Sigma}}$ mentioned at the end of subsection \ref{functional:graphical:models}. Furthermore, because the objective function \eqref{obj:3} is nonconvex, we instead optimize an approximate version of it. Finally, to calculate the minimizer of the approximate objective function we employ the ADMM algorithm because it provides closed form solutions for the updates, makes good use of R's vectorized operations, and has shown to outperform other popular algorithms on similar problems \citep{scheinberg2010sparse}.

\subsection{Sample Covariance Matrix Estimation}
Since we will be concerned with a fixed $k$ in this subsection, we drop the superscript $(k)$ for simplicity of notation. First, we need to estimate the eigenpairs $(\lambda_{jm},\phi_{jm})$. To do that we follow the procedure described in \cite{ramsay2001functional}. Assume that we observe each curve $g_{ij}$ at equally-spaced time points $t_{1}<\ldots<t_{\nu}$, where $t_{1}$ and $t_{\nu}$ are the endpoints of the interval $T$. Without loss of generality, we assume that the data are centered. That is $\sum_{i=1}^{n}g_{ij}(t_{q})=0$, for all $j,q$. 

As we will show, the procedure for estimating the eigenpairs of the covariance operator is very similar to the multivariate case. We start by providing a matrix approximation of the covariance operator at the population level. Then, we show how to perform eigenvalue decomposition on the matrix approximation. Finally, by replacing the population-level quantities with sample-level quantities in the approximation of the covariance operator, we obtain the estimators of the eigenpairs.

Define
\begin{align*}
	\mathbf{K}_{j}=\left(\begin{array}{ccc}
		K_{j}(t_{1},t_{1}) &\ldots &K_{j}(t_{1},t_{\nu})\\
		\vdots&\ddots&\vdots\\
		K_{j}(t_{\nu},t_{1})&\ldots&K_{j}(t_{\nu},t_{\nu})
	\end{array}\right).
\end{align*}
Let also $\boldsymbol{\phi}_{jm}=(\phi_{jm}(t_{1}),\ldots,\phi_{jm}(t_{\nu}))^{\intercal}$, and $w=(t_{\nu}-t_{1})/(\nu-1)$ be the gap between two adjacent time points. Then, for large $\nu$,
\begin{align*}
	\int_{T}K_{j}(t_{r},t)\phi_{jm}(t)dt\approx w\sum_{q=1}^{\nu}K_{j}(t_{r},t_{q})\phi_{jm}(t_{q}).
\end{align*}
Therefore, the integral equation
\begin{align*}
	\int_{T}K_{j}(s,t)\phi_{jm}(t)dt=\lambda_{jm}\phi_{jm}(s)
\end{align*}
can be approximated by
\begin{align*}
	w\mathbf{K}_{j}\boldsymbol{\phi}_{jm}=\lambda_{jm}\boldsymbol{\phi}_{jm}.
\end{align*}
Then $(\lambda_{jm},\boldsymbol{\phi}_{jm})=(w^{-1}\mathbf{u}_{m}^{\intercal}\mathbf{K}_{j}\mathbf{u}_{m},\mathbf{u}_{m})$, where $\mathbf{u}_{m}$ is the solution to the eigenvalue problem:
\begin{equation}\label{eigen-prob}
	\begin{split}
		&\text{maximize}\quad \mathbf{u}_{m}^{\intercal}\mathbf{K}_{j}\mathbf{u}_{m}\\
		&\text{subject to}\quad \|\mathbf{u}_{m}\|_{2}=1\quad\text{and}\quad \mathbf{u}_{l}^{\intercal}\mathbf{u}_{m}=0\text{ for }l<m.
	\end{split}
\end{equation}
Let $\mathbf{a}_{i}=(\mathbf{a}_{i1}^{\intercal},\ldots,\mathbf{a}_{ip}^{\intercal})^{\intercal}$ denote the sample principal component score vector, where $\mathbf{a}_{ij}=(a_{ij1},\ldots,a_{ijM})^{\intercal}$. Similarly, for large $\nu$,
\begin{align*}
	a_{ijm}=\int_{T}g_{ij}(t)\phi_{jm}(t)dt\approx w\sum_{q=1}^{\nu}g_{ij}(t_{q})\phi_{jm}(t_{q}).
\end{align*}

Let $\hat{K}_{j}(s,t)=n^{-1}\sum_{i=1}^{n}g_{ij}(s)g_{ij}(t)$ be the estimator of the covariance function $K_{j}(s,t)$, and let $\hat{\mathbf{K}}_{j}=(\hat{K}_{j}(t_{r},t_{q}))$. Define
\begin{align*}
	\mathbf{G}_{j}=\left(\begin{array}{ccc}
		g_{1j}(t_{1})&\ldots&g_{1j}(t_{\nu})\\
		\vdots&\ddots&\vdots\\
		g_{nj}(t_{1})&\ldots&g_{nj}(t_{\nu})
	\end{array}\right).
\end{align*}
Then $\hat{\mathbf{K}}_{j}=n^{-1}\mathbf{G}_{j}^{\intercal}\mathbf{G}_{j}$. The estimator $(\hat{\lambda}_{jm},\hat{\boldsymbol{\phi}}_{jm})$ of the eigenpair $(\lambda_{jm},\boldsymbol{\phi}_{jm})$ is the solution to \eqref{eigen-prob} with $\mathbf{K}_{j}=\hat{\mathbf{K}}_{j}$. Finally, the estimated principal component scores are given by
\begin{align*}
	\hat{a}_{ijm}=\int_{T}g_{ij}(t)\hat{\phi}_{jm}(t)dt\approx w\sum_{q=1}^{\nu}g_{ij}(t_{q})\hat{\phi}_{jm}(t_{q})
\end{align*}
with which we calculate the sample covariance matrix $\hat{\mathbf{\Sigma}}=n^{-1}\sum_{i=1}^{n}\hat{\mathbf{a}}_{i}\hat{\mathbf{a}}_{i}^{\intercal}$.

\subsection{Penalty linearization}
We now resume the use of the superscript $(k)$ as we are concerned with calculations across the $K$ subpopulations in this subsection. Writing $2(\lambda_{1}\lambda_{2})^{1/2}$ as $\lambda$, the objective function \eqref{obj:3} becomes
\begin{align}
	\sum_{k=1}^{K}\left[\tr\left(\hat{\mathbf{\Sigma}}^{(k)}\mathbf{\Omega}^{(k)}\right)-\log\det\left(\mathbf{\Omega}^{(k)}\right)\right]+\lambda\sum_{j\ne l}\left(\sum_{k=1}^{K}\|\mathbf{\Omega}_{jl}^{(k)}\|_{F}\right)^{1/2}. \label{obj:4}
\end{align}

Notice that because of the square root in the penalty function, \eqref{obj:4} is not convex. To tackle this issue we use the Local Linear Approximation (LLA) method developed in \cite{zou2008one}. Suppose that we are given an initial value $\hat{\mathbf{\Omega}}_{(0)}=(\hat{\mathbf{\Omega}}_{(0)}^{(1)},\ldots,\hat{\mathbf{\Omega}}_{(0)}^{(K)})$ that is close to the true value. They propose locally approximating the penalty function by a linear function
\begin{footnotesize}
	\begin{align*}
		\sum_{j\ne l}\left(\sum_{k=1}^{K}\|\mathbf{\Omega}_{jl}^{(k)}\|_{F}\right)^{1/2}&\approx \sum_{j\ne l}\left(\sum_{k=1}^{K}\|\hat{\mathbf{\Omega}}_{(0),jl}^{(k)}\|_{F}\right)^{1/2}+\sum_{j\ne l}\tau_{(0),jl}\left(\sum_{k=1}^{K}\|\mathbf{\Omega}_{jl}^{(k)}\|_{F}-\sum_{k=1}^{K}\|\hat{\mathbf{\Omega}}_{(0),jl}^{(k)}\|_{F}\right)
	\end{align*}
\end{footnotesize}
where $\tau_{(0),jl}=2^{-1}\left(\sum_{k=1}^{K}\|\hat{\mathbf{\Omega}}_{(0),jl}^{(k)}\|_{F}\right)^{-1/2}$. Thus, at the $t$-th iteration, problem \eqref{obj:4} is decomposed into $K$ individual optimization problems
\begin{align}
	\hat{\mathbf{\Omega}}_{(t)}^{(k)}=\argmin_{\mathbf{\Omega}\succ\mathbf{0}}\left\{\tr\left(\hat{\mathbf{\Sigma}}^{(k)}\mathbf{\Omega}\right)-\log\det(\mathbf{\Omega})+\lambda\sum_{j\ne l}\tau_{(t-1),jl}\|\mathbf{\Omega}_{jl}\|_{F}\right\},  \label{obj:5}
\end{align}
where $k=1,\ldots,K$. It is shown in the same paper, that if the initial value at iteration $t=0$ is reasonably good, then with only one iteration $(t=1)$ we can get a good sparse estimate. In our paper, for the initial value we use the precision matrices estimated separately using the functional graphical lasso of \cite{qiao2019functional} for each subpopulation.

\subsection{ADMM algorithm for optimization}\label{ADMM}
To solve \eqref{obj:5} we employ the ADMM algorithm as described in \cite{boyd2011distributed}. The primal problem is given by
\begin{align*}
	&\text{minimize}\quad\tr(\mathbf{S\Omega})-\log\det(\mathbf{\Omega})+\lambda\sum_{j\ne l}\tau_{jl}\|\mathbf{\Omega}_{jl}\|_{F}\\
	&\text{subject to}\quad\mathbf{\Omega}\succ\mathbf{0},
\end{align*}
while the dual problem is
\begin{equation}
	\begin{split}\label{dual}
		&\text{minimize}\quad\tr(\mathbf{S\Omega})-\log\det(\mathbf{\Omega})+\lambda\sum_{j\ne l}\tau_{jl}\|\mathbf{Z}_{jl}\|_{F}\\
		&\text{subject to }
		\mathbf{\Omega}\succ\mathbf{0},\, \mathbf{Z}\text{ is symmetric},\,\mathbf{\Omega}-\mathbf{Z}=\mathbf{0}.
	\end{split}
\end{equation}
Under certain conditions, the solutions of the primal and dual problems coincide. 

Define the augmented Lagrangian function
\begin{align*}
	L_{b}(\mathbf{\Omega},\mathbf{Z},V)=\tr(\mathbf{S\Omega})-\log\det(\mathbf{\Omega})+\lambda\sum_{j\ne l}\tau_{jl}\mathbf{Z}_{jl}+\tr[\mathbf{V}^{\intercal}(\mathbf{\Omega}-\mathbf{Z})]+\frac{b}{2}\|\mathbf{\Omega}-\mathbf{Z}\|_{F}^{2}.
\end{align*}
The ADMM algorithm provides iterative formulas that approach the solution of the dual problem. The iteration formulas for \eqref{dual} are given by
\begin{align*}
	\mathbf{\Omega}^{(t+1)}&=\argmin_{\mathbf{\Omega}}\frac{\partial L_{b}(\mathbf{\Omega},\mathbf{Z}^{(t)},\mathbf{V}^{(t)})}{\partial\mathbf{\Omega}}=\mathbf{Y}\left\{\frac{1}{2}\left[\mathbf{\Lambda}+\left(\mathbf{\Lambda}^{2}+\frac{4}{b}\mathbf{I}_{pM}\right)^{1/2}\right]\right\}\mathbf{Y}^{\intercal}\\
	\mathbf{Z}_{jl}^{(t+1)}&=\argmin_{\mathbf{Z}}\frac{\partial L_{b}(\mathbf{\Omega}^{(t+1)},\mathbf{Z},\mathbf{V}^{(t)})}{\partial\mathbf{Z}_{jl}}\\
	&=
	\begin{cases}
		\mathbf{\Omega}_{jj}^{(t+1)}+\frac{1}{b}\mathbf{V}_{jj}^{(t)},&\quad\text{if }j=l\\
		\max\left(\mathbf{0},1-\frac{\lambda\tau_{jl}}{b\|\mathbf{\Omega}_{jl}^{(t+1)}+\frac{1}{b}\mathbf{V}_{jl}^{(t)}\|_{F}}\right)(\mathbf{\Omega}_{jl}^{(t+1)}+\frac{1}{b}\mathbf{V}_{jl}^{(t)}),&\quad\text{if }j\ne l
	\end{cases}\\
	\mathbf{V}^{(t+1)}&=\mathbf{V}^{(t)}+b(\mathbf{\Omega}^{(t+1)}-\mathbf{Z}^{(t+1)}),
\end{align*}
where $\mathbf{Y}$ is a matrix whose columns are the eigenvectors, $\mathbf{\Lambda}$ is a diagonal matrix of the eigenvalues, obtained by performing eigenvalue decomposition on 
\begin{align*}
	\mathbf{Z}^{(t)}-\frac{1}{b}(\mathbf{S}+\mathbf{V}^{(t)}),
\end{align*}
and $b$ is a positive constant that affects the convergence speed and accuracy of the algorithm. The initial values for $\mathbf{Z}$ and $\mathbf{V}$ are $\mathbf{1}_{pM}\mathbf{1}_{pM}^{\intercal}$ and $\mathbf
{0}_{pM}$, respectively. From the formulas of the updates, we can see that the computational complexity for the ADMM algorithm at each iteration is that of an eigenvalue decomposition of a matrix in $\mathbb{R}^{pM\times pM}$, which is $\mathcal{O}(p^{3}M^{3})$.

When it comes to estimating the graph, we do not use a threshold to determine the zero components of $\mathbf{\Omega}$ produced by the ADMM. Instead, we use the zero entries of the dual variable $\mathbf{Z}$ produced by the ADMM, for two reasons. First, as it can be seen by the iteration formula, $\mathbf{Z}$ is a thresholding rule. Second, the ADMM theory states that as the number of iterations increase, the updates of $\mathbf{\Omega}$ and $\mathbf{Z}$ converge to the same point.

\section{Asymptotics}\label{section:asymptotics}
In this section we prove the asymptotic consistency of the one-step version of \eqref{obj:5}, that is, the estimator produced by the first step of the LLA algorithm uncovers the true graph with probability tending to 1. 

The asymptotic consistency of the fglasso estimator was established in \cite{qiao2019functional}. The main difference of our setting is twofold. First, we do not make a distinction between the truncated and true process, as we assume that the dimension of the Hilbert space is known for a given sample size $n$. Second, we have to take into account the weights $\tau_{jl}$ that accommodate the common structure, which adds an extra layer of complexity.

We denote the true precision matrix by $\mathbf{\Omega}_{0}^{(k)}=(\mathbf{\Sigma}_{0}^{(k)})^{-1}$, the number of principal components by $M_{n}$, the number of stochastic processes by $p_{n}$, and define the degree of the graph
\begin{align*}
	d_{n}^{(k)}=\max_{j=1,\ldots,p_{n}}\card\left(\left\{l:l\ne j,\mathbf{\Omega}_{0,jl}^{(k)}\ne \mathbf{0}\right\}\right),
\end{align*}
where $\card$ denotes the cardinality of the set. In our framework we assume that all three quantities $M_{n},p_{n},d_{n}^{(k)}$ diverge to infinity.

Let $\mathbf{A}$ be the block-matrix $(\mathbf{A}_{jl})$, with $\mathbf{A}_{jl}\in\mathbb{R}^{M\times M}$. Define the $M$-block versions of the $\ell_{\infty}$-matrix norm, the $\ell_{\infty}$-vector norm, and the $\ell_{1}$-matrix norm to be
\begin{align*}
	\|\mathbf{A}\|_{\infty}^{(M)}&=\max_{j=1,\ldots,p}\sum_{l=1}^{p}\|\mathbf{A}_{jl}\|_{F},\\
	\|\mathbf{A}\|_{\max}^{(M)}&=\max_{1\leq j,l\leq p}\|\mathbf{A}_{jl}\|_{F},\\
	\|\mathbf{A}\|_{1}^{(M)}&=\max_{l=1,\ldots,p}\sum_{j=1}^{p}\|\mathbf{A}_{jl}\|_{F},
\end{align*}
respectively. Similar block versions of these norms are also going to be used for various block matrices and vectors, and their definition will be implied. For two sequence of real numbers $a_{n},b_{n}$ we denote $a_{n}\asymp b_{n}$ if there exist positive constants $c_{1},c_{2}$ such that $c_{1}\leq|a_{n}|/|b_{n}|\leq c_{2}$ for all $n$.

Let $\mathcal{B}=\{(i_{1},j_{11}),\ldots,(i_{1},j_{1r_{1}}),\ldots,(i_{q},j_{q1}),\ldots,(i_{1},j_{qr_{q}})\}$, such that $i_{1}<\ldots<i_{q}$, and $j_{sm}<j_{sl}$ for all $s$ and $m<l$. We define
\begin{align*}
	\mathbf{A}_{\mathcal{B}}=\left(\vect(\mathbf{A}_{i_{1}j_{11}})^{\intercal},\ldots,\vect(\mathbf{A}_{i_{1}j_{1r_{1}}})^{\intercal},\ldots,\vect(\mathbf{A}_{i_{q}j_{q1}})^{\intercal},\ldots,\vect(\mathbf{A}_{i_{q}j_{qr_{q}}})^{\intercal}\right)^{\intercal}.
\end{align*}
Let $\mathbf{\Gamma}^{(k)}=(\mathbf{\Omega}_{0}^{(k)})^{-1}\otimes(\mathbf{\Omega}_{0}^{(k)})^{-1}$. We use $\mathbf{\Gamma}_{\mathcal{B}\mathcal{C}}^{(k)}\in\mathbb{R}^{M^{2}|\mathcal{B}|\times M^{2}|\mathcal{C}|}$ to denote the submatrix of $\mathbf{\Gamma}^{(k)}$ with blocks $\mathbf{\Gamma}_{(i,j),(m,l)}^{(k)}\in\mathbb{R}^{M^{2}\times M^{2}}$, where $(i,j)\in\mathcal{B}$ and $(m,l)\in\mathcal{C}$. To construct the matrix $\mathbf{\Gamma}_{\mathcal{B}\mathcal{C}}^{(k)}$ we first fix the coordinates $(i,j)$ to locate the block $\cov(\mathbf{a}_{i}^{(k)},\mathbf{a}_{j}^{(k)})\otimes\mathbf{\Sigma}_{0}^{(k)}$ and then we fix the coordinates $(m,l)$ to locate the block $\cov(\mathbf{a}_{i}^{(k)},\mathbf{a}_{j}^{(k)})\otimes\cov(\mathbf{a}_{m}^{(k)},\mathbf{a}_{l}^{(k)})$. For a set $D$, we denote by $D^{c}$ its complement. Let $\mathcal{S}^{(k)}=\mathcal{E}^{(k)}\cup\{(1,1),\ldots,(p,p)\}$, where $\mathcal{E}^{(k)}=\{(j,l)\,:\,\mathbf{\Omega}_{0,jl}^{(k)}\ne \mathbf{0}\}$. Define the quantities
\begin{align}
	C_{\mathbf{\Sigma}}^{(k)}=\|(\mathbf{\Omega}_{0}^{(k)})^{-1}\|_{\infty}^{(M)},\quad  C_{\mathbf{\Gamma}}^{(k)}=\|(\mathbf{\Gamma}_{\mathcal{S}\mathcal{S}}^{(k)})^{-1}\|_{\infty}^{(M)},\quad C_{\mathbf{\Gamma}^{2}}^{(k)}=\|(\mathbf{\Gamma}_{\mathcal{S}^{c}\mathcal{S}}^{(k)}\mathbf{\Gamma}_{\mathcal{S}\mathcal{S}}^{(k)})^{-1}\|_{\infty}^{(M^{2})}.
\end{align}

We first need to find conditions to establish concentration bounds for all entries of $\hat{\mathbf{\Sigma}}^{(k)}-\mathbf{\Sigma}_{0}^{(k)}$. To do so we adopt the same conditions as in \cite{qiao2019functional}.

\medskip
\begin{condition}\label{condition:concetration}
	$(i)$ The number of principal components, $M_{n}$, satisfies $M_{n}\asymp n^{\alpha}$ with some constant $\alpha\geq 0$; $(ii)$ For each $j\in\mathcal{V}$, the eigenvalue sequence $\{\lambda_{jm}^{(k)}\}_{m=1}^{M_{n}}$ is decreasing; $(iii)$ There exists some constant $\beta >1$ with $\alpha\beta<1/4$ such that $\lambda_{jm}^{(k)}\asymp m^{-\beta}$ and $d_{jm}^{(k)}\lambda_{jm}^{(k)}=\mathcal{O}(m)$ for each $m=1,\ldots,M_{n}$ and $j\in \mathcal{V}$, where $d_{jm}^{(k)}=2\sqrt{2}\{(\lambda_{j(m-1)}^{(k)}-\lambda_{jm}^{(k)})^{-1},(\lambda_{jm}^{(k)}-\lambda_{j(m+1)}^{(k)})^{-1}\}$.
\end{condition}

Parameter $\alpha$ controls the dimension of the Hilbert spaces, while parameter $\beta$ determines how fast the eigenvalues $\lambda_{jm}^{(k)}$ and eigengaps $\lambda_{jm}^{(k)}-\lambda_{j(m-1)}^{(k)}$ converge to zero. We also need a condition for the weights $\tau_{jl}$ in order to establish the optimality of our estimator.
\medskip
\begin{condition}\label{condition:weights}
	For any $\gamma>2$, there exist positive numbers $a_{1}^{(k)},a_{2}^{(k)}$ such that
	$a_{1}^{(k)}>a_{2}^{(k)} C_{\mathbf{\Gamma}^{2}}^{(k)}$ and
	\begin{align*}
		\min_{(j,l)\in
			\mathcal{S}^{(k)c}}\tau_{jl}> a_{1}^{(k)},\quad\max_{(j,l)\in\mathcal{S}^{(k)}}\tau_{jl}< a_{2}^{(k)},
	\end{align*}
	with probability greater than $1-(M_{n}p_{n})^{2-\gamma}$ each.
\end{condition}
To gain intuition about this condition, let us assume that the initial estimator $\hat{\mathbf{\Omega}}_{(0)}$ is not far from the truth. In the best case scenario, where the edge sets $\mathcal{S}^{(k)}$ of all the subpopulations are identical and equal to $\mathcal{A}$, Condition \ref{condition:weights} can be seen as a minimum and maximum signal strength for the existent and absent edges respectively.
\begin{align*}
	&(j,l)\notin\mathcal{A}\quad\Rightarrow\quad\sum_{k=1}^{K}\|\hat{\mathbf{\Omega}}_{(0),jl}^{(k)}\|_{F}\rightarrow 0\quad\Rightarrow\quad\tau_{jl}\rightarrow\infty,\\
	&(j,l)\in\mathcal{A}\quad\Rightarrow\quad\sum_{k=1}^{K}\|\hat{\mathbf{\Omega}}_{(0),jl}^{(k)}\|_{F}\nrightarrow 0\quad\Rightarrow\quad\tau_{jl}\nrightarrow \infty.
\end{align*}		
The richer the diversity of the graphs, the harder it is for this condition to hold.

We are now ready to prove graph selection consistency for each subpopulation $k=1,\ldots,K$.

\medskip
\begin{theorem}\label{theorem:3}
	Suppose Conditions \ref{condition:concetration} and \ref{condition:weights} hold, $\gamma>2$,
	\begin{align*}
		\lambda_{n}=\frac{2(1+ C_{\mathbf{\Gamma}^{2}}^{(k)})M_{n}}{a_{1}-a_{2} C_{\mathbf{\Gamma}^{2}}^{(k)}}\sqrt{\frac{\log C_{2}+\gamma\log(M_{n}p_{n})}{C_{1}n^{1-2\alpha(1+\beta)}}},
	\end{align*}   
	\begin{align*}
		\min_{(j,l)\in\mathcal{S}^{(k)}}\|\mathbf{\Omega}_{0,jl}^{(k)}\|_{F}>\min\left\{\frac{1}{3C_{\mathbf{\Sigma}}^{(k)}d_{n}^{(k)}},\frac{1}{3(C_{\mathbf{\Sigma}}^{(k)})^{3} C_{\mathbf{\Gamma}}^{(k)}d_{n}^{(k)}}\right\}.
	\end{align*}
	Then, for all $n$ satisfying the lower bound
	\begin{scriptsize}
		\begin{align*}
			n^{1-2\alpha(1+\beta)}>\frac{\log[C_{2}(M_{n}p_{n})^{\gamma}]}{C_{1}}\max\left\{\frac{1}{C_{1}},\frac{2M_{n} C_{\mathbf{\Gamma}}^{(k)}\left[1+\frac{2a_{2}(1+ C_{\mathbf{\Gamma}^{2}}^{(k)})}{a_{1}-a_{2} C_{\mathbf{\Gamma}^{2}}^{(k)}}\right]}{\min\left\{\frac{1}{3C_{\mathbf{\Sigma}}^{(k)}d_{n}^{(k)}},\frac{1}{3(C_{\mathbf{\Sigma}}^{(k)})^{3} C_{\mathbf{\Gamma}}^{(k)}d_{n}^{(k)}}\right\}},6M_{n}d_{n}^{(k)} (C_{\mathbf{\Gamma}}^{(k)})^{2} C_{\mathbf{\Gamma}^{2}}^{(k)}\left[1+\frac{2a_{2}(1+ C_{\mathbf{\Gamma}^{2}}^{(k)})}{a_{1}-a_{2} C_{\mathbf{\Gamma}^{2}}^{(k)}}\right]^{2}\right\}^{2},
		\end{align*}
	\end{scriptsize}
	with probability greater than $1-3(M_{n}p_{n})^{2-\gamma}$ we have:
	\begin{enumerate}		
		\item $\|\hat{\mathbf{\Omega}}^{(k)}-\mathbf{\Omega}_{0}^{(k)}\|_{\text{max}}^{(M)}\leq \min\left\{\frac{1}{3C_{\mathbf{\Sigma}}^{(k)}d_{n}},\frac{1}{3(C_{\mathbf{\Sigma}}^{(k)})^{3} C_{\mathbf{\Gamma}}^{(k)}d_{n}}\right\}$,
		
		\item $\hat{\mathcal{S}}^{(k)}=\mathcal{S}^{(k)}$.
	\end{enumerate}
\end{theorem}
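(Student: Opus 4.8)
The plan is to prove Theorem~\ref{theorem:3} by a primal–dual witness (PDW) construction, adapting the functional-glasso analysis of \cite{qiao2019functional} to accommodate the data-dependent weights $\tau_{jl}$. Fix $k$ and suppress the superscript when convenient. The stationarity condition for the one-step problem \eqref{obj:5} is
\[
\hat{\mathbf{\Sigma}}^{(k)} - (\hat{\mathbf{\Omega}}^{(k)})^{-1} + \lambda_n\mathbf{W} = \mathbf{0},
\]
where $\mathbf{W}$ is a block matrix with $\mathbf{W}_{jj}=\mathbf{0}$, with on-support off-diagonal blocks $\mathbf{W}_{jl}=\tau_{jl}\hat{\mathbf{\Omega}}_{jl}^{(k)}/\|\hat{\mathbf{\Omega}}_{jl}^{(k)}\|_F$ (so $\|\mathbf{W}_{jl}\|_F=\tau_{jl}$), and with off-support blocks constrained only by $\|\mathbf{W}_{jl}\|_F\le\tau_{jl}$. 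The entire argument consists of constructing a candidate solution that vanishes off $\mathcal{S}^{(k)}$, bounding its error on $\mathcal{S}^{(k)}$, and then certifying that it is in fact the unconstrained minimizer by checking strict dual feasibility off $\mathcal{S}^{(k)}$.

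First I would establish the sampling-error concentration bound. Under Condition~\ref{condition:concetration}, by the eigenpair-perturbation and Gaussian-concentration arguments of \cite{qiao2019functional} applied to $\hat{\mathbf{\Sigma}}^{(k)}=n^{-1}\sum_i\hat{\mathbf{a}}_i\hat{\mathbf{a}}_i^{\intercal}$, I would show that on an event of probability at least $1-(M_np_n)^{2-\gamma}$,
\[
\|\hat{\mathbf{\Sigma}}^{(k)}-\mathbf{\Sigma}_0^{(k)}\|_{\max}^{(M)} \le M_n\sqrt{\frac{\log C_2+\gamma\log(M_np_n)}{C_1\,n^{1-2\alpha(1+\beta)}}} =: \delta_n,
\]
the factor $M_n$ entering because each block Frobenius norm aggregates $M_n^2$ entrywise errors, with the rate $n^{1-2\alpha(1+\beta)}$ coming from $M_n\asymp n^\alpha$ and $\lambda_{jm}^{(k)}\asymp m^{-\beta}$. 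Intersecting this event with the two weight events of Condition~\ref{condition:weights} yields the overall probability $1-3(M_np_n)^{2-\gamma}$ in the statement.

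Next comes the witness construction. I would define the oracle estimator $\tilde{\mathbf{\Omega}}^{(k)}$ as the minimizer of \eqref{obj:5} subject to the extra constraint $\mathbf{\Omega}_{jl}=\mathbf{0}$ for $(j,l)\in\mathcal{S}^{(k)c}$. Writing $\mathbf{\Delta}=\tilde{\mathbf{\Omega}}^{(k)}-\mathbf{\Omega}_0^{(k)}$ and expanding the inverse as $(\mathbf{\Omega}_0+\mathbf{\Delta})^{-1}=\mathbf{\Sigma}_0-\mathbf{\Sigma}_0\mathbf{\Delta}\mathbf{\Sigma}_0+\mathbf{R}(\mathbf{\Delta})$, so that $\vect(\mathbf{\Sigma}_0\mathbf{\Delta}\mathbf{\Sigma}_0)=\mathbf{\Gamma}^{(k)}\vect(\mathbf{\Delta})$, the restriction of the stationarity system to $\mathcal{S}^{(k)}$ reads
\[
\mathbf{\Gamma}_{\mathcal{S}\mathcal{S}}^{(k)}\,\vect(\mathbf{\Delta}_{\mathcal{S}}) = -\vect\!\big((\hat{\mathbf{\Sigma}}^{(k)}-\mathbf{\Sigma}_0^{(k)})_{\mathcal{S}}\big) - \lambda_n\vect(\mathbf{W}_{\mathcal{S}}) + \vect\!\big(\mathbf{R}(\mathbf{\Delta})_{\mathcal{S}}\big).
\]
Reading this as a fixed-point map in $\mathbf{\Delta}_{\mathcal{S}}$ and invoking a Brouwer fixed-point argument, I would show, using $\|(\mathbf{\Gamma}_{\mathcal{S}\mathcal{S}}^{(k)})^{-1}\|_\infty^{(M)}=C_{\mathbf{\Gamma}}^{(k)}$, the bound $\max_{\mathcal{S}}\tau_{jl}<a_2$, and a quadratic-in-$\mathbf{\Delta}$ control of the remainder $\mathbf{R}$ via $C_{\mathbf{\Sigma}}^{(k)}$, that $\mathbf{\Delta}_{\mathcal{S}}$ is confined to the $\|\cdot\|_{\max}^{(M)}$-ball of radius $r=\min\{(3C_{\mathbf{\Sigma}}^{(k)}d_n^{(k)})^{-1},(3(C_{\mathbf{\Sigma}}^{(k)})^{3}C_{\mathbf{\Gamma}}^{(k)}d_n^{(k)})^{-1}\}$. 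The two competing terms in the minimum correspond to the regime in which the leading linear term dominates versus the regime in which the Taylor remainder must be absorbed, and the stated sample-size lower bound is exactly what forces $\lambda_n\asymp\delta_n$ small enough for the map to be a self-map of this ball; this yields part~1.

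Finally, for part~2 I would verify strict dual feasibility. Solving the off-support blocks of the stationarity system for $\mathbf{W}_{\mathcal{S}^c}$, substituting the on-support solution $\vect(\mathbf{\Delta}_{\mathcal{S}})=(\mathbf{\Gamma}_{\mathcal{S}\mathcal{S}}^{(k)})^{-1}(\cdots)$, and dividing by $\lambda_n$ produces, through the incoherence constant $C_{\mathbf{\Gamma}^2}^{(k)}$ governing $\mathbf{\Gamma}_{\mathcal{S}^c\mathcal{S}}^{(k)}(\mathbf{\Gamma}_{\mathcal{S}\mathcal{S}}^{(k)})^{-1}$, a bound of the schematic form $\|\mathbf{W}_{jl}\|_F\le C_{\mathbf{\Gamma}^2}^{(k)}(a_2+\delta_n/\lambda_n)+(\text{remainder})$ for $(j,l)\in\mathcal{S}^{(k)c}$. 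Using $\min_{\mathcal{S}^c}\tau_{jl}>a_1$ together with the structural gap $a_1>a_2C_{\mathbf{\Gamma}^2}^{(k)}$ and the calibrated $\lambda_n$, I would conclude $\|\mathbf{W}_{jl}\|_F<\tau_{jl}$ strictly, forcing every off-support block of $\hat{\mathbf{\Omega}}^{(k)}$ to be exactly zero; combined with the minimum-signal condition $\min_{(j,l)\in\mathcal{S}}\|\mathbf{\Omega}_{0,jl}^{(k)}\|_F>r$, which via part~1 prevents any genuinely nonzero block from being shrunk to zero, this delivers $\hat{\mathcal{S}}^{(k)}=\mathcal{S}^{(k)}$. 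I expect the dual-feasibility step to be the main obstacle: unlike the unweighted estimator of \cite{qiao2019functional}, the irrepresentability margin here must absorb the ratio of weights across the support boundary, and the weights are themselves random through their dependence on the initial fglasso estimator $\hat{\mathbf{\Omega}}_{(0)}^{(k)}$, so reconciling the deterministic constant $C_{\mathbf{\Gamma}^2}^{(k)}$ with the probabilistic control of Condition~\ref{condition:weights} while keeping $\mathbf{R}(\mathbf{\Delta})$ negligible is precisely where the margin $a_1-a_2C_{\mathbf{\Gamma}^2}^{(k)}$ appearing in the denominator of $\lambda_n$ is consumed.
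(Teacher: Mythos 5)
Your proposal is correct and follows essentially the same route as the paper's own proof: a primal--dual witness construction with a restricted oracle problem on $\mathcal{S}^{(k)}$, a Brouwer fixed-point argument to confine $\mathbf{\Delta}_{\mathcal{S}}$ to the ball of radius $\min\bigl\{(3C_{\mathbf{\Sigma}}^{(k)}d_n^{(k)})^{-1},(3(C_{\mathbf{\Sigma}}^{(k)})^{3}C_{\mathbf{\Gamma}}^{(k)}d_n^{(k)})^{-1}\bigr\}$ with quadratic control of the Taylor remainder, strict dual feasibility off the support driven by the incoherence constant $C_{\mathbf{\Gamma}^2}^{(k)}$ and the margin $a_1-a_2C_{\mathbf{\Gamma}^2}^{(k)}$, and the concentration bound of \cite{qiao2019functional} combined with the two weight events of Condition~\ref{condition:weights} to obtain the probability $1-3(M_np_n)^{2-\gamma}$. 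The only differences are notational (your $\mathbf{W}$ is the paper's weighted subgradient $(\mathbf{T}\otimes\mathbf{1}_M\mathbf{1}_M^{\intercal})\circ\tilde{\mathbf{Z}}$, whereas the paper reserves $\mathbf{W}$ for the effective noise $\hat{\mathbf{\Sigma}}-\mathbf{\Omega}_0^{-1}$) and the level of detail, not the substance.
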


\section{Simulations}\label{simulations}
In this section we use simulation to compare the performance of the JFGGM with the FGGM applied separately to each subpopulation. To generate the data, we first construct the edge sets for all subpopulations, and then form the precision matrices described in section 2. 

To form the edge sets $\mathcal{E}^{(1)},\ldots,\mathcal{E}^{(K)}$, we follow two steps:
\begin{enumerate}
	\item Randomly choose a set of pairs $(j,l)\in\mathcal{V}\times\mathcal{V}\,,\,j<l$,
	as a percentage $s$ of the total number of edges $\binom{p}{2}$. This set constitutes the common graphical structure of the $K$ subpopulations and is denoted by $\mathcal{A}$.
	
	\item For each subpopulation $k$, randomly choose a set of pairs as a percentage $\rho$ of the number of common edges and denote this set by $\mathcal{B}^{(k)}$. The sets $\mathcal{B}^{(1)},\ldots,\mathcal{B}^{(K)}$ must satisfy
	\begin{align*}
		&\bigcup_{k=1}^{K}\mathcal{B}^{(k)}\cap\mathcal{A}=\varnothing,\\
		\text{and}\quad&\bigcap_{k=1}^{K}\mathcal{B}^{(k)}=\varnothing.
	\end{align*}
	These sets are the individual edge structure of each subpopulation. Combining the above, we define $\mathcal{E}^{(k)}=\mathcal{A}\cup \mathcal{B}^{(k)},\, k=1,\ldots,K$.
\end{enumerate} 

To form the precision matrices $\mathbf{\Omega}^{(1)},\ldots,\mathbf{\Omega}^{(K)}$, we follow three steps:
\begin{enumerate}
	\item Generate $a_{jl}^{(k)}$ independently from $\mathcal{U}(0,1)$ and form $\mathbf{A}^{(k)}=(\mathbf{A}_{jl}^{(k)}),\,k=1,\ldots,K$, by
	\begin{align*}
		\mathbf{A}_{jl}^{(k)}=\begin{cases}
			a_{jl}^{(k)}\mathbf{I}_{M},&\quad(j,l)\in\mathcal{E}^{(k)}\\
			\mathbf{I}_{M},&\quad j=l\\
			\mathbf{0},&\quad\text{otherwise}
		\end{cases},
	\end{align*}
	
	\item To ensure symmetry, let
	\begin{align*}
		\mathbf{B}^{(k)}=\frac{\mathbf{A}^{(k)}+\mathbf{A}^{(k)\,\intercal}}{2}.
	\end{align*}
	Note that by construction, the diagonal elements of $\mathbf{B}^{(k)}$ are 1.
	
	\item Let $b_{rs}^{(k)}$, $r=1,\ldots,pM,\,s=1,\ldots,pM$, be the $(r,s)$-th element of $\mathbf{B}^{(k)}$. To ensure positive definiteness, we use Gershgorin's Circle Theorem \citep{bell1965gershgorin} to define the precision matrices $\mathbf{\Omega}^{(k)}=(\omega_{rs}^{(k)}),\,k=1,\ldots,K$, such that
	\begin{align*}
		\omega_{rs}^{(k)}=
		\begin{cases}
			\frac{b_{rs}^{(k)}}{\sum_{q\ne r}|b_{rq}^{(k)}|},&\quad r\ne s\quad\text{and}\quad\sum_{q\ne r}|b_{rq}^{(k)}|> 0\\
			0,&\quad r\ne s\quad\text{and}\quad\sum_{q\ne r}|b_{rq}^{(k)}|=0\\
			1,&\quad r=s
		\end{cases}.
	\end{align*}
\end{enumerate}

With $\mathbf{\Omega}^{(1)},\ldots,\mathbf{\Omega}^{(K)}$ thus constructed, we are now ready to generate the observed data for each subpopulation. To do so, we follow three steps:

\begin{enumerate}
	\item Choose a basis $\boldsymbol{\phi}_{j}^{(k)}=(\phi_{j1}^{(k)},\ldots,\phi_{jM}^{(k)})^{\intercal}$, for all $j,k$.
	
	\item Generate $\mathbf{a}_{i}^{(k)}=(\mathbf{a}_{i1}^{(k)\,\intercal},\ldots,\mathbf{a}_{ip}^{(k)\,\intercal})^{\intercal}$ from a $\mathcal{N}_{pM}(\mathbf{0},(\mathbf{\Omega}^{(k)})^{-1})$ distribution, for $i=1,\ldots,n$ and all $k=1,\ldots,K$.
	
	\item Create the observed data $\mathbf{h}_{i}^{(k)}=(\mathbf{h}_{i1}^{(k)\,\intercal},\ldots,\mathbf{h}_{ip}^{(k)\,\intercal})^{\intercal}$, where
	\begin{align*}
		\mathbf{h}_{ij}^{(k)}(t)=\mathbf{a}_{ij}^{(k)\,\intercal}\boldsymbol{\phi}_{j}^{(k)}(t)+\epsilon_{ijt}^{(k)},
	\end{align*}
	and $\epsilon_{ijt}^{(k)}$ are i.i.d. $\mathcal{N}(0,\sigma^{2})$, for all $i,j,k,t$.
\end{enumerate}

We compare the JFGGM with the separate estimation with the FGGM on 12 different scenarios, which consist of all combinations of the variables $n=100$ and $200$, $p=80$ and $100$, $\rho=0,0.5$ and $1$. In all settings, the common structure of the $K=3$ subpopulations consists of $s=5\%$ of all possible edges $\binom{p}{2}$. The basis for the functional data, for each population, is $\{1,\sin t , \cos t\}$. Thus $M=3$. The variance of the error is $\sigma^{2}=0.05$, the number of time points is $\nu=100$, starting from 0 and ending at 1.

Receiver operating characteristic (ROC) curves are used to evaluate the performance of the two competing methods. For these curves we plot the average proportion of correctly detected links (ATPR) against the average proportion of falsely detected links (AFPR), over a range of values of $\lambda$. In particular,
\begin{align*}
	\text{AFPR}(\lambda)&=\frac{1}{K}\sum_{k=1}^{K}\frac{\sum_{1 \leq j < l \leq p}\mathds{1} \bigg (\mathbf{\Omega}_{0,jl}^{(k)}=\mathbf{0}\,,\, \hat{\mathbf{\Omega}}_{jl}^{(k)}(\lambda) \ne \mathbf{0} \bigg)}{\sum_{1 \leq j < l \leq p}\mathds{1} \bigg(\mathbf{\Omega}_{0,jl}^{(k)}=\mathbf{0} \bigg)}\\
	\text{ATPR}(\lambda)&=\frac{1}{K}\sum_{k=1}^{K}\frac{\sum_{1 \leq j < l \leq p}\mathds{1} \bigg (\mathbf{\Omega}_{0,jl}^{(k)}\ne\mathbf{0} \, , \,\hat{\mathbf{\Omega}}_{jl}^{(k)}(\lambda) \ne\mathbf{0} \bigg)}{\sum_{1 \leq j < l \leq p}\mathds{1} \bigg(\mathbf{\Omega}_{0,jl}^{(k)}\ne\mathbf{0} \bigg)},
\end{align*}
where $\mathds{1}$ is the indicator function, $\mathbf{\Omega}_{0}^{(k)}$ is the true precision matrix and $\hat{\mathbf{\Omega}}^{(k)}(\lambda)$ is the estimated precision matrix using tuning parameter $\lambda$. The above quantities are calculated for 100 values of $\lambda$, where 90 of them are in $[0,0.67]$ and 10 of them are in $[0.6784,1.5]$. All of them are equally spaced in their respective intervals and start and end at the boundaries of their respective intervals. Each scenario is simulated 5 times, and the final ROC curve is the average of them.

\begin{figure}
	\centering
	\begin{tabular}{ccc}
		\includegraphics[scale=0.33 ]{./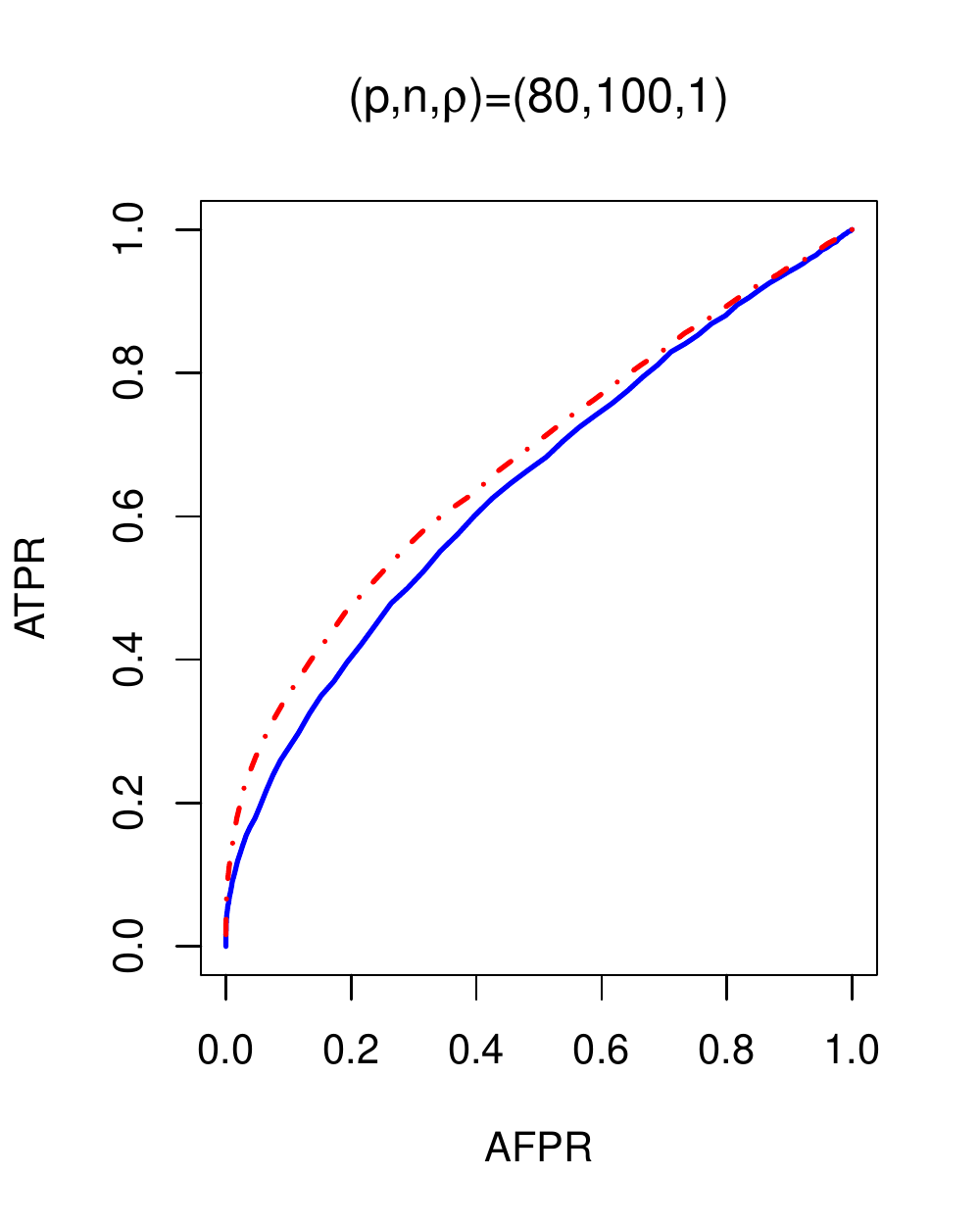} & \includegraphics[scale=0.33 ]{./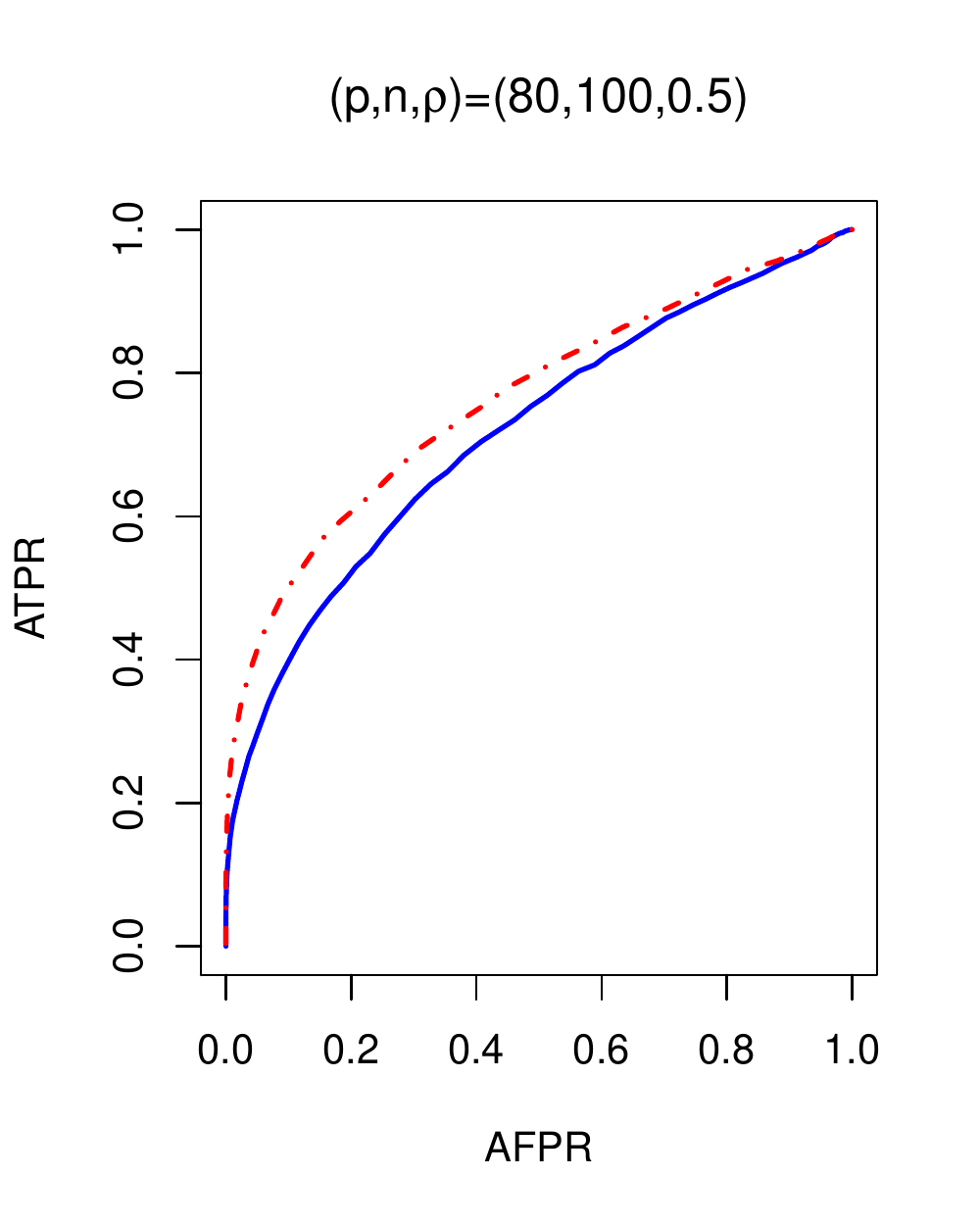} & 
		\includegraphics[scale=0.33 ]{./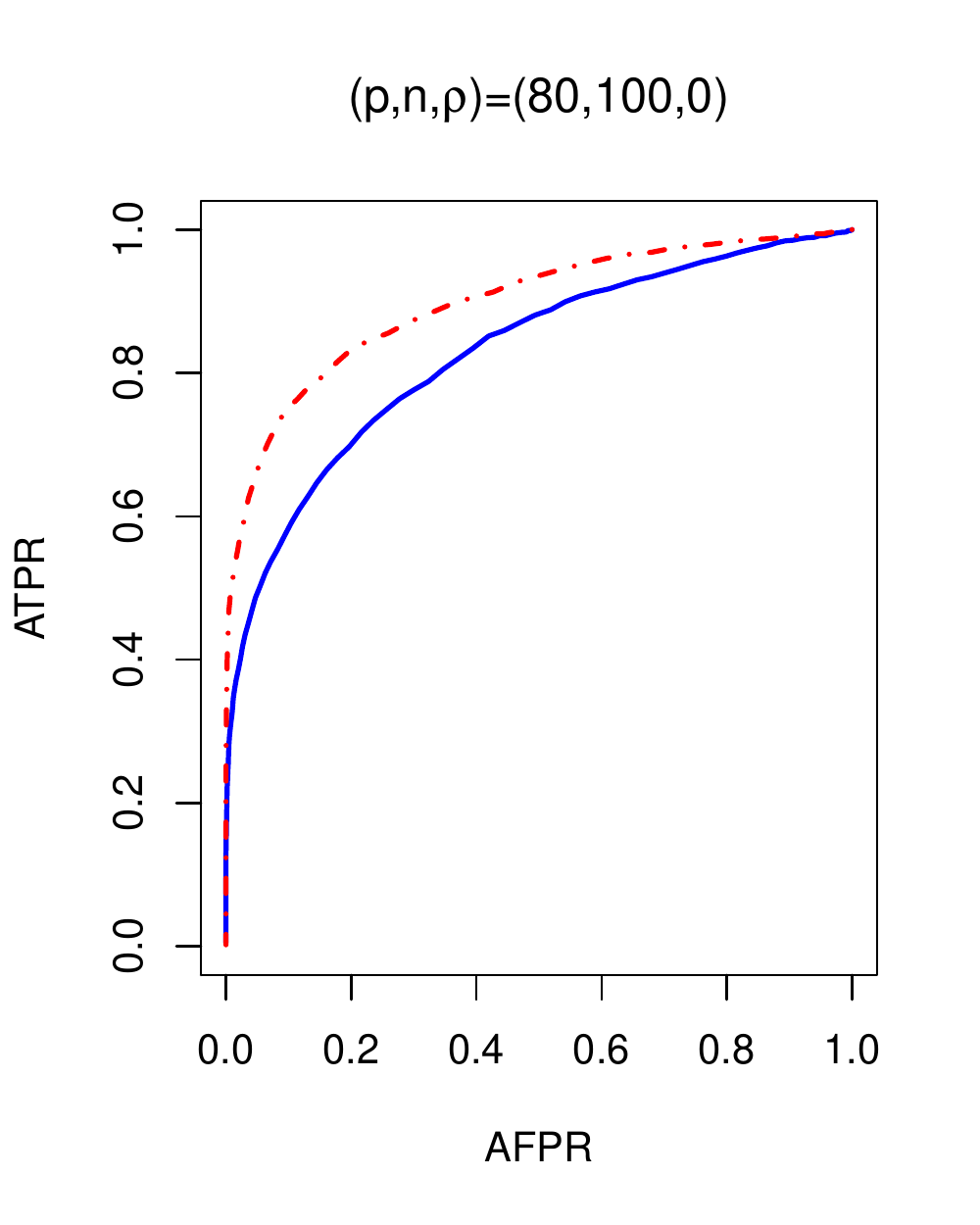}\\
		\includegraphics[scale=0.33 ]{./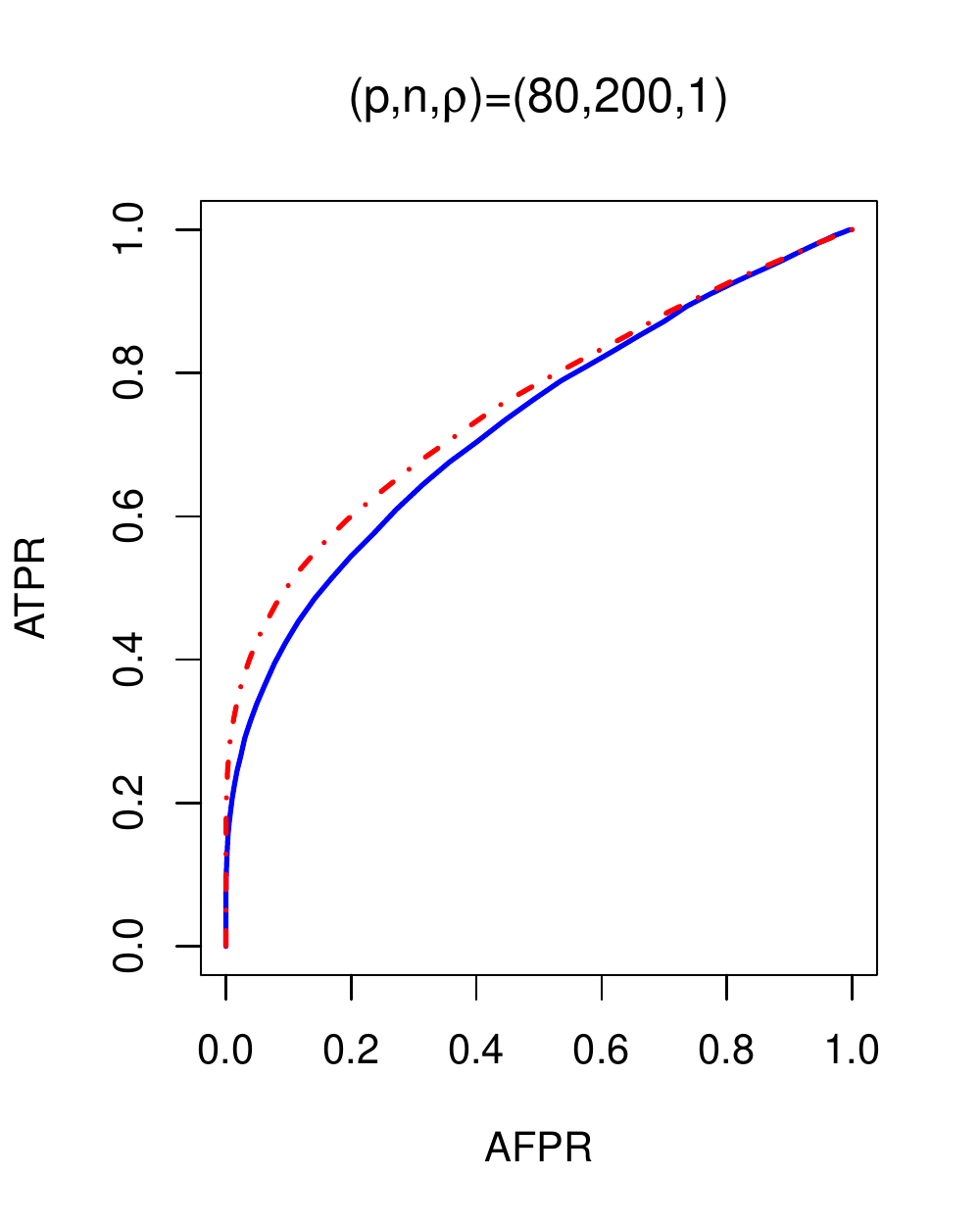} & \includegraphics[scale=0.33 ]{./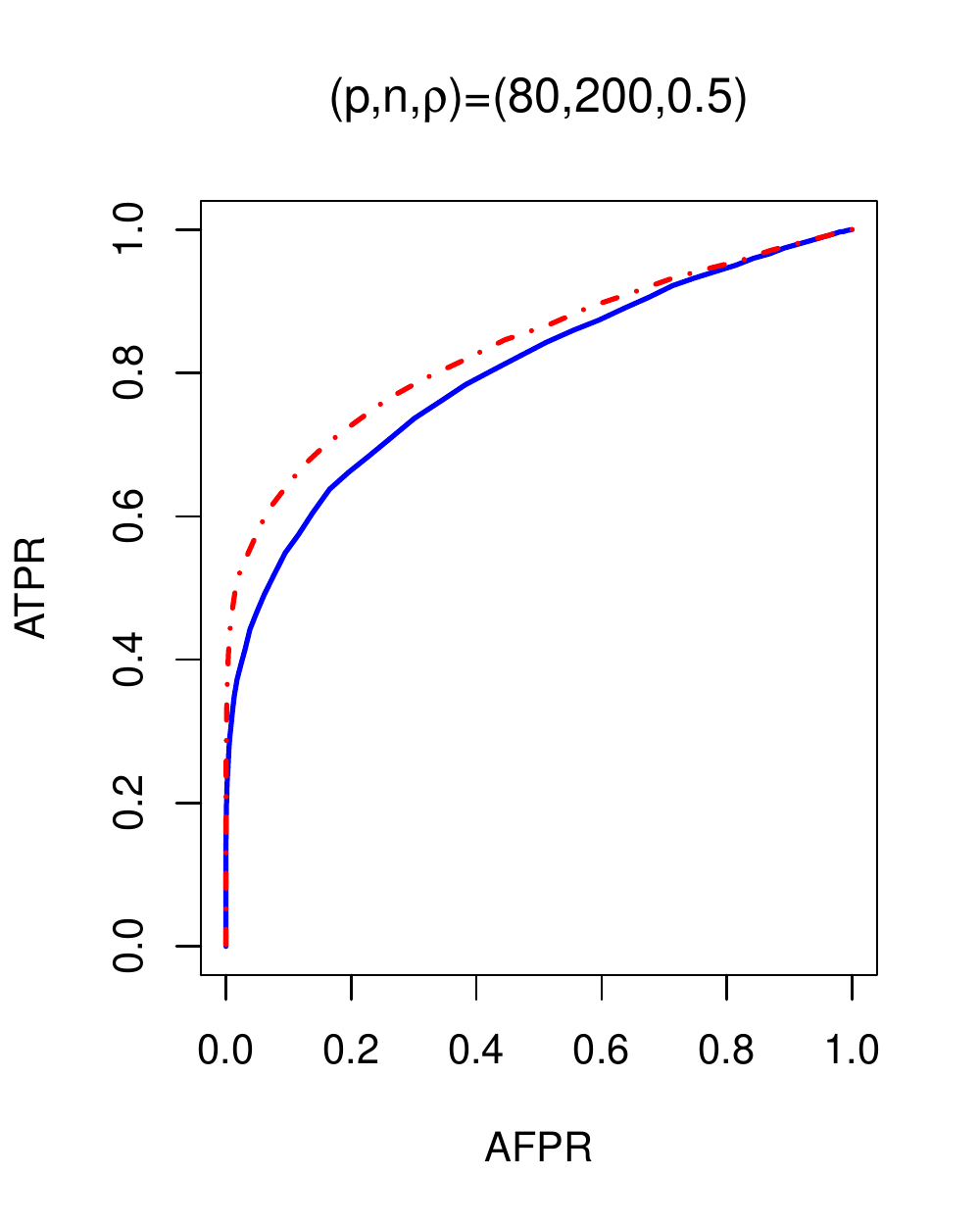} & 
		\includegraphics[scale=0.33 ]{./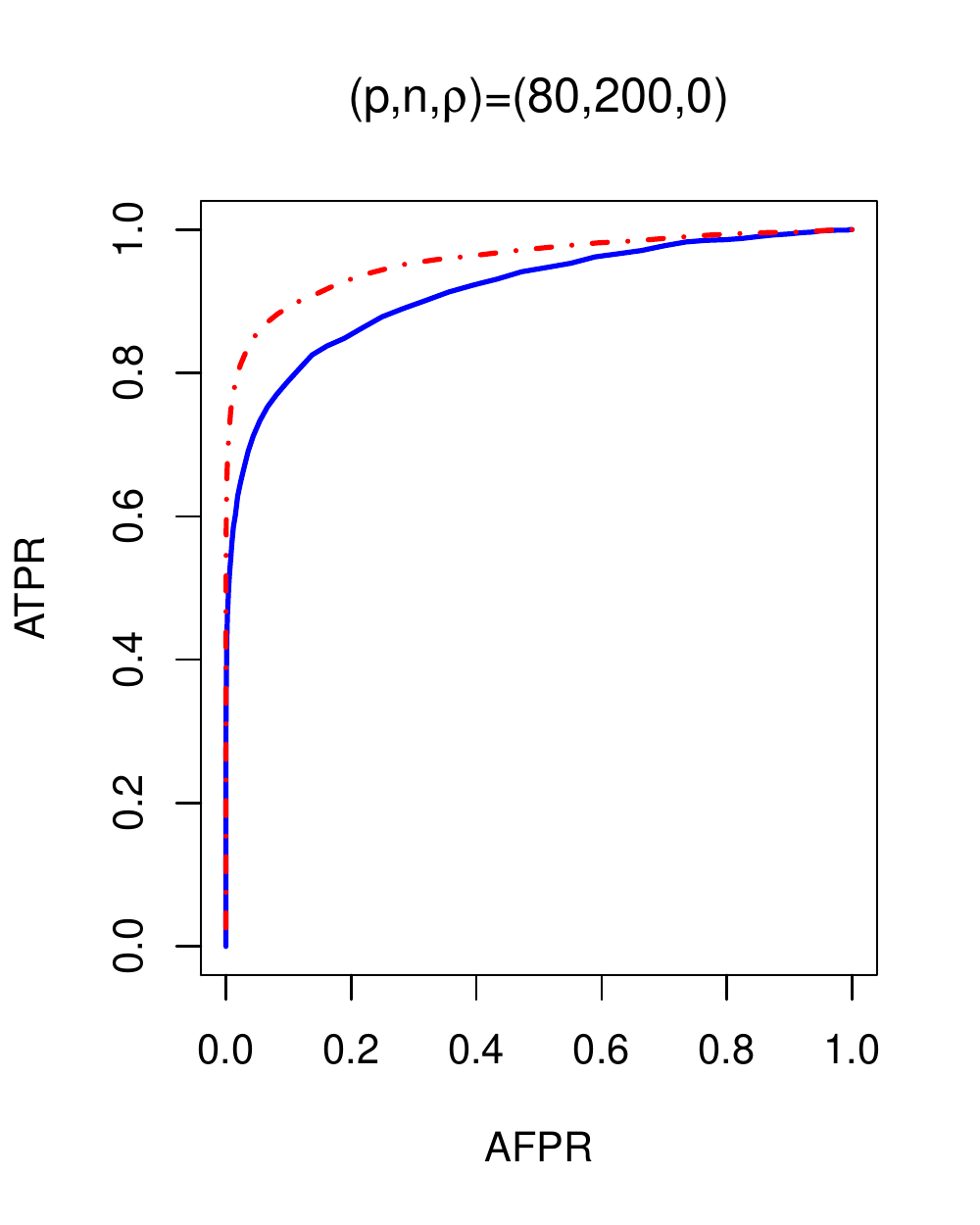}\\
		\includegraphics[scale=0.33 ]{./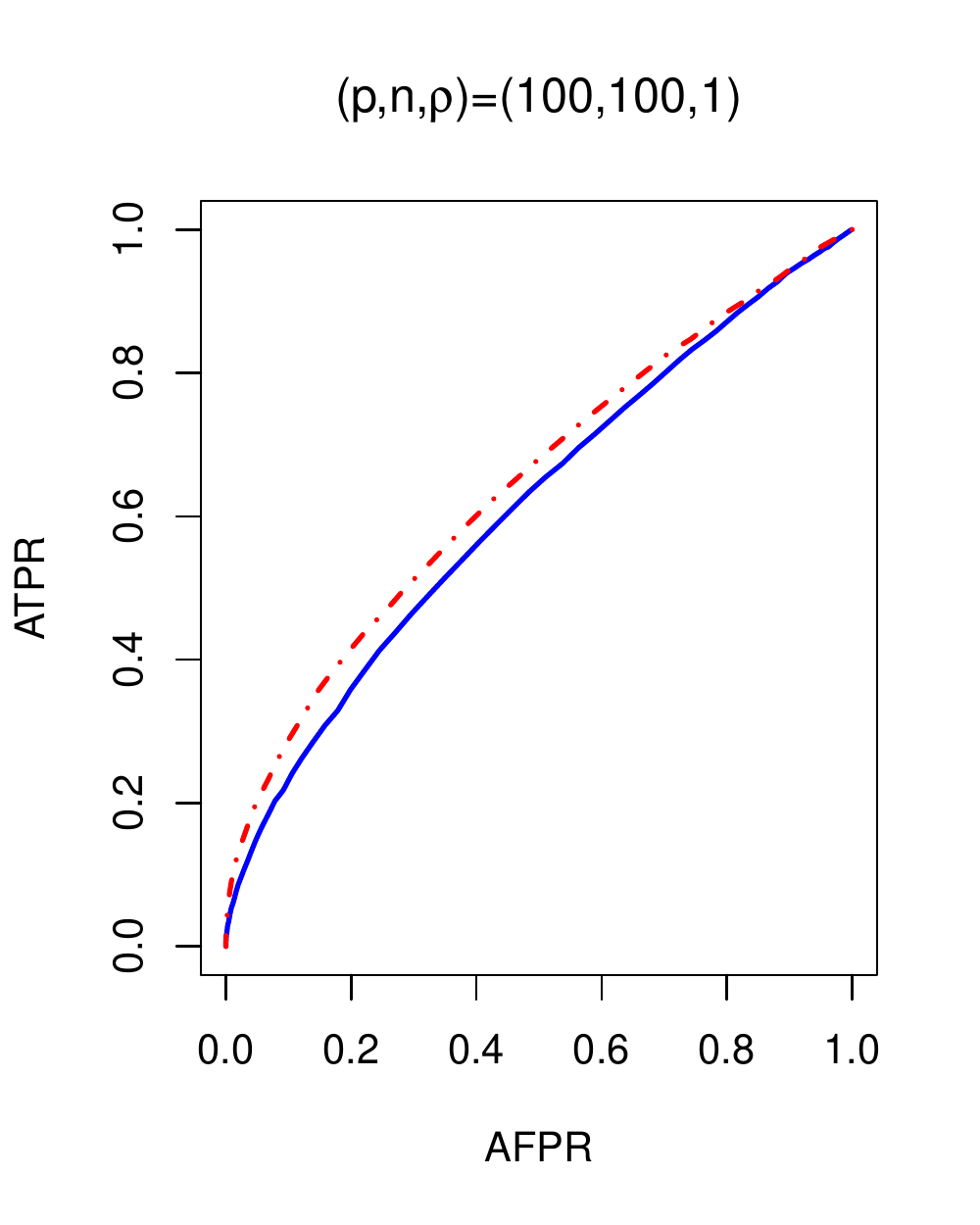} & \includegraphics[scale=0.33 ]{./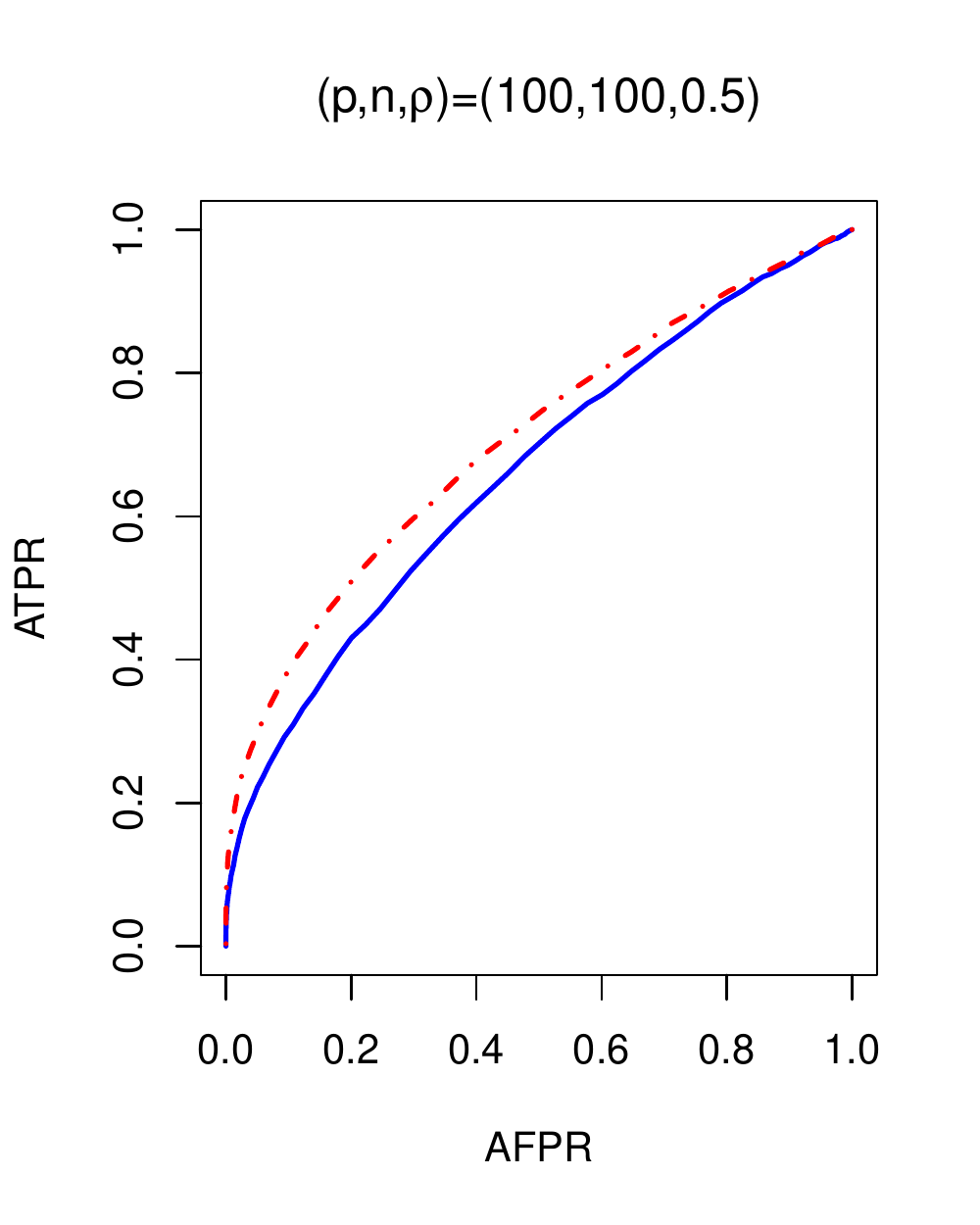} & 
		\includegraphics[scale=0.33 ]{./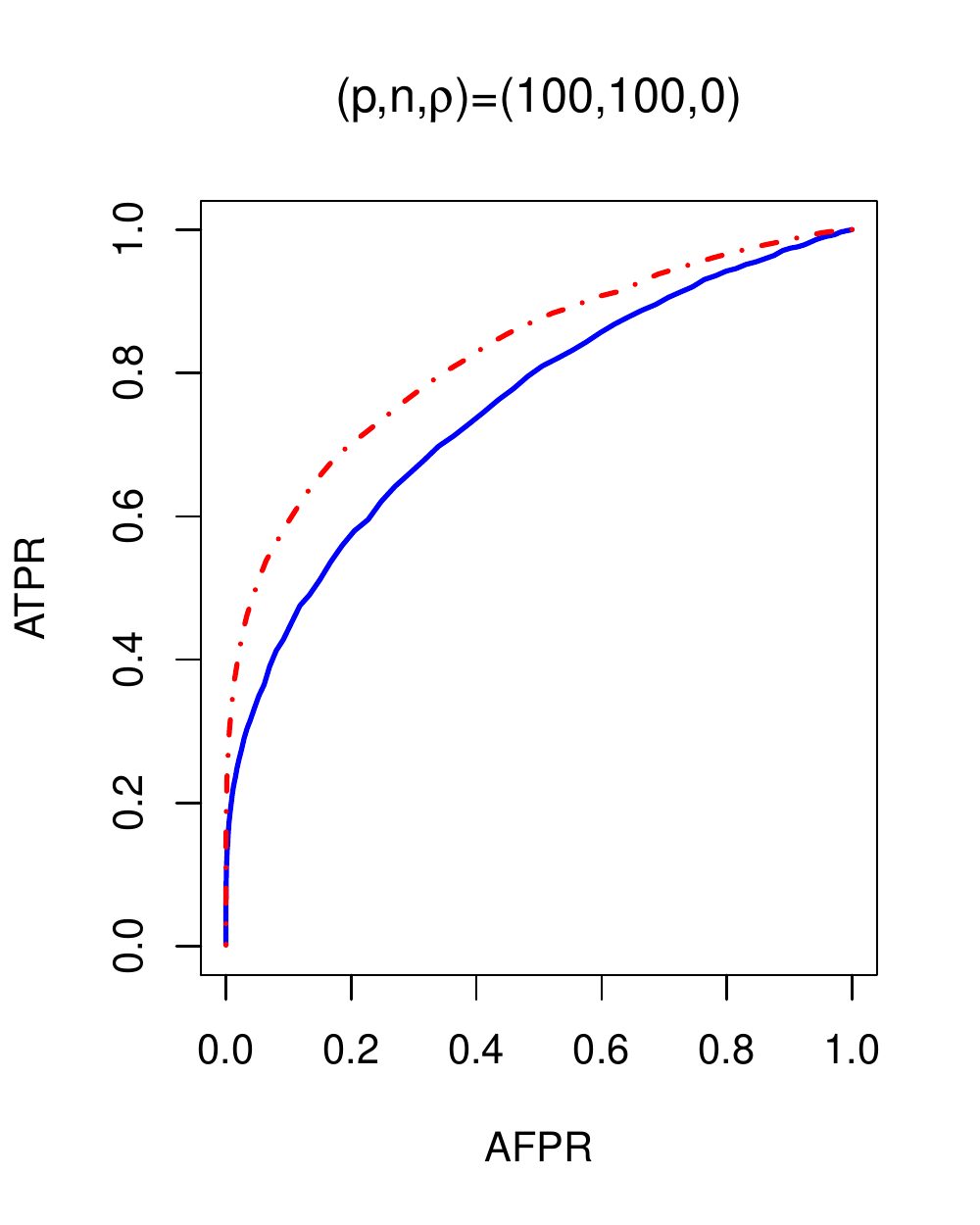}\\
		\includegraphics[scale=0.33 ]{./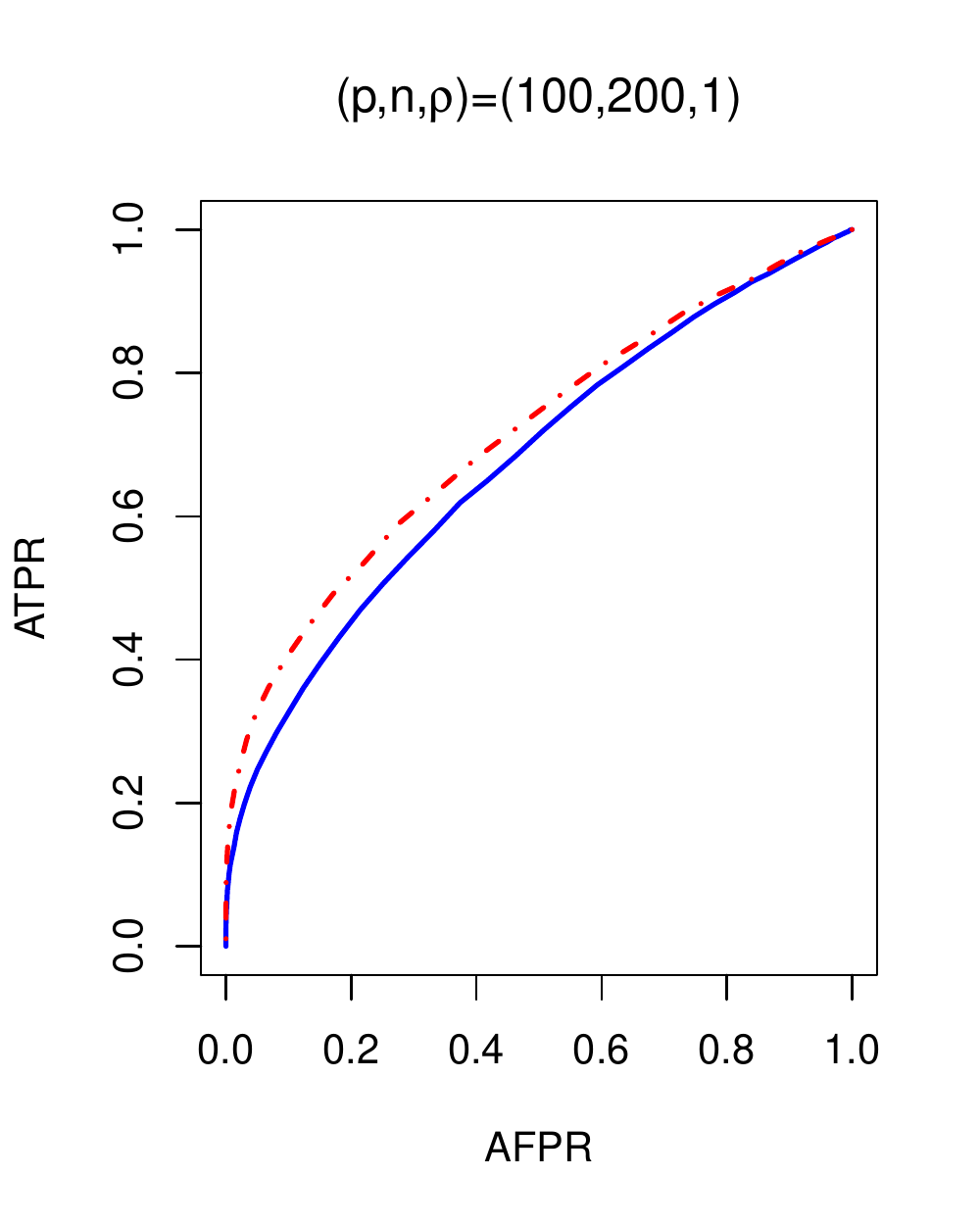} & \includegraphics[scale=0.33 ]{./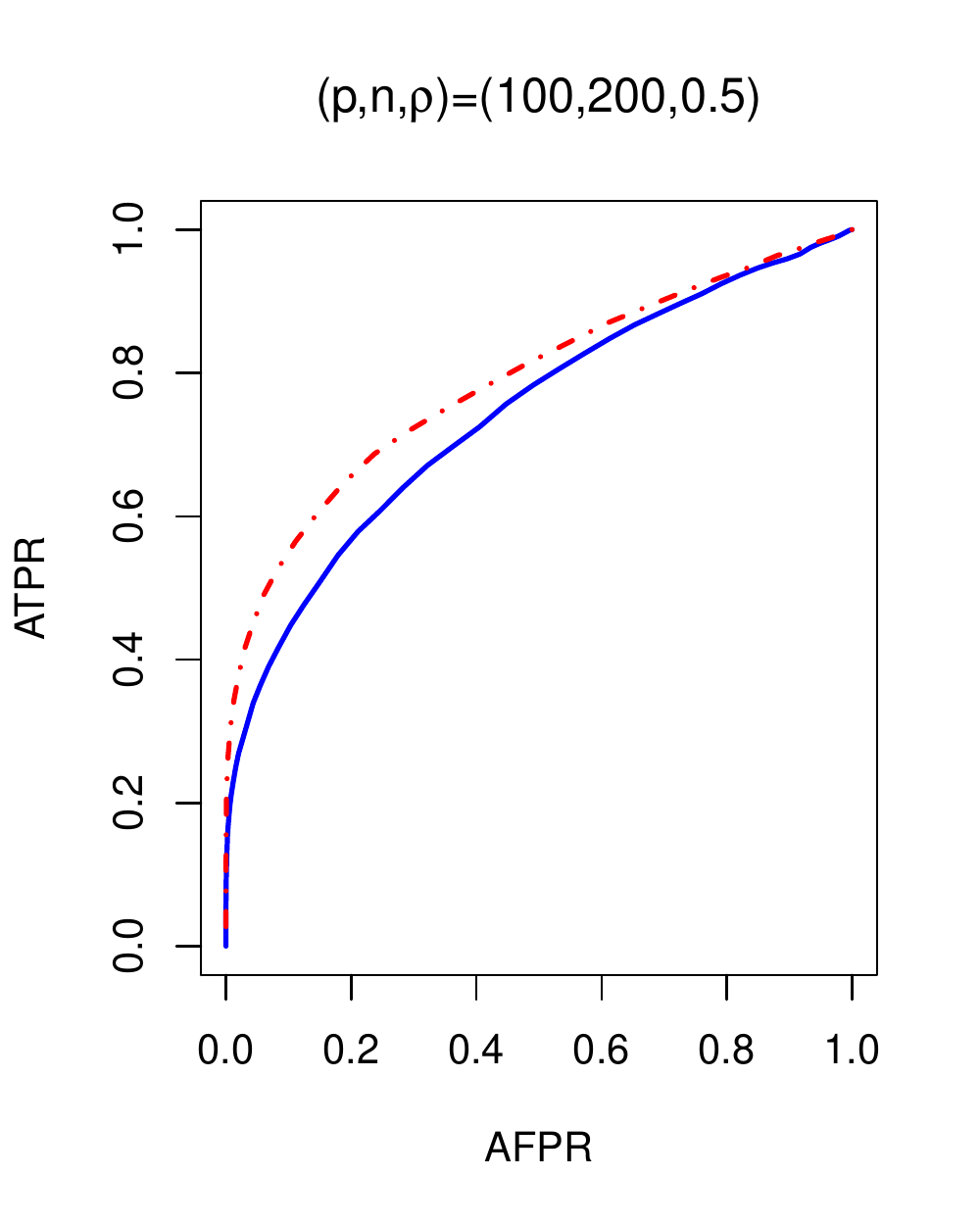} & \includegraphics[scale=0.33 ]{./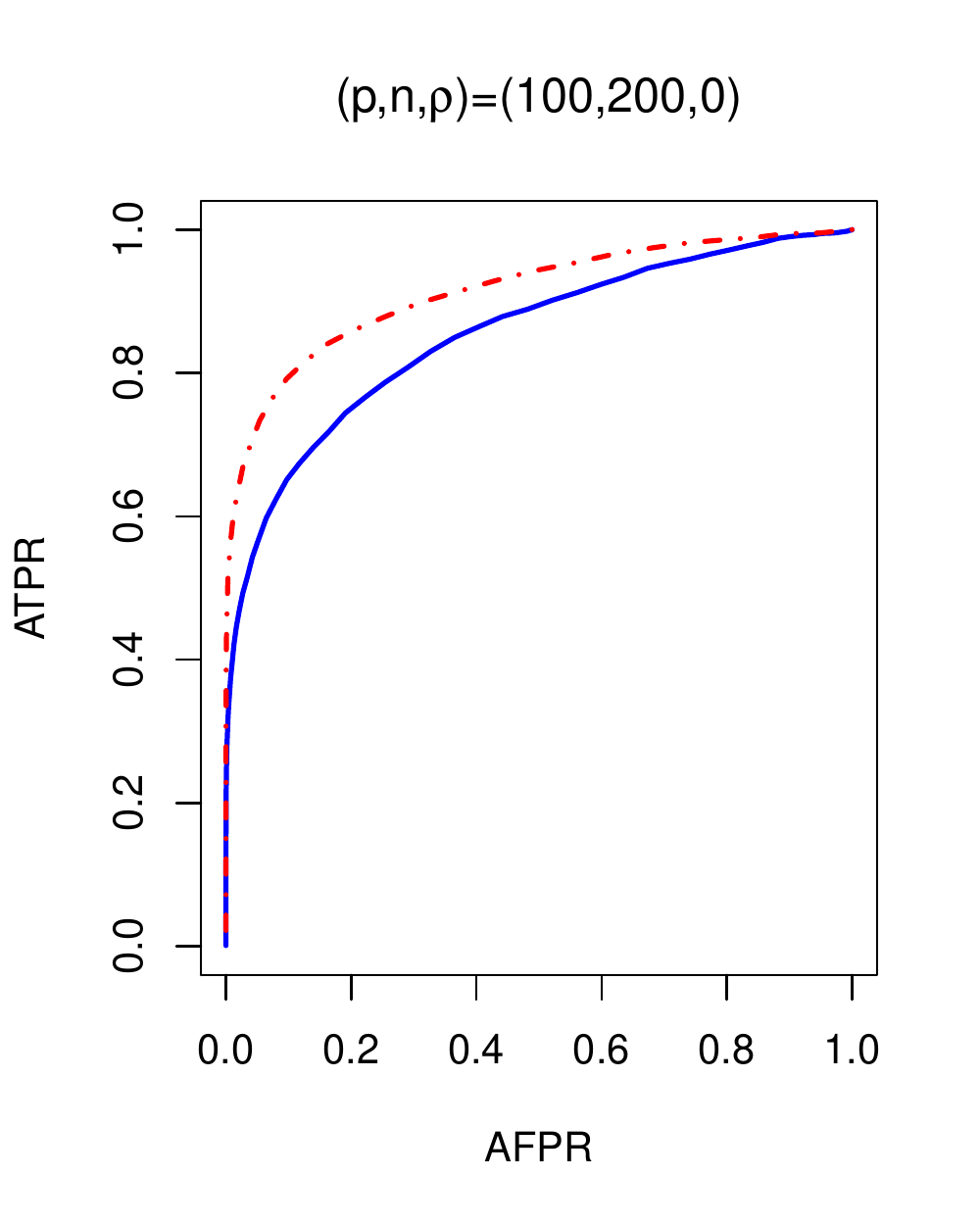}
	\end{tabular}
	\caption{ROC curves by JFGGM (red dashed line) and by separate estimation with FGGM (solid blue line) for different combinations of $(p,n,\rho)$.}
	\label{figure:1}
\end{figure}

\begin{table}
	\centering
	\begin{tabular}{|c|c|c|c|c|}
		\hline
		$\big((p,n),\rho\big)$&\text{Method}&1 &0.5&0\\
		\hline
		\multirow{2}{4em}{$(80,100)$}&\text{JFGGM}& 0.68  & 0.76 & 0.90 \\\cline{2-5}
		& \text{FGGM}& 0.64& 0.72& 0.83   \\
		\hline
		\multirow{2}{4em}{$(80,200)$}&\text{JFGGM}& 0.76& 0.83 & 0.96   \\\cline{2-5}
		& \text{FGGM}& 0.73 & 0.80& 0.91  \\
		\hline
		\multirow{2}{4em}{$(100,100)$}&\text{JFGGM}& 0.65& 0.70 & 0.83   \\\cline{2-5}
		& \text{FGGM}& 0.61& 0.66& 0.75   \\
		\hline
		\multirow{2}{4em}{$(100,200)$}&\text{JFGGM}& 0.71& 0.79 & 0.91   \\\cline{2-5}
		& \text{FGGM}& 0.68& 0.74& 0.85   \\
		\hline
	\end{tabular}
	\caption{Table with the area under the ROC curves from Figure \ref*{figure:1} }
	\label{tabel:1}
\end{table}

Figure \ref{figure:1} shows the ROC curves by the JFGGM and those by separate estimation with FGGM. Overall, our method outperforms separate estimation with the FGGM. When the number of individual edges is the same with the number of common edges the two methods are close. As the number of individual edges decreases, JFGGM significantly outperforms separate estimation. In addition, the JFGGM performs well on high dimensions with a relatively small sample size. Table \ref{tabel:1} provides the area under the curve for each scenario, verifying the visual results. In all scenarios, the ADMM algorithm produced accurate estimators after no more than 100 iterations.

\section{Application}

\begin{figure}
	\begin{center}
		\includegraphics[scale=0.6]{./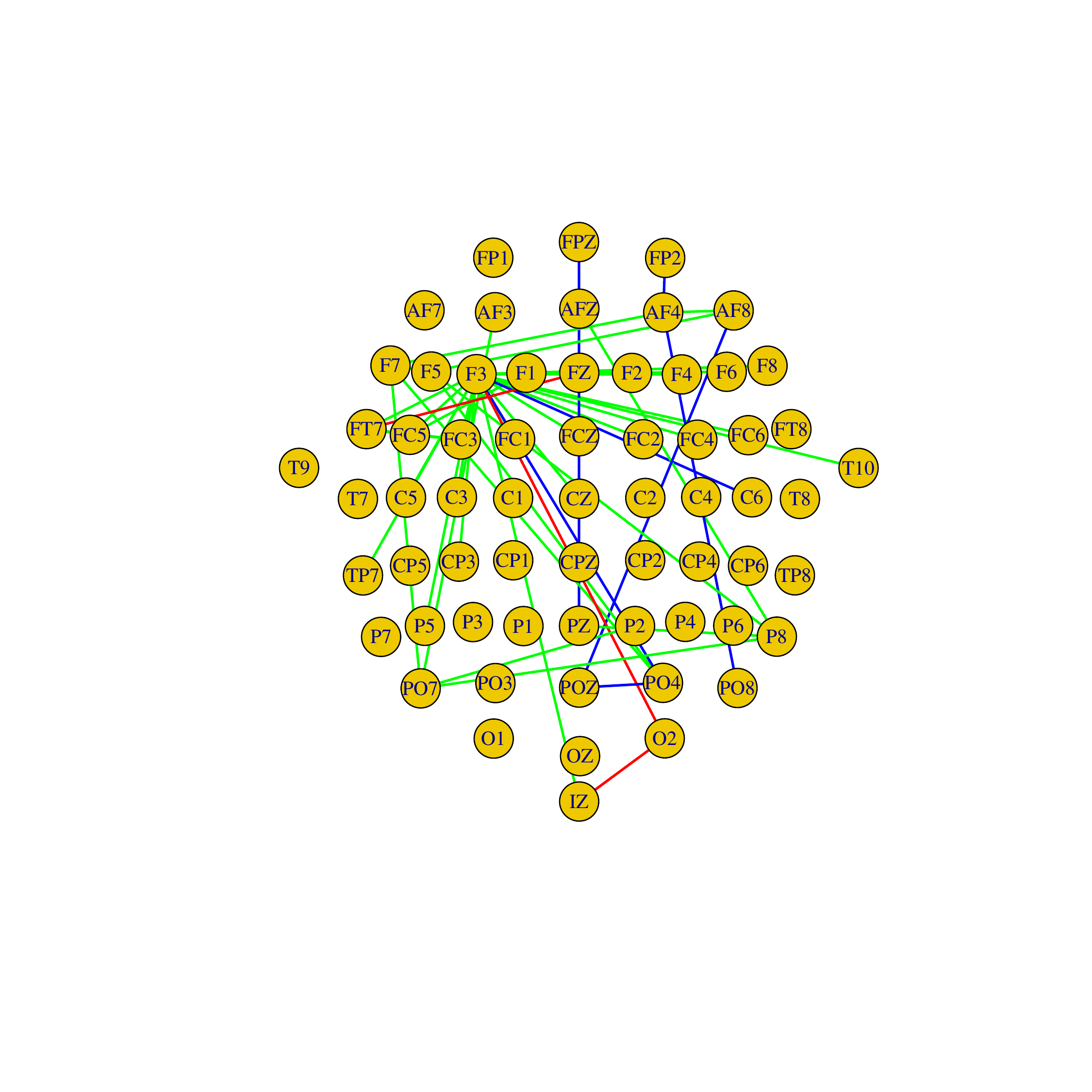}
	\end{center}
	\caption{Graph of the 64 electrodes, produced by the JFGGM. Green represents the common edges, red the edges unique to the alcoholic group, and blue the edges unique to the non-alcoholic group.}
	\label{figure:2}
\end{figure}

In this section we apply the JFGGM to the EEG dataset mentioned in the Introduction. The dataset consists of two groups of subjects: alcoholic and non-alcoholic. The first group is comprised of 77 subjects and the second of 45. Sixty four electrodes were strategically placed on each subject's scalp, which measured their brain activity while they were shown pictures of a variety of objects. Measurements of brain activity were sampled at 256 Hz for 1 second. The purpose of this study is to uncover genetic predisposition to alcoholism.

\cite{li2010dimension} applied a dimension folding method to the EEG dataset where brain activity recorded at each electrode was treated as a multivariate random vector of 256 entries. \cite{qiao2019functional} and \cite{soleali} treated the same quantities as stochastic processes. Both of them, however, apply their methods separately to the alcoholic and the control group, losing the joint information for prediction. In contrast, with the JFGGM we can exploit the information that exists across these groups with joint estimation of their graphs. It is also computationally efficient, since  we would have to choose two $\lambda$'s for the separate estimation, one for each group. Finally, JFGGM makes comparison of the two graphs easier, since the level of sparsity for both is controlled by the same $\lambda$.

Figure \ref{figure:2} shows the graph estimated by the JFGGM among the 64 stochastic processes describing the brain activity at the electrodes. The layout of the vertices represents the position of the electrodes on the scalp, with the top side being the front of the skull. We chose $\lambda=3.5$ so that the sparsity level of the graph is at $2.5\%$. For every stochastic process of each group, we estimated a Karhunen-Loeve expansion of $M=5$ eigenfunctions.

From Figure \ref{figure:2} we see that the graphs of the two groups have a rich common structure, which indicates that our joint estimation procedure has a significant advantage. The majority of the common structure is located at the front side of the left hemisphere of the scalp. Furthermore, the two groups exhibit important differences: the edges unique to the alcoholic group are observed on an acute diagonal strip near the center of the scalp, whereas the edges unique to the non-alcoholic group occupy the right hemisphere of the scalp. Finally, the non-alcoholic group has seven extra individual edges, while the alcoholic group has only three, indicating heightened brain activity in the control group.

\section{Discussion}\label{discussion}
In this paper, we develop a method for jointly estimating functional graphical models. The assumption is that these graphs share a significant common structure. We can see the common structure in the distribution of the Karhunen-Loeve expansion coefficients $\mathcal{N}(\mathbf{0},\mathbf{\Omega}^{(k)})$, for each subpopulation $k$. Each precision matrix $\mathbf{\Omega}^{(k)}$ can be decomposed into the Hadamard product of a matrix $\mathbf{\Theta}$, that is common for all subpopulations and expresses the common structure, and a subpopulation specific matrix $\mathbf{\Gamma}^{(k)}$, that expresses the individual structure of the $k$-th subpopulation. By estimating a single graph for all the data, we would be ignoring the individuality of each subpopulation. On the other hand, by estimating a single graph for each subpopulation, we would not be using the existence of the common structure to our advantage. We accommodate our method with two optimization algorithms. The first dealing with the nonconvex nature of the objective function, and the second dealing with the nonsmooth nature of the first algorithm. To complete the theoretical novelty of our model, we establish the asymptotic consistency of our estimator.
The theoretical accuracy of the JFGGM is demonstrated in a simulation experiment against a separate estimation of the graphs with the FGGM method developed in \cite{qiao2019functional}.

To conclude our work, we would like to point three possible extensions of this article. First, we have assumed that the gaussian processes associated with each subpopulation are realizations of Hilbert spaces with the same dimension $M$. Hence, the first possibility could be to extended this method to the case where the dimension of the Hilbert spaces is subpopulation specific. Second, the assumption that the Karhunen-Loeve expansion coefficients follow directly a $\mathcal{N}(\mathbf{0},\mathbf{\Omega}^{(k)})$ for each subpopulation $k$, can be further relaxed by assuming instead that the coefficients may not follow initially a multivariate normal, but there exists a transformation such that the transformed coefficients follow it, as described in \cite{soleali}. A third possible extension would be to get rid of the multivariate normal distributional assumption altogether for all subpopulations by replacing the conditional independence relationship which defines the graphs with the additive conditional independence relationship studied in \cite{li2018nonparametric}.

\appendix
\section{Proving Theorem~\ref{theorem:2}}\label{appendix:a}

For a matrix $\mathbf{A}=(a_{jl})$, let $\|\cdot\|_{1}$ denote the usual $\ell_{1}$ vector norm $\|\mathbf{A}\|_{1}=\sum_{j,l}|a_{jl}|$. Define the objective function
\begin{align}\label{obj:2}
	\sum_{k=1}^{K}\left[\tr\left(\hat{\mathbf{\Sigma}}^{(k)}\mathbf{\Omega}^{(k)}\right)-\log\det\left(\mathbf{\Omega}^{(k)}\right)\right]+\sum_{j\ne l}\theta_{jl}+\lambda_{1}\lambda_{2}\sum_{j\ne l}\sum_{k=1}^{K}\|\mathbf{\Gamma}_{jl}^{(k)}\|_{F},  
\end{align}
for $\theta_{jl}$ and $\mathbf{\Gamma}_{jl}^{(k)}$ specified above.  We first show that the objective functions \eqref{obj:1} and \eqref{obj:2} are equivalent.

\medskip
\begin{lemma}\label{lemma:1}
	Suppose $(\hat{\mathbf{\Theta}},\hat{\mathbf{\Gamma}})$ is a local minimizer of \eqref{obj:1}. Then, there exists a local minimizer $(\tilde{\mathbf{\Theta}},\tilde{\mathbf{\Gamma}})$ of \eqref{obj:2} such that $\tilde{\theta}_{jl}\tilde{\mathbf{\Gamma}}_{jl}^{(k)}=\hat{\theta}_{jl}\hat{\mathbf{\Gamma}}_{jl}^{(k)}$ for all $j,l,k$. Conversely, let $(\tilde{\mathbf{\Theta}},\tilde{\mathbf{\Gamma}})$ be a local minimizer of \eqref{obj:2}. Then, there exists a local minimizer $(\hat{\mathbf{\Theta}},\hat{\mathbf{\Gamma}})$ of \eqref{obj:1} such that $\hat{\theta}_{jl}\hat{\mathbf{\Gamma}}_{jl}^{(k)}=\tilde{\theta}_{jl}\tilde{\mathbf{\Gamma}}_{jl}^{(k)}$ for all $j,l,k$.
\end{lemma}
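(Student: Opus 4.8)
The plan is to realize the two programmes \eqref{obj:1} and \eqref{obj:2} on a common feasible set and to connect them by an explicit coordinatewise rescaling that leaves every product $\theta_{jl}\mathbf{\Gamma}_{jl}^{(k)}$ fixed. Write $F_{1}$ and $F_{2}$ for the objectives \eqref{obj:1} and \eqref{obj:2}, and let $\mathcal{D}$ be the set of pairs $(\mathbf{\Theta},\mathbf{\Gamma})$ obeying the stated constraints ($\theta_{jl}\geq 0$, $\theta_{lj}=\theta_{jl}$, $\theta_{jj}=1$, $\mathbf{\Gamma}_{lj}^{(k)\,\intercal}=\mathbf{\Gamma}_{jl}^{(k)}$, and each induced block matrix $\mathbf{\Omega}^{(k)}=(\theta_{jl}\mathbf{\Gamma}_{jl}^{(k)})$ positive definite); both objectives have $\mathcal{D}$ as their domain. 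Define the map $\Phi:\mathcal{D}\to\mathcal{D}$ by $\tilde{\theta}_{jl}=\lambda_{1}\theta_{jl}$ and $\tilde{\mathbf{\Gamma}}_{jl}^{(k)}=\lambda_{1}^{-1}\mathbf{\Gamma}_{jl}^{(k)}$ for $j\neq l$, while fixing the diagonal blocks $\tilde{\theta}_{jj}=1$ and $\tilde{\mathbf{\Gamma}}_{jj}^{(k)}=\mathbf{\Gamma}_{jj}^{(k)}$.

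First I would verify the pointwise identity $F_{1}=F_{2}\circ\Phi$ on $\mathcal{D}$. Since $\tilde{\theta}_{jl}\tilde{\mathbf{\Gamma}}_{jl}^{(k)}=\theta_{jl}\mathbf{\Gamma}_{jl}^{(k)}$ for all $j,l,k$, each $\mathbf{\Omega}^{(k)}$ is unchanged by $\Phi$, so the likelihood term $\sum_{k}[\tr(\hat{\mathbf{\Sigma}}^{(k)}\mathbf{\Omega}^{(k)})-\log\det(\mathbf{\Omega}^{(k)})]$ is invariant. For the penalties, $\sum_{j\neq l}\tilde{\theta}_{jl}=\lambda_{1}\sum_{j\neq l}\theta_{jl}$ turns the $\theta$-penalty of \eqref{obj:2} into that of \eqref{obj:1}, and $\|\tilde{\mathbf{\Gamma}}_{jl}^{(k)}\|_{F}=\lambda_{1}^{-1}\|\mathbf{\Gamma}_{jl}^{(k)}\|_{F}$ gives $\lambda_{1}\lambda_{2}\sum_{j\neq l}\sum_{k}\|\tilde{\mathbf{\Gamma}}_{jl}^{(k)}\|_{F}=\lambda_{2}\sum_{j\neq l}\sum_{k}\|\mathbf{\Gamma}_{jl}^{(k)}\|_{F}$; adding the three pieces yields $F_{2}(\Phi(\mathbf{\Theta},\mathbf{\Gamma}))=F_{1}(\mathbf{\Theta},\mathbf{\Gamma})$.

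Next I would check that $\Phi$ is a homeomorphism of $\mathcal{D}$ onto itself. Because $\lambda_{1}>0$, the rescaling preserves the sign constraint $\theta_{jl}\geq 0$, the symmetry relations, and the diagonal normalizations, and since it leaves all products $\theta_{jl}\mathbf{\Gamma}_{jl}^{(k)}$ and hence each $\mathbf{\Omega}^{(k)}$ unchanged it preserves positive definiteness; thus $\Phi(\mathcal{D})\subseteq\mathcal{D}$. Its inverse is the rescaling $\theta_{jl}\mapsto\lambda_{1}^{-1}\theta_{jl}$, $\mathbf{\Gamma}_{jl}^{(k)}\mapsto\lambda_{1}\mathbf{\Gamma}_{jl}^{(k)}$ off the diagonal, so $\Phi$ is a bijection of $\mathcal{D}$, and both $\Phi$ and $\Phi^{-1}$ are continuous, being coordinatewise scalings.

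Finally, since $\Phi$ is a self-homeomorphism of $\mathcal{D}$ and $F_{1}=F_{2}\circ\Phi$, a point $(\hat{\mathbf{\Theta}},\hat{\mathbf{\Gamma}})$ is a local minimizer of $F_{1}$ if and only if $\Phi(\hat{\mathbf{\Theta}},\hat{\mathbf{\Gamma}})$ is a local minimizer of $F_{2}$: a neighbourhood on which $F_{1}$ is minimized is carried by $\Phi$ to a neighbourhood on which $F_{2}$ is minimized. Taking $(\tilde{\mathbf{\Theta}},\tilde{\mathbf{\Gamma}})=\Phi(\hat{\mathbf{\Theta}},\hat{\mathbf{\Gamma}})$ gives the required equality of products by construction, and the converse direction uses $\Phi^{-1}$ identically. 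The only point demanding care is confirming that $\Phi$ really maps the constrained domain (positive definiteness together with $\theta_{jl}\geq 0$ and the identifiability normalizations) onto itself, and that the correspondence survives at degenerate edges where $\theta_{jl}=0$ makes the factorization non-unique; this causes no trouble, since $\Phi$ acts on the coordinates $(\mathbf{\Theta},\mathbf{\Gamma})$ rather than on the products and remains a bijection there.
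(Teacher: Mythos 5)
Your proposal is correct and takes essentially the same route as the paper: your map $\Phi$ and the identity $F_{1}=F_{2}\circ\Phi$ are exactly the paper's observation $Q_{1}(\lambda_{1},\lambda_{2},\mathbf{\Theta},\mathbf{\Gamma})=Q_{2}(\lambda_{1}\lambda_{2},\lambda_{1}\mathbf{\Theta},\lambda_{1}^{-1}\mathbf{\Gamma})$, with your treatment of the diagonal blocks being, if anything, slightly more careful. The only difference is presentational: where you invoke the general fact that a self-homeomorphism intertwining the two objectives carries local minimizers to local minimizers, the paper verifies the same thing by hand, shrinking the neighbourhood radius to $\delta^{*}\leq\delta\min\left(\lambda_{1},\lambda_{1}^{-1}\right)$ in the $\ell_{1}$ metric.
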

\begin{proof}
	Let $Q_{1}(\lambda_{1},\lambda_{2},\mathbf{\Theta},\mathbf{\Gamma})$ and $Q_{2}(\lambda_{1}\lambda_{2},\mathbf{\Theta},\mathbf{\Gamma})$ denote the objective functions \eqref{obj:1} and \eqref{obj:2}, respectively. Observe that
	\begin{align*}
		Q_{1}(\lambda_{1},\lambda_{2},\mathbf{\Theta},\mathbf{\Gamma})&=Q_{2}(\lambda_{1}\lambda_{2},\lambda_{1}\mathbf{\Theta},\lambda_{1}^{-1}\mathbf{\Gamma})\\
		Q_{2}(\lambda_{1}\lambda_{2},\mathbf{\Theta},\mathbf{\Gamma})&=Q_{1}(\lambda_{1},\lambda_{2},\lambda_{1}^{-1}\mathbf{\Theta},\lambda_{1}\mathbf{\Gamma})
	\end{align*}
	Since $(\hat{\mathbf{\Theta}},\hat{\mathbf{\Gamma}})$ is a local minimizer of $Q_{1}(\lambda_{1},\lambda_{2},\cdot,\cdot)$, there exists $\mathbf{\delta}>0$ such that for every $(\mathbf{\Theta},\mathbf{\Gamma})$
	with
	\begin{align*}
		\|\mathbf{\Theta}-\hat{\mathbf{\Theta}}\|_{1}+\sum_{k=1}^{K}\|\mathbf{\Gamma}^{(k)}-\hat{\mathbf{\Gamma}}^{(k)}\|_{1}<\delta,
	\end{align*}
	we have
	\begin{align*}
		Q_{1}(\lambda_{1},\lambda_{2},\hat{\mathbf{\Theta}},\hat{\mathbf{\Gamma}})\leq Q_{1}(\lambda_{1},\lambda_{2},\mathbf{\Theta},\mathbf{\Gamma}).
	\end{align*}
	Let $0<\delta^{*}\leq \delta\min\left(\lambda_{1},\lambda_{1}^{-1}\right)$, and define $(\tilde{\mathbf{\Theta}},\tilde{\mathbf{\Gamma}})=(\lambda_{1}\hat{\mathbf{\Theta}},\lambda_{1}^{-1}\hat{\mathbf{\Gamma}})$. Then, for any $(\mathbf{\Theta},\mathbf{\Gamma})$ satisfying
	\begin{align*}
		\|\mathbf{\Theta}-\tilde{\mathbf{\Theta}}\|_{1}+\sum_{k=1}^{K}\|\mathbf{\Gamma}^{(k)}-\tilde{\mathbf{\Gamma}}^{(k)}\|_{1}<\delta^{*},
	\end{align*}
	we have
	\begin{align*}
		\|\lambda_{1}^{-1}\mathbf{\Theta}-\hat{\mathbf{\Theta}}\|_{1}+\sum_{k=1}^{K}\|\lambda_{1}\mathbf{\Gamma}^{(k)}-\hat{\mathbf{\Gamma}}^{(k)}\|_{1}
		\leq\frac{\|\mathbf{\Theta}-\tilde{\mathbf{\Theta}}\|_{1}+\sum_{k=1}^{K}\|\mathbf{\Gamma}^{(k)}-\tilde{\mathbf{\Gamma}}^{(k)}\|_{1}}{\min\left(\lambda_{1},\lambda_{1}^{-1}\right)}\leq\delta.
	\end{align*}
	Thus
	\begin{align*}
		Q_{1}(\lambda_{1},\lambda_{2},\hat{\mathbf{\Theta}},\hat{\mathbf{\Gamma}})\leq Q_{1}(\lambda_{1},\lambda_{2},\lambda_{1}^{-1}\mathbf{\Theta},\lambda_{1}\mathbf{\Gamma})\Rightarrow Q_{2}(\lambda_{1}\lambda_{2},\tilde{\mathbf{\Theta}},\tilde{\mathbf{\Gamma}})\leq Q_{2}(\lambda_{1}\lambda_{2},\mathbf{\Theta},\mathbf{\Gamma}),
	\end{align*}
	which means that $(\tilde{\mathbf{\Theta}},\tilde{\mathbf{\Gamma}})$ is a local minimizer of \eqref{obj:2}. The other direction is proven similarly.
\end{proof}

\medskip
\begin{lemma}\label{lemma:2}
	Suppose $(\hat{\mathbf{\Theta}},\hat{\mathbf{\Gamma}})$ is a local minimizer of \eqref{obj:2} and $\hat{\mathbf{\Omega}}^{(k)}_{jl}=\hat{\theta}_{jl}\hat{\mathbf{\Gamma}}_{jl}^{(k)}$ for all $j,l,k$. Then, for $j,l\in\mathcal{V}$, $j\ne l$, the following are true:
	\begin{enumerate}
		\item $\hat{\theta}_{jl}=0$ if and only if $\hat{\mathbf{\Gamma}}_{jl}^{(k)}=0$ for all $k=1,\ldots,K$.
		\item If $\hat{\mathbf{\theta}}_{jl}\ne0$, then $\hat{\theta}_{jl}=\left(\lambda\sum_{k=1}^{K}\|\hat{\mathbf{\Omega}}_{jl}^{(k)}\|_{F}\right)^{1/2}$, where $\lambda=\lambda_{1}\lambda_{2}$.
	\end{enumerate}
	
\end{lemma}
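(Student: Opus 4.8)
The plan is to exploit the fact that the block $\mathbf{\Gamma}_{jl}^{(k)}$ enters the loss $\sum_{k}[\tr(\hat{\mathbf{\Sigma}}^{(k)}\mathbf{\Omega}^{(k)})-\log\det\mathbf{\Omega}^{(k)}]$ only through the product $\mathbf{\Omega}_{jl}^{(k)}=\theta_{jl}\mathbf{\Gamma}_{jl}^{(k)}$. Consequently the loss is invariant under the rescaling $(\theta_{jl},\mathbf{\Gamma}_{jl}^{(k)})\mapsto(c\,\theta_{jl},c^{-1}\mathbf{\Gamma}_{jl}^{(k)})$ for $c>0$, and when $\theta_{jl}=0$ it does not depend on $\mathbf{\Gamma}_{jl}^{(k)}$ at all. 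Since each perturbation I will use alters only the blocks indexed by the single unordered pair $(j,l)$---and, when $\theta_{jl}=0$, leaves $\mathbf{\Omega}^{(k)}$ entirely unchanged---positive definiteness and the symmetry constraints $\theta_{lj}=\theta_{jl}$, $\mathbf{\Gamma}_{lj}^{(k)}=\mathbf{\Gamma}_{jl}^{(k)\intercal}$ are automatically preserved, so every displacement stays feasible. This reduces each claim to an elementary optimization in the penalty terms alone.

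For the first claim I argue by contradiction in each direction. If $\hat{\theta}_{jl}=0$ but $\hat{\mathbf{\Gamma}}_{jl}^{(k_0)}\ne\mathbf{0}$ for some $k_0$, then shrinking this block along $s\mapsto(1-s)\hat{\mathbf{\Gamma}}_{jl}^{(k_0)}$ leaves the loss fixed (because $\mathbf{\Omega}_{jl}^{(k_0)}$ stays $\mathbf{0}$) while strictly decreasing the penalty $\lambda_{1}\lambda_{2}\|\mathbf{\Gamma}_{jl}^{(k_0)}\|_{F}$, contradicting local minimality; hence all $\hat{\mathbf{\Gamma}}_{jl}^{(k)}=\mathbf{0}$. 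Conversely, if every $\hat{\mathbf{\Gamma}}_{jl}^{(k)}=\mathbf{0}$ but $\hat{\theta}_{jl}>0$, then $\mathbf{\Omega}_{jl}^{(k)}=\mathbf{0}$ independently of $\theta_{jl}$, so decreasing $\theta_{jl}$ toward $0$ (feasible since $\theta_{jl}\ge0$) strictly lowers the term $\theta_{jl}$ without changing anything else, again contradicting local minimality; hence $\hat{\theta}_{jl}=0$.

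For the second claim, suppose $\hat{\theta}_{jl}\ne0$. By the first claim some $\hat{\mathbf{\Gamma}}_{jl}^{(k)}\ne\mathbf{0}$, so $S:=\sum_{k}\|\hat{\mathbf{\Omega}}_{jl}^{(k)}\|_{F}>0$. I then follow the scaling path $c\mapsto(\theta_{jl}=c,\ \mathbf{\Gamma}_{jl}^{(k)}=c^{-1}\hat{\mathbf{\Omega}}_{jl}^{(k)})$ for $c$ in a neighborhood of $\hat{\theta}_{jl}$; this holds the products $\mathbf{\Omega}_{jl}^{(k)}=\hat{\mathbf{\Omega}}_{jl}^{(k)}$ fixed, hence the loss and every other coordinate fixed, and passes through the local minimizer at $c=\hat{\theta}_{jl}$. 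Along this path the objective reduces, up to an additive constant and the symmetric counting factor, to $f(c)=c+\lambda S/c$ with $\lambda=\lambda_{1}\lambda_{2}$. Local minimality of $(\hat{\mathbf{\Theta}},\hat{\mathbf{\Gamma}})$ forces $\hat{\theta}_{jl}$ to be a local minimizer of $f$ on $(0,\infty)$; since $f''(c)=2\lambda S/c^{3}>0$ there, $f$ is strictly convex and its unique stationary point $c=(\lambda S)^{1/2}$ is the global minimizer, giving $\hat{\theta}_{jl}=(\lambda\sum_{k}\|\hat{\mathbf{\Omega}}_{jl}^{(k)}\|_{F})^{1/2}$.

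The step requiring the most care is the reduction in the second claim: I must be sure that local optimality over the full, constrained domain of \eqref{obj:2} genuinely restricts to local optimality of the univariate $f$. This is precisely what the scaling invariance of the loss buys---by moving only within the one-parameter family that preserves $\mathbf{\Omega}_{jl}^{(k)}$, I eliminate the loss from the comparison so that only $f$ varies, and the feasibility bookkeeping above guarantees these moves remain admissible. The symmetry and positive-definiteness constraints never bind, because each perturbation touches a single symmetric block pair and, in the first claim, does not disturb $\mathbf{\Omega}^{(k)}$ at all.
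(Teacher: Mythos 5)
Your proof is correct and follows essentially the same route as the paper: both parts rest on the scaling family $(\theta_{jl},\mathbf{\Gamma}_{jl}^{(k)})\mapsto(c\,\theta_{jl},c^{-1}\mathbf{\Gamma}_{jl}^{(k)})$, which leaves the loss and feasibility untouched and reduces everything to comparing penalty values. The only difference is presentational: where you minimize the strictly convex univariate function $f(c)=c+\lambda S/c$ and identify its unique stationary point $(\lambda S)^{1/2}$, the paper runs the identical comparison as a two-sided contradiction, using an explicit perturbation factor $\delta^{*}$ slightly above (or below) $1$ along the same path.
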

\begin{proof}
	1. If $\theta_{jl}$ is 0, then $\mathbf{\Gamma}_{jl}^{(1)},\ldots,\mathbf{\Gamma}_{jl}^{(K)}$ only appear in the third term in \eqref{obj:2}. Thus, in order to minimize $Q_{2}$, we need $\mathbf{\Gamma}_{jl}^{(k)}=\mathbf{0}$, for all $k=1,\ldots,K$. The other direction is similar.
	
	2. Suppose $\hat{\theta}_{jl}\ne 0$ and let
	\begin{align*}
		c=\frac{\left(\lambda\sum_{k=1}^{K}\|\hat{\mathbf{\Omega}}_{jl}^{(k)}\|_{F}\right)^{1/2}}{\hat{\theta}_{jl}}.
	\end{align*}
	We will show $c=1$. By definition,
	\begin{align*}
		\hat{\mathbf{\Gamma}}_{jl}^{(k)}=c\frac{\hat{\mathbf{\Omega}}_{jl}^{(k)}}{\left(\lambda\sum_{k=1}^{K}\|\hat{\mathbf{\Omega}}_{jl}^{(k)}\|_{F}\right)^{1/2}}.
	\end{align*}
	Suppose $c>1$. Since $(\hat{\mathbf{\Theta}},\hat{\mathbf{\Gamma}})$ is a local minimizer of $Q_{2}(\lambda,\cdot,\cdot)$, there exists $\delta>0$ such that for all $(\mathbf{\Theta},\mathbf{\Gamma})$ with
	\begin{align*}
		\|\mathbf{\Theta}-\hat{\mathbf{\Theta}}\|_{1}+\sum_{k=1}^{K}\|\mathbf{\Gamma}^{(k)}-\hat{\mathbf{\Gamma}}^{(k)}\|_{1}<\delta
	\end{align*}
	we have $Q_{2}(\lambda,\hat{\mathbf{\Theta}},\hat{\mathbf{\Gamma}})\leq Q_{2}(\lambda,\mathbf{\Theta},\mathbf{\Gamma})$.
	
	Then there exists $\delta^{*}\in(1,c)$, slightly greater than 1, such that for $(\tilde{\mathbf{\Theta}},\tilde{\mathbf{\Gamma}})$ defined by
	\begin{align*}
		\begin{cases}
			\tilde{\theta}_{j'l'}=\hat{\theta}_{j'l'}\text{ and }\tilde{\mathbf{\Gamma}}_{j'l'}^{(k)}=\hat{\mathbf{\Gamma}}_{j'l'}^{(k)}, &(j',l')\ne(j,l)\\
			\tilde{\theta}_{jl}=\delta^{*}\hat{\theta}_{jl}\text{ and }\tilde{\mathbf{\Gamma}}_{jl}^{(k)}=\frac{1}{\delta^{*}}\hat{\mathbf{\Gamma}}_{jl}^{(k)}
		\end{cases}
	\end{align*}
	we have
	\begin{align*}
		\|\tilde{\mathbf{\Theta}}-\hat{\mathbf{\Theta}}\|_{1}+\sum_{k=1}^{K}\|\tilde{\mathbf{\Gamma}}^{(k)}-\hat{\mathbf{\Gamma}}^{(k)}\|_{1}<\delta.
	\end{align*}
	But this implies
	\begin{align*}
		&Q_{2}(\lambda,\hat{\mathbf{\Theta}},\hat{\mathbf{\Gamma}})-Q_{2}(\lambda,\tilde{\mathbf{\Theta}},\tilde{\mathbf{\Gamma}})\\
		=&(1-\delta^{*})\hat{\theta}_{jl}+\left(1-\frac{1}{\delta^{*}}\right)\lambda\sum_{k=1}^{K}\|\hat{\mathbf{\Gamma}}_{jl}^{(k)}\|_{F}\\
		=&\frac{1}{c}(\delta^{*}-1)\left(\frac{c^{2}}{\delta^{*}}-1\right)\left(\lambda\sum_{k=1}^{K}\|\hat{\mathbf{\Omega}}_{jl}^{(k)}\|_{F}\right)^{1/2}>0,
	\end{align*}
	which is impossible because $(\hat{\mathbf{\Theta}},\hat{\mathbf{\Gamma}})$ is a local minimizer. Hence $c\leq 1$. Following the same argument we can show $c\geq 1$. Thus $c=1$.
\end{proof}

We are now ready to establish the equivalence between the objective functions \eqref{obj:1} and \eqref{obj:3}, for which it suffices to show the equivalence of \eqref{obj:3} and \eqref{obj:2}.

\medskip
\begin{lemma}\label{lemma:4}
	Let $(\hat{\mathbf{\Theta}},\hat{\mathbf{\Gamma}})$ be a local minimizer of \eqref{obj:2}. Then, there exists a local minimizer $\hat{\mathbf{\Omega}}$ of \eqref{obj:3} such that $\hat{\mathbf{\Omega}}_{jl}^{(k)}=\hat{\theta}_{jl}\hat{\mathbf{\Gamma}}_{jl}^{(k)}$ for all $j,l,k$. Conversely, let $\hat{\mathbf{\Omega}}$ be a local minimizer of \eqref{obj:3}. Then, there exists a local minimizer $(\hat{\mathbf{\Theta}},\hat{\mathbf{\Gamma}})$ of \eqref{obj:2} such that $\hat{\theta}_{jl}\hat{\mathbf{\Gamma}}_{jl}^{(k)}=\hat{\mathbf{\Omega}}_{jl}^{(k)}$ for all $j,l,k$. 
\end{lemma}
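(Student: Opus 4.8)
The plan is to establish the equivalence between objective functions \eqref{obj:3} and \eqref{obj:2}, which combined with Lemma~\ref{lemma:1} (equivalence of \eqref{obj:1} and \eqref{obj:2}) yields the desired equivalence between \eqref{obj:1} and \eqref{obj:3}. The key tool is Lemma~\ref{lemma:2}, which tells us that at a local minimizer of \eqref{obj:2}, the common factor $\hat{\theta}_{jl}$ is completely determined by the product $\hat{\theta}_{jl}\hat{\mathbf{\Gamma}}_{jl}^{(k)}$; specifically either both vanish or $\hat{\theta}_{jl}=(\lambda\sum_{k}\|\hat{\mathbf{\Omega}}_{jl}^{(k)}\|_{F})^{1/2}$. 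This is precisely what is needed to eliminate the redundant variable $\mathbf{\Theta}$ and collapse the two-penalty form into the single-penalty square-root form.

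First I would prove the forward direction. Given a local minimizer $(\hat{\mathbf{\Theta}},\hat{\mathbf{\Gamma}})$ of \eqref{obj:2}, define $\hat{\mathbf{\Omega}}_{jl}^{(k)}=\hat{\theta}_{jl}\hat{\mathbf{\Gamma}}_{jl}^{(k)}$. The main calculation is to substitute the characterization from Lemma~\ref{lemma:2} into the penalty of \eqref{obj:2} and verify that it equals the penalty of \eqref{obj:3} evaluated at $\hat{\mathbf{\Omega}}$. Indeed, on edges where $\hat{\theta}_{jl}\ne0$ we have $\hat{\theta}_{jl}+\lambda\sum_{k}\|\hat{\mathbf{\Gamma}}_{jl}^{(k)}\|_{F}=(\lambda\sum_{k}\|\hat{\mathbf{\Omega}}_{jl}^{(k)}\|_{F})^{1/2}+\lambda\sum_{k}\|\hat{\mathbf{\Gamma}}_{jl}^{(k)}\|_{F}$, and since $\|\hat{\mathbf{\Omega}}_{jl}^{(k)}\|_{F}=\hat{\theta}_{jl}\|\hat{\mathbf{\Gamma}}_{jl}^{(k)}\|_{F}$, a short computation shows this equals $2(\lambda\sum_{k}\|\hat{\mathbf{\Omega}}_{jl}^{(k)}\|_{F})^{1/2}$, matching the coefficient in \eqref{obj:3}. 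On edges where both vanish, both penalties contribute zero. Thus the two objective values agree at the candidate point. To upgrade this pointwise agreement to ``$\hat{\mathbf{\Omega}}$ is a local minimizer,'' I would argue by contradiction: if some nearby $\mathbf{\Omega}$ achieved a strictly smaller value of \eqref{obj:3}, I would lift it back to a feasible point $(\mathbf{\Theta},\mathbf{\Gamma})$ of \eqref{obj:2} (by setting $\theta_{jl}=(\lambda\sum_{k}\|\mathbf{\Omega}_{jl}^{(k)}\|_{F})^{1/2}$ and $\mathbf{\Gamma}_{jl}^{(k)}=\mathbf{\Omega}_{jl}^{(k)}/\theta_{jl}$ on nonzero edges) lying in the $\delta$-neighborhood, contradicting the local minimality of $(\hat{\mathbf{\Theta}},\hat{\mathbf{\Gamma}})$.

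The converse direction proceeds symmetrically: given a local minimizer $\hat{\mathbf{\Omega}}$ of \eqref{obj:3}, I would construct $(\hat{\mathbf{\Theta}},\hat{\mathbf{\Gamma}})$ via the lifting map above, check feasibility (including $\theta_{jl}\geq0$ and the symmetry constraints $\theta_{lj}=\theta_{jl}$, $\mathbf{\Gamma}_{lj}^{(k)\,\intercal}=\mathbf{\Gamma}_{jl}^{(k)}$), confirm the objective values match, and again transfer local minimality by pushing a hypothetical improving perturbation forward through the product map $(\mathbf{\Theta},\mathbf{\Gamma})\mapsto\hat{\theta}_{jl}\hat{\mathbf{\Gamma}}_{jl}^{(k)}$.

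I expect the main obstacle to be the transfer of local minimality across the change of variables rather than the algebraic identity of the objective values, which is essentially immediate from Lemma~\ref{lemma:2}. The subtlety is that the correspondence $(\mathbf{\Theta},\mathbf{\Gamma})\mapsto\mathbf{\Omega}$ is neither injective nor a homeomorphism globally, so one must verify that small perturbations in the $\|\cdot\|_{1}$-neighborhood on one side map into controllable perturbations on the other. The lifting map involves a square root and a division by $\theta_{jl}$, which is only Lipschitz away from $\theta_{jl}=0$; care is therefore needed at edges where $\hat{\theta}_{jl}=0$, where I would either restrict attention to perturbations that keep such blocks at zero (using part~1 of Lemma~\ref{lemma:2}) or handle them separately by noting that the penalty in \eqref{obj:3} is continuous and vanishes there. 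As long as the neighborhood radius $\delta$ is chosen small enough relative to the minimal nonzero $\hat{\theta}_{jl}$, these continuity estimates go through and both directions close.
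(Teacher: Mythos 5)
Your proposal follows essentially the same route as the paper: use Lemma~\ref{lemma:2} to get the pointwise identity $Q_{2}(\lambda,\hat{\mathbf{\Theta}},\hat{\mathbf{\Gamma}})=Q_{3}(\lambda,\hat{\mathbf{\Omega}})$, then transfer local minimality through the square-root lifting map $\theta_{jl}=\left(\lambda\sum_{k=1}^{K}\|\mathbf{\Omega}_{jl}^{(k)}\|_{F}\right)^{1/2}$, $\mathbf{\Gamma}_{jl}^{(k)}=\mathbf{\Omega}_{jl}^{(k)}/\theta_{jl}$, with the perturbation radius in $\mathbf{\Omega}$-space shrunk (quadratically, since the square root is only H\"older-$1/2$ at zero blocks) relative to the $\delta$-neighborhood of $(\hat{\mathbf{\Theta}},\hat{\mathbf{\Gamma}})$ and to the minimal nonzero $\sum_{k=1}^{K}\|\hat{\mathbf{\Omega}}_{jl}^{(k)}\|_{F}$. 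The paper carries out precisely the continuity estimates you anticipate (its constants $a$, $b$, $c$ and the bounds $I_{1}$, $I_{2}$), treating the cases $\hat{\theta}_{jl}=0$ and $\hat{\theta}_{jl}\ne 0$ separately, exactly as you outline.
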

\begin{proof}
	Let $Q_{3}(\lambda_{1}\lambda_{2},\mathbf{\Omega})$ denote the objective function \eqref{obj:3}. Suppose $(\hat{\mathbf{\Theta}},\hat{\mathbf{\Gamma}})$ is a local minimizer of \eqref{obj:2}. Then, there exists $\delta>0$ such that for all $(\mathbf{\Theta},\mathbf{\Gamma})$ with
	\begin{align*}
		\|\mathbf{\Theta}-\hat{\mathbf{\Theta}}\|_{1}+\sum_{k=1}^{K}\|\mathbf{\Gamma}^{(k)}-\hat{\mathbf{\Gamma}}^{(k)}\|_{1}<\delta,
	\end{align*}
	we have
	\begin{align*}
		Q_{2}(\lambda,\hat{\mathbf{\Theta}},\hat{\mathbf{\Gamma}})\leq Q_{2}(\lambda,\mathbf{\Theta},\mathbf{\Gamma}).
	\end{align*}
	Let $\hat{\mathbf{\Omega}}$ be the estimator associated with $(\hat{\mathbf{\Theta}},\hat{\mathbf{\Gamma}})$, that is $\hat{\mathbf{\Omega}}_{jl}^{(k)}=\hat{\theta}_{jl}\hat{\mathbf{\Gamma}}_{jl}^{(k)}$ for all $j,l,k$. In order to find a neighborhood where $\hat{\mathbf{\Omega}}$ is a minimizer we need to define the constants which will appear in the course of the proof. Let
	\begin{align*}
		a&=\min\left\{\sum_{k=1}^{K}\|\hat{\mathbf{\Omega}}_{jl}^{(k)}\|_{F} : j,l\in\mathcal{V},\hat{\theta}_{jl}\ne0 \right\},\\
		b&=\max\left\{\sum_{k=1}^{K}\|\hat{\mathbf{\Omega}}_{jl}^{(k)}\|_{F} : j,l\in\mathcal{V} \right\},
	\end{align*}
	and
	\begin{align*}
		c=2p^{2}\max\left\{\lambda^{1/2},\frac{M}{\lambda^{1/2}},\left(\frac{\lambda}{2a}\right)^{1/2},\left(\frac{M^{2}}{2a}\right)^{1/2}+\left[\left(\frac{2M^{2}}{a\lambda}\right)^{1/2}+\left(\frac{b M^{2}}{a^{2}}\right)^{1/2}\right]\right\}.
	\end{align*}
	Let $0<\delta^{*}<\min\left(\frac{a}{2},1,c^{-2}\delta^{2}\right)$, and $\mathbf{\Delta}=(\mathbf{\Delta}^{(1)},\ldots,\mathbf{\Delta}^{(K)})$, where $\mathbf{\Delta}^{(k)}=(\mathbf{\Delta}_{jl}^{(k)})\in\mathbb{R}^{pM\times pM}$, $\mathbf{\Delta}_{jl}^{(k)}\in\mathbb{R}^{M\times M}$, that satisfies
	\begin{align*}
		0<\|\mathbf{\Delta}_{jl}^{(k)}\|_{F}<\min\left\{\|\hat{\mathbf{\Omega}}_{jl}^{(k)}\|_{F}:j,l\in\mathcal{V},\hat{\theta}_{jl}\ne 0\right\},
	\end{align*}
	for all $j,l,k$, and $\sum_{k=1}^{K}\|\mathbf{\Delta}^{(k)}\|_{F}<\delta^{*}$. Let $\tilde{\mathbf{\Omega}}=\hat{\mathbf{\Omega}}+\mathbf{\Delta}$. Then
	\begin{align*}
		\sum_{k=1}^{K}\|\tilde{\mathbf{\Omega}}^{(k)}-\hat{\mathbf{\Omega}}^{(k)}\|_{1}<\delta^{*}.
	\end{align*}
	This means that $\tilde{\mathbf{\Omega}}$ is a generic element of the ball with radius less than $\delta^{*}$ and center $\hat{\mathbf{\Omega}}$.
	
	Define $(\tilde{\mathbf{\Theta}},\tilde{\mathbf{\Gamma}})$ by 
	\begin{align*}
		\tilde{\theta}_{jl}=\left(\lambda\sum_{k=1}^{K}\|\hat{\mathbf{\Omega}}_{jl}^{(k)}+\mathbf{\Delta}_{jl}^{(k)}\|_{F}\right)^{1/2}\text{ and }\tilde{\mathbf{\Gamma}}_{jl}^{(k)}=\frac{\hat{\mathbf{\Omega}}_{jl}^{(k)}+\mathbf{\Delta}_{jl}^{(k)}}{\left(\lambda\sum_{k=1}^{K}\|\hat{\mathbf{\Omega}}_{jl}^{(k)}+\mathbf{\Delta}_{jl}^{(k)}\|_{F}\right)^{1/2}},
	\end{align*}
	for all $j,l,k$. By Lemma \ref{lemma:2}, 
	\begin{align*}
		Q_{2}(\lambda,\hat{\mathbf{\Theta}},\hat{\mathbf{\Gamma}})=Q_{3}(\lambda,\hat{\mathbf{\Omega}}),
	\end{align*}
	and by the definition of $(\tilde{\mathbf{\Theta}},\tilde{\mathbf{\Gamma}})$,
	\begin{align*}
		Q_{2}(\lambda,\tilde{\mathbf{\Theta}},\tilde{\mathbf{\Gamma}})=Q_{3}(\lambda,\tilde{\mathbf{\Omega}}).
	\end{align*}
	
	If $\hat{\theta}_{jl}=0$, then $\hat{\mathbf{\Gamma}}_{jl}^{(k)}=\hat{\mathbf{\Omega}}_{jl}^{(k)}=\mathbf{0}$ for all k. Hence,
	\begin{align*}
		|\tilde{\theta}_{jl}-\hat{\theta}_{jl}|= \left(\lambda\sum_{k=1}^{K}\|\mathbf{\Delta}_{jl}^{(k)}\|_{F}\right)^{1/2}\leq \left(\lambda\sum_{k=1}^{K}\|\mathbf{\Delta}^{(k)}\|_{F}\right)^{1/2}<\lambda^{1/2}\delta^{*1/2},
	\end{align*}
	and
	\begin{align*}
		\sum_{k=1}^{K}\|\tilde{\mathbf{\Gamma}}_{jl}^{(k)}-\hat{\mathbf{\Gamma}}_{jl}^{(k)}\|_{1}&=\frac{\sum_{k=1}^{K}\|\mathbf{\Delta}_{jl}^{(k)}\|_{1}}{\left(\lambda\sum_{k=1}^{K}\|\mathbf{\Delta}_{jl}^{(k)}\|_{F}\right)^{1/2}}\\
		&\leq \frac{M}{\lambda^{1/2}}\frac{\sum_{k=1}^{K}\|\mathbf{\Delta}_{jl}^{(k)}\|_{F}}{\left(\sum_{k=1}^{K}\|\mathbf{\Delta}_{jl}^{(k)}\|_{F}\right)^{1/2}}\\
		&\leq\frac{M}{\lambda^{1/2}}\delta^{*1/2}.	
	\end{align*}
	If $\hat{\theta}_{jl}\ne 0$, then
	\begin{align*}
		|\tilde{\theta}_{jl}-\hat{\theta}_{jl}|&\leq\frac{\lambda^{1/2}}{2}\frac{\sum_{k=1}^{K}\|\mathbf{\Delta}_{jl}^{(k)}\|_{F}}{\left(\sum_{k=1}^{K}\|\hat{\mathbf{\Omega}}_{jl}^{(k)}\|_{F}-\sum_{k=1}^{K}\|\mathbf{\Delta}_{jl}^{(k)}\|_{F}\right)^{1/2}}\\
		&\leq\frac{\lambda^{1/2}}{2}\frac{\delta^{*1/2}}{\sqrt{a-\frac{a}{2}}}\\
		&=\left(\frac{\lambda}{2a}\right)^{1/2}\delta^{*1/2},
	\end{align*}
	and 
	\begin{align*}
		\sum_{k=1}^{K}\|\tilde{\mathbf{\Gamma}}_{jl}^{(k)}-\hat{\mathbf{\Gamma}}_{jl}^{(k)}\|_{1}\leq I_{1}+I_{2},
	\end{align*}
	where
	\begin{align*}
		I_{1}&=\lambda^{-1/2}\frac{\sum_{k=1}^{K}\|\mathbf{\Delta}_{jl}^{(k)}\|_{1}}{\left(\sum_{k=1}^{K}\|\hat{\mathbf{\Omega}}_{jl}^{(k)}+\mathbf{\Delta}_{jl}^{(k)}\|_{F}\right)^{1/2}}\\
		&\leq\frac{M}{\lambda^{1/2}}\frac{\sum_{k=1}^{K}\|\mathbf{\Delta}_{jl}^{(k)}\|_{F}}{\left(\sum_{k=1}^{K}\|\hat{\mathbf{\Omega}}_{jl}^{(k)}\|_{F}-\sum_{k=1}^{K}\|\mathbf{\Delta}_{jl}^{(k)}\|_{F}\right)^{1/2}}\\
		&\leq \left(\frac{M^{2}}{2a}\right)^{1/2}\delta^{*1/2},
	\end{align*}
	and
	\begin{align*}
		I_{2}&=\sum_{k=1}^{K}\|\hat{\mathbf{\Omega}}_{jl}^{(k)}\|_{1}\frac{\left|\left(\sum_{k=1}^{K}\|\hat{\mathbf{\Omega}}_{jl}^{(k)}+\mathbf{\Delta}_{jl}^{(k)}\|_{F}\right)^{1/2}-\left(\sum_{k=1}^{K}\|\hat{\mathbf{\Omega}}_{jl}^{(k)}\|_{F}\right)^{1/2}\right|}{\left(\sum_{k=1}^{K}\|\hat{\mathbf{\Omega}}_{jl}^{(k)}\|_{F}\right)^{1/2}\left(\sum_{k=1}^{K}\|\hat{\mathbf{\Omega}}_{jl}^{(k)}+\mathbf{\Delta}_{jl}^{(k)}\|_{F}\right)^{1/2}}\\
		&\leq M\left(\sum_{k=1}^{K}\|\hat{\mathbf{\Omega}}_{jl}^{(k)}\|_{F}\right)^{1/2}\frac{|\tilde{\theta}_{jl}-\hat{\theta}_{jl}|}{\left(\sum_{k=1}^{K}\|\hat{\mathbf{\Omega}}_{jl}^{(k)}\|_{F}-\sum_{k=1}^{K}\|\hat{\mathbf{\Delta}}_{jl}^{(k)}\|_{F}\right)^{1/2}}\\
		&\leq \left[\left(\frac{2M^{2}}{a\lambda}\right)^{1/2}+\left(\frac{b M^{2}}{a^{2}}\right)^{1/2}\right]\delta^{*1/2}.
	\end{align*}
	Therefore,
	\begin{align*}
		&\|\tilde{\mathbf{\Theta}}-\hat{\mathbf{\Theta}}\|_{1}+\sum_{k=1}^{K}\|\tilde{\mathbf{\Gamma}}^{(k)}-\hat{\mathbf{\Gamma}}\|_{1}\\
		&< p^{2}\max\left\{\lambda^{1/2},\left(\frac{\lambda}{2a}\right)^{1/2}\right\}\delta^{*1/2}\\
		&+p^{2}\max\left\{\frac{M}{\lambda^{1/2}},\left(\frac{M^{2}}{2a}\right)^{1/2}+\left[\left(\frac{2M^{2}}{a\lambda}\right)^{1/2}+\left(\frac{b M^{2}}{a^{2}}\right)^{1/2}\right]\right\}\delta^{*1/2}\\
		&\leq c\delta^{*1/2}\\
		&<\delta.
	\end{align*}
	Thus
	\begin{align*}
		Q_{2}(\lambda,\hat{\mathbf{\Theta}},\hat{\mathbf{\Gamma}})\leq Q_{2}(\lambda,\tilde{\mathbf{\Theta}},\tilde{\mathbf{\Gamma}})\Rightarrow Q_{3}(\lambda,\hat{\mathbf{\Omega}})\leq Q_{3}(\lambda,\tilde{\mathbf{\Omega}}),
	\end{align*}
	which means that $\hat{\mathbf{\Omega}}$ is a local minimizer of \eqref{obj:3}. The other direction is proven similarly.
\end{proof}

Combining the results from Lemmas \ref{lemma:2} and \ref{lemma:4} yields Theorem~\ref{theorem:2}.

\section{Important norm inequalities}\label{appendix:b}

\begin{lemma}
	For $1\leq j,l\leq p$, let $\mathbf{A}=(\mathbf{A}_{jl})$ and  $\mathbf{B}=(\mathbf{B}_{jl})$ be block matrices with $\mathbf{A}_{jl},\mathbf{B}_{jl}\in\mathbb{R}^{M\times M}$, let $\mathbf{u}=(\mathbf{u}_{j})$ be the block vector with $\mathbf{u}_{j}\in\mathbb{R}^{M}$, and let $\mathbf{x}=(\mathbf{x}_{j})$ and $\mathbf{y}=(\mathbf{y}_{j})$ be block matrices with $j$-th blocks $\mathbf{x}_{j},\mathbf{y}_{j}\in\mathbb{R}^{M\times M}$. The following norm properties hold:
	\begin{subequations}
		\begin{align}
			\|\mathbf{A}\|_{\max}^{(M)}&=\|\vect(\mathbf{A})\|_{\max}^{(M^{2})}\label{norm:1}\\
			\|\mathbf{A}\mathbf{u}\|_{\max}^{(M)}&\leq \|\mathbf{A}\|_{\infty}^{(M)}\,\|\mathbf{u}\|_{\max}^{(M)}\label{norm:2}\\
			\|\mathbf{x}^{\intercal}\mathbf{y}\|_{F}&\leq \|\mathbf{x}\|_{\max}^{(M)}\,\|\mathbf{y}\|_{1}^{(M)}\label{norm:3}\\
			\|\mathbf{A}\mathbf{x}\|_{\max}^{(M)}&\leq\|\mathbf{A}\|_{\max}^{(M)}\,\|\mathbf{x}\|_{1}^{(M)}\label{norm:4}\\
			\|\mathbf{A}\|_{\infty}^{(M)}&=\|\mathbf{A}^{\intercal}\|_{1}^{(M)}\label{norm:5}\\
			\|\mathbf{A}\mathbf{B}\|_{\infty}^{(M)}&\leq\|\mathbf{A}\|_{\infty}^{(M)}\,\|\mathbf{B}\|_{\infty}^{(M)}\label{norm:6}
		\end{align}
	\end{subequations}
\end{lemma}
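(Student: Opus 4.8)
The plan is to treat each of the six statements separately, reducing every block-level claim to an elementary fact about the Frobenius norm of ordinary matrices. The three facts I would isolate at the outset are: (i) vectorization is an isometry, $\|\vect(\mathbf{C})\|_2 = \|\mathbf{C}\|_F$; (ii) the Frobenius norm is transpose-invariant, $\|\mathbf{C}^\intercal\|_F = \|\mathbf{C}\|_F$; and (iii) the Frobenius norm is submultiplicative in the strong sense $\|\mathbf{C}\mathbf{D}\|_F \le \|\mathbf{C}\|_{\mathrm{op}}\|\mathbf{D}\|_F \le \|\mathbf{C}\|_F\|\mathbf{D}\|_F$, which also specializes to $\|\mathbf{C}\mathbf{v}\|_2 \le \|\mathbf{C}\|_F\|\mathbf{v}\|_2$ for a vector $\mathbf{v}$. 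Together with the blockwise triangle inequality, these turn each of \eqref{norm:1}--\eqref{norm:6} into the block analogue of a standard scalar-matrix norm inequality.

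First I would dispatch the two identities. For \eqref{norm:1}, since $\|\mathbf{A}\|_{\max}^{(M)} = \max_{j,l}\|\mathbf{A}_{jl}\|_F$ while the $M^2$-block max norm of $\vect(\mathbf{A})$ is $\max_{j,l}\|\vect(\mathbf{A}_{jl})\|_2$, applying fact (i) blockwise gives equality at once. For \eqref{norm:5}, writing $(\mathbf{A}^\intercal)_{jl} = \mathbf{A}_{lj}^\intercal$ and invoking (ii) blockwise yields $\|\mathbf{A}^\intercal\|_1^{(M)} = \max_l \sum_j \|\mathbf{A}_{lj}\|_F$, which after relabelling the indices is exactly $\|\mathbf{A}\|_{\infty}^{(M)}$.

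Next I would handle the three multiplication bounds \eqref{norm:2}, \eqref{norm:3}, \eqref{norm:4}, all of which follow one template: expand the relevant block of the product, apply the triangle inequality over the summation index, bound each summand by (iii), and finally split a maximum from a sum. The $j$-th block of $\mathbf{A}\mathbf{u}$ is $\sum_l \mathbf{A}_{jl}\mathbf{u}_l$, so $\|\sum_l \mathbf{A}_{jl}\mathbf{u}_l\|_2 \le \sum_l \|\mathbf{A}_{jl}\|_F\|\mathbf{u}_l\|_2 \le (\max_l\|\mathbf{u}_l\|_2)\sum_l\|\mathbf{A}_{jl}\|_F$, and the max over $j$ gives \eqref{norm:2}. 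The same manipulation applied to $\mathbf{x}^\intercal\mathbf{y} = \sum_j \mathbf{x}_j^\intercal\mathbf{y}_j$ gives \eqref{norm:3} after pulling $\max_j\|\mathbf{x}_j\|_F$ out of $\sum_j\|\mathbf{x}_j\|_F\|\mathbf{y}_j\|_F$, and applied to the $j$-th block $\sum_l \mathbf{A}_{jl}\mathbf{x}_l$ of $\mathbf{A}\mathbf{x}$ gives \eqref{norm:4} after pulling $\max_{j,l}\|\mathbf{A}_{jl}\|_F$ out. Finally, \eqref{norm:6} is the block analogue of submultiplicativity of the induced $\infty$-norm: from $(\mathbf{A}\mathbf{B})_{jl} = \sum_m \mathbf{A}_{jm}\mathbf{B}_{ml}$ I would bound $\sum_l\|(\mathbf{A}\mathbf{B})_{jl}\|_F \le \sum_m\|\mathbf{A}_{jm}\|_F\sum_l\|\mathbf{B}_{ml}\|_F \le \|\mathbf{B}\|_{\infty}^{(M)}\sum_m\|\mathbf{A}_{jm}\|_F$, then take the max over $j$.

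There is no deep obstacle here; the only difficulties are bookkeeping. One must track whether a given block carries a vector (as in \eqref{norm:2}) or a matrix (as in \eqref{norm:3} and \eqref{norm:4}), so that the correct specialization of fact (iii) is invoked, and in \eqref{norm:4} one must bound $\max_l\|\mathbf{A}_{jl}\|_F$ by the global maximum $\|\mathbf{A}\|_{\max}^{(M)}$ \emph{before} taking the max over $j$, not after. The step most prone to error is the strengthened submultiplicativity $\|\mathbf{C}\mathbf{D}\|_F \le \|\mathbf{C}\|_F\|\mathbf{D}\|_F$, whose validity rests on routing through the spectral norm; stating it once at the start renders the remaining arguments mechanical.
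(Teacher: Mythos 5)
Your proposal is correct and follows essentially the same route as the paper: each block inequality is reduced to elementary Frobenius-norm facts (vec isometry, transpose invariance, submultiplicativity) combined with the blockwise triangle inequality and splitting a maximum off a sum. If anything, your treatment of \eqref{norm:2} is more careful than the paper's, which starts from the identity $\|\mathbf{A}\mathbf{u}\|_{\max}^{(M)}=\max_{1\leq j,l\leq p}\|\mathbf{A}_{jl}\mathbf{u}_{l}\|_{2}$ (omitting the sum over $l$ in the $j$-th block), whereas you correctly expand the block as $\sum_{l}\mathbf{A}_{jl}\mathbf{u}_{l}$ and apply the triangle inequality before bounding.
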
	
\begin{proof}
	Proof of \eqref{norm:1}:
	\begin{align*}
		\|\mathbf{A}\|_{\max}^{(M)}=\max_{1\leq j,l\leq p}\|\mathbf{A}_{jl}\|_{F}=\max_{1\leq j,l\leq p}\|\text{vec}(\mathbf{A}_{jl})\|_{2}=\|\text{vec}(\mathbf{A})\|_{\max}^{(M^{2})}.
	\end{align*}
	
	\medskip
	Proof of \eqref{norm:2}: Note that for a matrix $\mathbf{G}=(g_{jl})$ and a vector $\mathbf{b}=(b_{j})$ of appropriate dimensions, 
	\begin{align*}
		\|\mathbf{Gb}\|_{2}^{2}&=\sum_{j}\sum_{l}(g_{jl}b_{l})^{2}\\
		&\leq\sum_{j}\left(\sum_{l}g_{jl}^{2}\right)\left(\sum_{l}b_{l}^{2}\right)\\
		&=\left(\sum_{j,l}g_{jl}^{2}\right)\left(\sum_{l}b_{l}^{2}\right)\\
		&=\|\mathbf{G}\|_{F}^{2}\,\|\mathbf{b}\|_{2}^{2}\\
		\Rightarrow \|\mathbf{Gb}\|_{2}&\leq \|\mathbf{G}\|_{F}\,\|\mathbf{b}\|_{2}.
	\end{align*}
	Therefore,
	\begin{align*}
		\|\mathbf{Au}\|_{\max}^{(M)}&=\max_{1\leq j,l\leq p}\|\mathbf{A}_{jl}\mathbf{u}_{l}\|_{2}\\
		&\leq \max_{1\leq j,l\leq p}\|\mathbf{A}_{jl}\|_{F}\,\|\mathbf{u}_{l}\|_{2}\\
		&\leq \left(\max_{1\leq j,l\leq p}\|\mathbf{A}_{jl}\|_{F}\right)\left(\max_{1\leq l\leq p}\|\mathbf{u}_{l}\|_{2}\right)\\
		&\leq \max_{1\leq j\leq p}\left(\sum_{l=1}^{p}\|\mathbf{A}_{jl}\|_{F}\right)\left(\max_{1\leq l\leq p}\|\mathbf{u}_{l}\|_{2}\right)\\
		&=\|\mathbf{A}\|_{\infty}^{(M)}\,\|\mathbf{u}\|_{\max}^{(M)}.
	\end{align*}
	
	\medskip
	Proof of \eqref{norm:3}:    
	\begin{align*}
		\|\mathbf{x}^{\intercal}\mathbf{y}\|_{F}&=\left\|\sum_{j=1}^{p}\mathbf{x}_{j}^{\intercal}\mathbf{y}_{j}\right\|_{F}\\
		&\leq \sum_{j=1}^{p}\|\mathbf{x}_{j}^{\intercal}\mathbf{y}_{j}\|_{F}\\
		&\leq \sum_{j=1}^{p}\|\mathbf{x}_{j}\|_{F}\,\|\mathbf{y}_{j}\|_{F}\\
		&\leq \left(\max_{1\leq l\leq p}\|\mathbf{x}_{l}\|_{F}\right)\sum_{j=1}^{p}\|\mathbf{y}_{j}\|_{F}\\
		&=\|\mathbf{x}\|_{\max}^{(M)}\,\|\mathbf{y}\|_{1}^{(M)}.
	\end{align*}

	\medskip
	Proof of \eqref{norm:4}:   
	\begin{align*}
		\|\mathbf{Ax}\|_{\max}^{(M)}&=\max_{1\leq j \leq p}\left\|\sum_{l=1}^{p}\mathbf{A}_{jl}\mathbf{x}_{l}\right\|_{F}\\
		&\leq \max_{1\leq j \leq p}\sum_{l=1}^{p}\|\mathbf{A}_{jl}\mathbf{x}_{l}\|_{F}\\
		&\leq \max_{1 \leq j \leq p}\sum_{l=1}^{p}\|\mathbf{A}_{jl}\|_{F}\,\|\mathbf{x}_{l}\|_{F}\\
		&=\max_{1\leq j \leq p}\sum_{l=1}^{p}\left(\max_{1\leq m \leq p}\|\mathbf{A}_{jm}\|_{F}\right)\|\mathbf{x}_{l}\|_{F}\\
		&= \left(\max_{1\leq j,m \leq p}\|\mathbf{A}_{jm}\|_{F}\right)\sum_{l=1}^{p}\|\mathbf{x}_{l}\|_{F}\\
		&=\|\mathbf{A}\|_{\max}^{(M)}\,\|\mathbf{x}\|_{1}^{(M)}.
	\end{align*}
	
	\medskip
	Proof of \eqref{norm:5}:  
	\begin{align*}
		\|\mathbf{A}^{\intercal}\|_{1}^{(M)}&=\max_{1\leq l\leq p}\sum_{j=1}^{p}\|(\mathbf{A}^{\intercal})_{jl}\|_{F}\\
		&=\max_{1\leq l\leq p}\sum_{j=1}^{p}\|\mathbf{A}_{lj}\|_{F}\\
		&=\max_{1\leq j \leq p}\sum_{l=1}^{p}\|\mathbf{A}_{jl}\|_{F}\\
		&=\|\mathbf{A}\|_{\infty}^{(M)}.
	\end{align*}
	
	\medskip
	Proof of \eqref{norm:6}: Note that
	\begin{align*}
		\|\mathbf{AB}\|_{F}^{2}&=\sum_{i,j}\left(\sum_{k}a_{ik}b_{kj}\right)^{2}\\
		&\leq\sum_{i,j}\left(\sum_{k}a_{ik}^{2}\right)\left(\sum_{k}b_{kj}^{2}\right)\\
		&=\left(\sum_{i,k}a_{ik}^{2}\right)\left(\sum_{l,j}b_{lj}^{2}\right)\\
		&=\|\mathbf{A}\|_{F}^{2}\,\|\mathbf{B}\|_{F}^{2}\\
		\Rightarrow \|\mathbf{AB}\|_{F}&\leq \|\mathbf{A}\|_{F}\,\|\mathbf{B}\|_{F}.
	\end{align*}
	Therefore,
	\begin{align*}
		\|\mathbf{AB}\|_{\infty}^{(M)}&=\max_{1\leq j\leq p}\sum_{l=1}^{p}\|(\mathbf{AB})_{jl}\|_{F}\\
		&=\max_{1\leq j \leq p}\sum_{l=1}^{p}\left\|\sum_{k=1}^{p}\mathbf{A}_{jk}\mathbf{B}_{kl}\right\|_{F}\\
		&\leq \max_{1\leq j\leq p}\sum_{l=1}^{p}\sum_{k=1}^{p}\|\mathbf{A}_{jk}\mathbf{B}_{kl}\|_{F}\\
		&\leq \max_{1\leq j\leq p}\sum_{l=1}^{p}\sum_{k=1}^{p}\|\mathbf{A}_{jk}\|_{F}\,\|\mathbf{B}_{kl}\|_{F}\\
		&=\max_{1\leq j\leq p}\left(\sum_{k=1}^{p}\|\mathbf{A}_{jk}\|_{F}\,\sum_{l=1}^{p}\|\mathbf{B}_{kl}\|_{F}\right)\\
		&\leq \max_{1\leq j\leq p}\left(\sum_{k=1}^{p}\|\mathbf{A}_{jk}\|_{F}\,\max_{1\leq m\leq p}\sum_{l=1}^{p}\|\mathbf{B}_{ml}\|_{F}\right)\\
		&=\|\mathbf{A}\|_{\infty}^{(M)}\,\|\mathbf{B}\|_{\infty}^{(M)}.
	\end{align*}
\end{proof}

\section{Proving Theorem~\ref{theorem:3}}\label{appendix:c}

Since the proof is the same for all $K$ graphs, $K$ is only present in the weights $\tau_{jl}$, and $K$ does not depend on $n$, we omit the superscript $k$ in this section. It is therefore implied that we are working within the $k$-th subpopulation for proving consistency.

Let $\mathbf{Z}=(\mathbf{Z}_{jl})$ be an element of the subdifferential $\partial(\sum_{j\ne l}\|\mathbf{\Omega}_{jl}\|_{F})/\partial{\mathbf{\Omega}}$, where
\begin{align}\label{subdiff}
	\mathbf{Z}_{jl}\in
	\begin{cases}
		\{\mathbf{0}\},&\quad\text{if}\quad j=l\\
		\left\{\frac{\mathbf{\Omega}_{jl}}{\|\mathbf{\Omega}_{jl}\|_{F}}\right\},&\quad\text{if}\quad j\ne l\quad\text{and}\quad\mathbf{\Omega}_{jl}\ne\mathbf{0}\\
		\{\mathbf{G}\in\mathbb{R}^{M\times M}\,:\,\|\mathbf{G}\|_{F}\leq 1\},&\quad\text{if}\quad j\ne l\quad\text{and}\quad\mathbf{\Omega}_{jl}=\mathbf{0}
	\end{cases}.
\end{align} 
Define the matrix of weights $\mathbf{T}=(\tau_{jl})$ with diagonal elements equal to 0, and the vector $\mathbf{1}_{M}=(1,\ldots,1)^{\intercal}\in\mathbb{R}^{M}$.

In the proof to follow $p_{n},M_{n},d_{n}$ will be abbreviated simply by $p,M,d$ and the $n$ subscript is going to be used only when it is meaningful. We begin by proving the existence of the solution of the adaptive fglasso problem
\begin{align}\label{adap.fglasso}
	\argmin_{\mathbf{\Omega}\succ\mathbf{0}}\left\{\tr(\hat{\mathbf{\Sigma}}\mathbf{\Omega})-\log\det(\mathbf{\Omega})+\lambda_{n}\sum_{j\ne l}\tau_{jl}\|\mathbf{\Omega}_{jl}\|_{F}\right\}
\end{align}
and providing optimality conditions for it.

\medskip
\begin{lemma}\label{lem:kkt}
	For any $\lambda_{n}>0$ and sample covariance matrix $\hat{\mathbf{\Sigma}}=(\hat{\sigma}_{jl})$ with strictly positive diagonal elements, the 
	adaptive fglasso problem \eqref{adap.fglasso} has a unique solution. Furthermore, this solution is equal to $\hat{\mathbf{\Omega}}$ if and only if
	\begin{align}\label{kkt}
		\hat{\mathbf{\Sigma}}-\hat{\mathbf{\Omega}}^{-1}+\lambda_{n}(\mathbf{T}\otimes\mathbf{1}_{M}\mathbf{1}_{M}^{\intercal})\circ\hat{\mathbf{Z}}=\mathbf{0},
	\end{align}
	where $\circ$ denotes the Hadamard product, and $\hat{\mathbf{Z}}$ is the subdifferential in \eqref{subdiff} evaluated at $\hat{\mathbf{\Omega}}$.
\end{lemma}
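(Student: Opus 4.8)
The plan is to treat \eqref{adap.fglasso} as a convex program over the open cone of symmetric positive definite matrices and to read off all three claims from the general theory of convex optimization. Write $f(\mathbf{\Omega})=g(\mathbf{\Omega})+P(\mathbf{\Omega})$, where $g(\mathbf{\Omega})=\tr(\hat{\mathbf{\Sigma}}\mathbf{\Omega})-\log\det(\mathbf{\Omega})$ is the smooth Gaussian loss and $P(\mathbf{\Omega})=\lambda_{n}\sum_{j\ne l}\tau_{jl}\|\mathbf{\Omega}_{jl}\|_{F}$ is the penalty. The term $-\log\det$ is strictly convex on the positive definite cone, $\tr(\hat{\mathbf{\Sigma}}\mathbf{\Omega})$ is linear, and $P$ is a nonnegative sum of seminorms, hence convex; therefore $f$ is strictly convex, which will give uniqueness once existence is settled. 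For existence I would extend $f$ to a closed proper convex function on all symmetric matrices by setting it to $+\infty$ off the positive definite cone, and show that its sublevel sets are compact; continuity then forces the infimum to be attained at an interior point, interior because $-\log\det(\mathbf{\Omega})\to+\infty$ as $\mathbf{\Omega}$ approaches the boundary. Finally, since $f$ is convex and the minimizer is interior, $\hat{\mathbf{\Omega}}$ solves \eqref{adap.fglasso} if and only if $\mathbf{0}\in\partial f(\hat{\mathbf{\Omega}})$, and unwinding this inclusion produces \eqref{kkt}.

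The coercivity step is the heart of the argument and the place where the positive-diagonal hypothesis is used. I would compute the recession function of the extended $f$: for a direction $\mathbf{\Delta}$ that is not positive semidefinite the recession value is $+\infty$, while for $\mathbf{\Delta}\succeq\mathbf{0}$ the $-\log\det$ contribution grows only logarithmically and drops out, leaving
\begin{align*}
	f_{\infty}(\mathbf{\Delta})=\tr(\hat{\mathbf{\Sigma}}\mathbf{\Delta})+\lambda_{n}\sum_{j\ne l}\tau_{jl}\|\mathbf{\Delta}_{jl}\|_{F}.
\end{align*}
Both summands are nonnegative, so it suffices to check that $f_{\infty}(\mathbf{\Delta})=0$ forces $\mathbf{\Delta}=\mathbf{0}$. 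Vanishing of the penalty, the weights $\tau_{jl}$ being strictly positive for $j\ne l$, makes $\mathbf{\Delta}$ block diagonal, and then $\tr(\hat{\mathbf{\Sigma}}\mathbf{\Delta})=\sum_{j}\tr(\hat{\mathbf{\Sigma}}_{jj}\mathbf{\Delta}_{jj})$ with each $\mathbf{\Delta}_{jj}\succeq\mathbf{0}$; positivity of the diagonal entries of $\hat{\mathbf{\Sigma}}$ — which, because the estimated principal component scores within a process are empirically uncorrelated, means each diagonal block $\hat{\mathbf{\Sigma}}_{jj}$ is itself positive definite — forces every $\mathbf{\Delta}_{jj}=\mathbf{0}$. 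Hence $f_{\infty}(\mathbf{\Delta})>0$ for all $\mathbf{\Delta}\ne\mathbf{0}$, the sublevel sets are compact, and a minimizer exists. This interplay, in which the off-diagonal-block penalty annihilates the directions carrying off-block mass while the positive definite diagonal blocks annihilate the block-diagonal directions, is exactly what is needed, and I expect it to be the main obstacle to write cleanly.

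With existence and strict convexity in hand, uniqueness is immediate. For the optimality condition I would invoke the sum rule for subdifferentials, valid since $g$ is differentiable on the open cone and $P$ is finite and convex: $\partial f(\hat{\mathbf{\Omega}})=\nabla g(\hat{\mathbf{\Omega}})+\partial P(\hat{\mathbf{\Omega}})$. Standard matrix calculus gives $\nabla g(\hat{\mathbf{\Omega}})=\hat{\mathbf{\Sigma}}-\hat{\mathbf{\Omega}}^{-1}$, while the block-separable penalty has a subdifferential whose $(j,l)$ block is $\lambda_{n}\tau_{jl}\hat{\mathbf{Z}}_{jl}$ with $\hat{\mathbf{Z}}_{jl}$ the Frobenius-norm subgradient in \eqref{subdiff}, and $\mathbf{0}$ on the unpenalized diagonal blocks, consistent with $\tau_{jj}=0$. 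Assembling these blocks is precisely the Hadamard product $\lambda_{n}(\mathbf{T}\otimes\mathbf{1}_{M}\mathbf{1}_{M}^{\intercal})\circ\hat{\mathbf{Z}}$, since $\mathbf{T}\otimes\mathbf{1}_{M}\mathbf{1}_{M}^{\intercal}$ has constant block $\tau_{jl}\mathbf{1}_{M}\mathbf{1}_{M}^{\intercal}$. Thus $\mathbf{0}\in\partial f(\hat{\mathbf{\Omega}})$ reads exactly as \eqref{kkt}, and because for a convex function this stationarity is both necessary and sufficient for a global minimum, the displayed equation characterizes the unique solution. The symmetry constraints cause no trouble: $\hat{\mathbf{\Omega}}^{-1}$ is symmetric, $\tau_{lj}=\tau_{jl}$, and the subgradient choice satisfies $\hat{\mathbf{Z}}_{lj}=\hat{\mathbf{Z}}_{jl}^{\intercal}$, so \eqref{kkt} indeed lives in the space of symmetric matrices.
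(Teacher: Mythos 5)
Your proof is correct and reaches all three conclusions, but by a genuinely different route from the paper. The paper first invokes Lagrangian duality to rewrite \eqref{adap.fglasso} in the constrained form $\argmin_{\mathbf{\Omega}\in\mathcal{A}}\{\tr(\hat{\mathbf{\Sigma}}\mathbf{\Omega})-\log\det(\mathbf{\Omega})\}$, where $\mathcal{A}$ is the positive definite cone intersected with a weighted $\ell_{1}$-type ball, then bounds the objective below via Hadamard's inequality $\det(\mathbf{\Omega})\leq\prod_{i}\omega_{ii}$ (this is where the positive diagonal of $\hat{\mathbf{\Sigma}}$ enters), concludes existence and uniqueness from convexity plus lower-boundedness, and finally obtains \eqref{kkt} from the interior extremum theorem, with convexity supplying sufficiency. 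You instead stay with the penalized objective, extend it to a closed proper convex function on the symmetric matrices, and get existence from a recession-function coercivity computation, uniqueness from strict convexity of $-\log\det$, and the ``if and only if'' from the subdifferential sum rule. Two observations on what each buys. First, your existence step is tighter than the paper's: lower-boundedness of a convex function does not by itself imply the infimum is attained (consider $e^{-x}$), whereas strict positivity of the recession function in every nonzero direction does; the paper's ``bounded from below, hence it has a unique minimum'' glosses over precisely the point your recession analysis settles. Second, both arguments need more than the literal hypothesis of strictly positive diagonal \emph{entries}: since the penalty involves only off-diagonal blocks, the within-block entries of $\mathbf{\Omega}_{jj}$ are unpenalized, and a singular PSD diagonal block $\hat{\mathbf{\Sigma}}_{jj}$ with positive diagonal would let the objective tend to $-\infty$ along block-diagonal PSD directions. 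You flag and fill this gap explicitly, using the empirical orthogonality of the estimated principal component scores to argue each $\hat{\mathbf{\Sigma}}_{jj}$ is diagonal with positive entries, hence positive definite; the paper's Hadamard argument silently relies on the same structure when it asserts that the off-diagonal elements are controlled and ``the only possible issue is the behavior of the objective function on the diagonal elements.'' So your write-up is not only a valid alternative but, on these two points, more careful than the original.
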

\begin{proof}
	By the Lagrangian duality, for $\lambda_{n}>0$, there is a constant $C(\lambda_{n})>0$ such that the problem \eqref{adap.fglasso} can be written in the equivalent constrained form
	\begin{align}\label{dual.adap.fglasso}
		\argmin_{\mathbf{\Omega}\in\mathcal{A}}\left\{\tr(\hat{\mathbf{\Sigma}}\mathbf{\Omega})-\log\det(\mathbf{\Omega})\right\},
	\end{align}
	where $\mathcal{A}=\{\mathbf{\Omega}\succ\mathbf{0}\,:\,\sum_{j\ne l}\tau_{jl}|\omega_{jl}|\leq C(\lambda_{n})\}$. It can be easily proved that the function
	\begin{align*}
		L:\mathcal{A}\rightarrow\mathbb{R},\quad L(\mathbf{\Omega})=\tr(\hat{\mathbf{\Sigma}}\mathbf{\Omega})-\log\det(\mathbf{\Omega})
	\end{align*} is convex \citep{boyd2004convex}. If $L$ is also bounded from below on its domain, then it has a unique minimum.
	
	Since the off-diagonal elements are bounded within an $\ell_{1}$-ball, the only possible issue is the behavior of the objective function on the diagonal elements. By Hadamard's inequality \citep[p.~35]{zhang2006schur}
	\begin{align*}
		\det(\mathbf{\Omega})\leq\prod_{i=1}^{pM}\omega_{ii}\Rightarrow \log\det(\mathbf{\Omega})\leq \sum_{i=1}^{pM}\log(\omega_{ii}).
	\end{align*}
	Thus,
	\begin{align*}
		\sum_{i=1}^{pM}\hat{\sigma}_{ii}\omega_{ii}-\log\det(\mathbf{\Omega})&\geq\sum_{i=1}^{pM}\left[\hat{\sigma}_{ii}\omega_{ii}-\log(\omega_{ii})\right]\geq pM[1+\log(\min_{i}(\hat{\sigma}_{ii})],
	\end{align*}
	which is bounded from below. Therefore, \eqref{dual.adap.fglasso} has a unique solution $\hat{\mathbf{\Omega}}$.
	
	By the interior extremum theorem \citep{spivak1980calculus}, if the global minimum of $L$ is achieved at $\hat{\mathbf{\Omega}}$, then the first derivative of $L$ at $\hat{\mathbf{\Omega}}$ will be zero, i.e.
	\begin{align*}
		\hat{\mathbf{\Sigma}}-\hat{\mathbf{\Omega}}^{-1}+\lambda_{n}(\mathbf{T}\otimes\mathbf{1}_{M}\mathbf{1}_{M}^{\intercal})\circ\hat{\mathbf{Z}}=\mathbf{0}
	\end{align*}
	The other direction is also true, since $L$ is a convex function.	
\end{proof}

Based on this lemma, we construct the primal-dual witness solution $(\tilde{\mathbf{\Omega}},\tilde{\mathbf{Z}})$ as follows:
\begin{enumerate}[label=(\alph*)]
	\item\label{step:a} We determine the matrix $\tilde{\mathbf{\Omega}}$ by solving the restricted adaptive fglasso
	\begin{align}\label{res.adap.fglasso}
		\tilde{\mathbf{\Omega}}=\argmin_{\mathbf{\Omega}\succ\mathbf{0},\mathbf{\Omega}_{\mathcal{S}^{c}}=\mathbf{0}}\left\{\tr(\hat{\mathbf{\Sigma}}\mathbf{\Omega})-\log\det(\mathbf{\Omega})+\lambda_{n}\sum_{j\ne l}\tau_{jl}\|\mathbf{\Omega}_{jl}\|_{F}\right\}.
	\end{align}
	
	\item\label{step:b} We choose $\tilde{\mathbf{Z}}_{\mathcal{S}}$ as a member of $\partial(\sum_{j\ne l}\|\tilde{\mathbf{\Omega}}_{jl}\|_{F})/\partial\mathbf{\Omega}_{\mathcal{S}}$.
	
	\item\label{step:c} For each $(j,l)\in\mathcal{S}^{c}$, we define 
	\begin{align*}
		\tilde{\mathbf{Z}}_{jl}:=\frac{1}{\lambda_{n}\tau_{jl}}\left[-\hat{\mathbf{\Sigma}}_{jl}+(\tilde{\mathbf{\Omega}}^{-1})_{jl}\right].
	\end{align*}
	
	\item\label{step:d} We verify the strict dual feasibility condition
	\begin{align*}
		\|\mathbf{Z}_{jl}\|_{F}<1\quad\text{for all}\quad (i,j)\in\mathcal{S}^{c}.
	\end{align*}
\end{enumerate}

\medskip
With step \ref{step:a} we ensure that
\begin{align*}
	\hat{\mathbf{\Sigma}}_{\mathcal{S}}-(\tilde{\mathbf{\Omega}}^{-1})_{\mathcal{S}}+\lambda_{n}(\mathbf{T}\otimes\mathbf{1}_{M}\mathbf{1}_{M}^{\intercal})_{\mathcal{S}}\circ\tilde{\mathbf{Z}}_{\mathcal{S}}=\mathbf{0},
\end{align*} 
which can be argued similarly as in Lemma \ref{lem:kkt}. The problem is that $\tilde{\mathbf{Z}}_{\mathcal{S}^{c}}$ is undefined, since in problem \eqref{res.adap.fglasso} we fix the elements $\mathbf{\Omega}_{\mathcal{S}^{c}}$ to be equal to zero. With step \ref{step:c} we define $\tilde{\mathbf{Z}}_{\mathcal{S}^{c}}$ so that $(\tilde{\mathbf{\Omega}},\tilde{\mathbf{Z}})$ is a solution of \eqref{kkt}. The only thing that remains to show is that $\tilde{\mathbf{Z}}_{\mathcal{S}^{c}}$ is an element of $\partial\left(\sum_{j\ne l}\|\mathbf{\tilde{\mathbf{\Omega}}}\|_{F}\right)/\partial\mathbf{\Omega}_{\mathcal{S}^{c}}$, which is the purpose of step \ref{step:d}. The result of the steps \ref{step:a}-\ref{step:d} is $\hat{\mathbf{\Omega}}=\tilde{\mathbf{\Omega}}$, which we need in order to show that $\hat{\mathcal{S}}\subset\mathcal{S}$.

In the analysis to follow, some additional notation is useful. We let $\mathbf{W}\in\mathbb{R}^{pM\times pM}$ denote the "effective noise" in the sample covariance matrix $\hat{\mathbf{\Sigma}}$--namely, the quantity
\begin{align}\label{effective noise}
	\mathbf{W}:=\hat{\mathbf{\Sigma}}-\mathbf{\Omega}_{0}^{-1}.
\end{align}
Second, we use $\mathbf{\Delta}=\tilde{\mathbf{\Omega}}-\mathbf{\Omega}_{0}$ to measure the discrepancy between the primal witness matrix $\tilde{\mathbf{\Omega}}$ and the truth $\mathbf{\Omega}_{0}$. Note that by the definition of $\tilde{\mathbf{\Omega}}$, $\mathbf{\Delta}_{\mathcal{S}^{c}}=\mathbf{0}$. Finally, we let $\mathbf{R}(\mathbf{\Delta})$ denote the difference of the gradient $\nabla (\log\det(\tilde{\mathbf{\Omega}}))$ from its first-order Taylor expansion around $\mathbf{\Omega}_{0}$. Using known results on the first and second derivatives of the log-determinant function \citep[p.~641]{boyd2004convex}, this remainder takes the form
\begin{align}\label{remainder}
	\mathbf{R}(\mathbf{\Delta})=\tilde{\mathbf{\Omega}}^{-1}-\mathbf{\Omega}_{0}^{-1}+\mathbf{\Omega}_{0}^{-1}\mathbf{\Delta}\mathbf{\Omega}_{0}^{-1}.
\end{align}

We begin by stating and proving a lemma that provides sufficient conditions for strict dual feasibility to hold, so that $\|\tilde{\mathbf{Z}}_{\mathcal{S}^{c}}\|_{\max}^{(M^{2})}<1$.

\medskip
\begin{lemma}[Strict dual feasibility]\label{lem:dual.feas.}
	Suppose that
	\begin{align*}
		\max\left\{\|\mathbf{W}\|_{\max}^{(M)},\|\mathbf{R}(\mathbf{\Delta})\|_{\max}^{(M)}\right\}\leq \frac{\lambda_{n}\left(a_{1}-a_{2} C_{\mathbf{\Gamma}^{2}}\right)}{2\left(1+ C_{\mathbf{\Gamma}^{2}}\right)},
	\end{align*}
	and
	\begin{align*}
		\min_{(j,l)\in\mathcal{S}^{c}}\tau_{jl}> a_{1},\quad\max_{(j,l)\in\mathcal{S}}\tau_{jl}< a_{2},
	\end{align*}
	for $a_{1},a_{2}$ specified in condition \ref{condition:weights}.	
	Then, the vector $\tilde{\mathbf{Z}}_{\mathcal{S}^{c}}$ constructed in step $(c)$ satisfies $\|\tilde{\mathbf{Z}}_{\mathcal{S}^{c}}\|_{\max}^{(M^{2})}<1$, and therefore $\tilde{\mathbf{\Omega}}=\hat{\mathbf{\Omega}}$.
\end{lemma}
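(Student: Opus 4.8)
The plan is to run the primal--dual witness argument, adapted to the $M$-block functional setting with the adaptive weights $\tau_{jl}$. The starting point is the full stationarity condition guaranteed by steps \ref{step:a}--\ref{step:c}, namely $\hat{\mathbf{\Sigma}}-\tilde{\mathbf{\Omega}}^{-1}+\lambda_{n}(\mathbf{T}\otimes\mathbf{1}_{M}\mathbf{1}_{M}^{\intercal})\circ\tilde{\mathbf{Z}}=\mathbf{0}$. First I would substitute the two decompositions $\hat{\mathbf{\Sigma}}=\mathbf{\Omega}_{0}^{-1}+\mathbf{W}$ from \eqref{effective noise} and $\tilde{\mathbf{\Omega}}^{-1}=\mathbf{\Omega}_{0}^{-1}-\mathbf{\Omega}_{0}^{-1}\mathbf{\Delta}\mathbf{\Omega}_{0}^{-1}+\mathbf{R}(\mathbf{\Delta})$ from \eqref{remainder}. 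The two copies of $\mathbf{\Omega}_{0}^{-1}$ cancel, leaving the expanded condition
\[
\mathbf{\Omega}_{0}^{-1}\mathbf{\Delta}\mathbf{\Omega}_{0}^{-1}+\mathbf{W}-\mathbf{R}(\mathbf{\Delta})+\lambda_{n}(\mathbf{T}\otimes\mathbf{1}_{M}\mathbf{1}_{M}^{\intercal})\circ\tilde{\mathbf{Z}}=\mathbf{0}.
\]
Vectorizing and using $\vect(\mathbf{\Omega}_{0}^{-1}\mathbf{\Delta}\mathbf{\Omega}_{0}^{-1})=\mathbf{\Gamma}\,\vect(\mathbf{\Delta})$ together with $\mathbf{\Delta}_{\mathcal{S}^{c}}=\mathbf{0}$ reduces this to a linear system whose only unknowns are $\vect(\mathbf{\Delta})_{\mathcal{S}}$ and $\tilde{\mathbf{Z}}_{\mathcal{S}^{c}}$.

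Next I would partition the system along $\mathcal{S}$ and $\mathcal{S}^{c}$. The $\mathcal{S}$-block can be solved for $\vect(\mathbf{\Delta})_{\mathcal{S}}$ because $\mathbf{\Gamma}_{\mathcal{S}\mathcal{S}}$ is invertible (this is exactly what makes $C_{\mathbf{\Gamma}}$ well defined), giving $\vect(\mathbf{\Delta})_{\mathcal{S}}=\mathbf{\Gamma}_{\mathcal{S}\mathcal{S}}^{-1}[-\vect(\mathbf{W})_{\mathcal{S}}+\vect(\mathbf{R})_{\mathcal{S}}-\lambda_{n}\vect(\mathbf{M})_{\mathcal{S}}]$, where $\mathbf{M}:=(\mathbf{T}\otimes\mathbf{1}_{M}\mathbf{1}_{M}^{\intercal})\circ\tilde{\mathbf{Z}}$. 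Substituting this into the $\mathcal{S}^{c}$-block and isolating the weight term yields
\[
\lambda_{n}\vect(\mathbf{M})_{\mathcal{S}^{c}}=\mathbf{\Gamma}_{\mathcal{S}^{c}\mathcal{S}}\mathbf{\Gamma}_{\mathcal{S}\mathcal{S}}^{-1}\big[\vect(\mathbf{W})_{\mathcal{S}}-\vect(\mathbf{R})_{\mathcal{S}}+\lambda_{n}\vect(\mathbf{M})_{\mathcal{S}}\big]-\vect(\mathbf{W})_{\mathcal{S}^{c}}+\vect(\mathbf{R})_{\mathcal{S}^{c}}.
\]

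Then I would bound $\|\cdot\|_{\max}^{(M^{2})}$ of the right-hand side, applying the block operator-norm inequality \eqref{norm:2} in its $M^{2}$ form and using $\|\mathbf{\Gamma}_{\mathcal{S}^{c}\mathcal{S}}\mathbf{\Gamma}_{\mathcal{S}\mathcal{S}}^{-1}\|_{\infty}^{(M^{2})}=C_{\mathbf{\Gamma}^{2}}$. Since $\mathbf{T}\otimes\mathbf{1}_{M}\mathbf{1}_{M}^{\intercal}$ rescales each $(j,l)$ block by $\tau_{jl}$, on $\mathcal{S}$ we have $\|\mathbf{M}_{jl}\|_{F}=\tau_{jl}\|\tilde{\mathbf{Z}}_{jl}\|_{F}\le a_{2}$ from $\tau_{jl}<a_{2}$ and $\|\tilde{\mathbf{Z}}_{jl}\|_{F}\le1$ in \eqref{subdiff}, while $\|\mathbf{W}\|_{\max},\|\mathbf{R}(\mathbf{\Delta})\|_{\max}\le\lambda_{n}(a_{1}-a_{2}C_{\mathbf{\Gamma}^{2}})/(2(1+C_{\mathbf{\Gamma}^{2}}))$ by hypothesis, after identifying $\|\cdot\|_{\max}^{(M^{2})}$ with $\|\cdot\|_{\max}^{(M)}$ via \eqref{norm:1}. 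These combine to give $\|\lambda_{n}\vect(\mathbf{M})_{\mathcal{S}^{c}}\|_{\max}^{(M^{2})}\le\lambda_{n}a_{1}$, i.e. $\max_{(j,l)\in\mathcal{S}^{c}}\tau_{jl}\|\tilde{\mathbf{Z}}_{jl}\|_{F}\le a_{1}$. Finally, for each $(j,l)\in\mathcal{S}^{c}$ the strict inequality $\tau_{jl}>a_{1}$ yields $\|\tilde{\mathbf{Z}}_{jl}\|_{F}\le a_{1}/\tau_{jl}<1$, hence $\|\tilde{\mathbf{Z}}_{\mathcal{S}^{c}}\|_{\max}^{(M^{2})}<1$. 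Strict dual feasibility then certifies, via Lemma \ref{lem:kkt}, that $(\tilde{\mathbf{\Omega}},\tilde{\mathbf{Z}})$ solves the unrestricted KKT system \eqref{kkt}, so $\tilde{\mathbf{\Omega}}=\hat{\mathbf{\Omega}}$ by uniqueness.

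The main bookkeeping obstacle will be the vectorization of the Hadamard weight term: one must verify the clean block-diagonal action of $\mathbf{T}\otimes\mathbf{1}_{M}\mathbf{1}_{M}^{\intercal}$ and keep careful track of the $M^{2}$-block operator norm $C_{\mathbf{\Gamma}^{2}}$ versus the $M$-block norms of $\mathbf{W}$ and $\mathbf{R}$, reconciling them through \eqref{norm:1}. The deeper point, handled outside this lemma, is that the whole bound presupposes good control of $\|\mathbf{W}\|_{\max}$ and $\|\mathbf{R}(\mathbf{\Delta})\|_{\max}$; supplying the concentration bound for the effective noise and the remainder bound for the log-determinant Taylor expansion is where the substantive estimation work lies, and those will enter as companion results.
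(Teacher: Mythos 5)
Your proposal is correct and follows essentially the same route as the paper's proof: both start from the stationarity condition satisfied by the witness construction, rewrite it via the decompositions $\hat{\mathbf{\Sigma}}=\mathbf{\Omega}_{0}^{-1}+\mathbf{W}$ and $\tilde{\mathbf{\Omega}}^{-1}=\mathbf{\Omega}_{0}^{-1}-\mathbf{\Omega}_{0}^{-1}\mathbf{\Delta}\mathbf{\Omega}_{0}^{-1}+\mathbf{R}(\mathbf{\Delta})$, vectorize with $\mathbf{\Gamma}$, eliminate $\vect(\mathbf{\Delta})_{\mathcal{S}}$ by inverting $\mathbf{\Gamma}_{\mathcal{S}\mathcal{S}}$, and bound the resulting expression for the $\mathcal{S}^{c}$ blocks in the $M^{2}$-block max norm using $C_{\mathbf{\Gamma}^{2}}$, the subgradient bound $\|\tilde{\mathbf{Z}}_{jl}\|_{F}\leq 1$ on $\mathcal{S}$, and the hypotheses on $\|\mathbf{W}\|$, $\|\mathbf{R}\|$, and the weights. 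The only difference is cosmetic bookkeeping: you keep $\tau_{jl}$ on the right and divide per block at the end (arriving at $\tau_{jl}\|\tilde{\mathbf{Z}}_{jl}\|_{F}\leq a_{1}<\tau_{jl}$), whereas the paper factors out $\min_{(j,l)\in\mathcal{S}^{c}}\tau_{jl}$ first and compares the resulting bound to $1$; these are the same inequality.
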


\begin{proof}
	The optimality condition \eqref{kkt} can be rewritten in the alternative but equivalent form
	\begin{align}\label{kkt:alt}
		\mathbf{\Omega}_{0}^{-1}\mathbf{\Delta}\mathbf{\Omega}_{0}^{-1}+\mathbf{W}-\mathbf{R}(\mathbf{\Delta})+\lambda_{n}(\mathbf{T}\otimes\mathbf{1}_{M}\mathbf{1}_{M}^{\intercal})\circ\tilde{\mathbf{Z}}=\mathbf{0}.
	\end{align}
	The vectorized version of \eqref{kkt:alt} is
	\begin{align*}
		\mathbf{\Gamma}\vect(\mathbf{\Delta})+\vect(\mathbf{W})-\vect(\mathbf{R})+\lambda_{n}\vect(\mathbf{T}\otimes\mathbf{1}_{M}\mathbf{1}_{M}^{\intercal})\circ\vect(\tilde{\mathbf{Z}})=\mathbf{0},
	\end{align*}
	where we have abbreviated $\mathbf{R}(\mathbf{\Delta})$ by $\mathbf{R}$. Equivalently,
	\begin{align*}
		\left(
		\begin{array}{cc}
			\mathbf{\Gamma}_{\mathcal{S}\mathcal{S}}&\mathbf{\Gamma}_{\mathcal{S}\mathcal{S}^{c}}\\
			\mathbf{\Gamma}_{\mathcal{S}^{c}\mathcal{S}}&\mathbf{\Gamma}_{\mathcal{S}^{c}\mathcal{S}^{c}}
		\end{array}
		\right)
		\left(
		\begin{array}{c}
			\mathbf{\Delta}_{\mathcal{S}}\\
			\mathbf{\Delta}_{\mathcal{S}^{c}}
		\end{array}
		\right)
		+
		\left(
		\begin{array}{c}
			\mathbf{W}_{\mathcal{S}}-\mathbf{R}_{\mathcal{S}}+\lambda_{n}(\mathbf{T}\otimes\mathbf{1}_{M}\mathbf{1}_{M}^{\intercal})_{\mathcal{S}}\circ\tilde{\mathbf{Z}}_{\mathcal{S}}\\
			\mathbf{W}_{\mathcal{S}^{c}}-\mathbf{R}_{\mathcal{S}^{c}}+\lambda_{n}(\mathbf{T}\otimes\mathbf{1}_{M}\mathbf{1}_{M}^{\intercal})_{\mathcal{S}^{c}}\circ\tilde{\mathbf{Z}}_{\mathcal{S}^{c}}
		\end{array}
		\right)
		=\mathbf{0}.
	\end{align*}
	From this we get the system of equations
	\begin{subequations}
		\begin{align}
			\mathbf{\Gamma}_{\mathcal{S}\mathcal{S}}\mathbf{\Delta}_{\mathcal{S}}+\mathbf{W}_{\mathcal{S}}-\mathbf{R}_{\mathcal{S}}+\lambda_{n}(\mathbf{T}\otimes\mathbf{1}_{M}\mathbf{1}_{M}^{\intercal})_{\mathcal{S}}\circ\tilde{\mathbf{Z}}_{\mathcal{S}}&=\mathbf{0}\label{eq:1}\\
			\mathbf{\Gamma}_{\mathcal{S}^{c}\mathcal{S}}\mathbf{\Delta}_{\mathcal{S}}+\mathbf{W}_{\mathcal{S}^{c}}-\mathbf{R}_{\mathcal{S}^{c}}+\lambda_{n}(\mathbf{T}\otimes\mathbf{1}_{M}\mathbf{1}_{M}^{\intercal})_{\mathcal{S}^{c}}\circ\tilde{\mathbf{Z}}_{\mathcal{S}^{c}}&=\mathbf{0}\label{eq:2}
		\end{align}
	\end{subequations}
	Solving \eqref{eq:1} for $\mathbf{\Delta}_{S}$ and then substituting in \eqref{eq:2}, we get
	\begin{align*}
		\lambda_{n}(\mathbf{T}\otimes\mathbf{1}_{M}\mathbf{1}_{M}^{\intercal})_{\mathcal{S}^{c}}\circ\tilde{\mathbf{Z}}_{\mathcal{S}^{c}}&=\mathbf{R}_{\mathcal{S}^{c}}-\mathbf{W}_{\mathcal{S}^{c}}-\mathbf{\Gamma}_{\mathcal{S}^{c}\mathcal{S}}\mathbf{\Gamma}_{\mathcal{S}\mathcal{S}}^{-1}(\mathbf{R}_{\mathcal{S}}-\mathbf{W}_{\mathcal{S}})\\
		&+\lambda_{n}\mathbf{\Gamma}_{\mathcal{S}^{c}\mathcal{S}}\mathbf{\Gamma}_{\mathcal{S}\mathcal{S}}^{-1}(\mathbf{T}\otimes\mathbf{1}_{M}\mathbf{1}_{M}^{\intercal})_{\mathcal{S}}\circ\tilde{\mathbf{Z}}_{\mathcal{S}}.
	\end{align*}
	Taking the $M^{2}$-block versions of the $\ell_{\infty}$ norm on both sides, we have
	\begin{align*}
		\lambda_{n}\left(\min_{(j,l)\in\mathcal{S}^{c}}\tau_{jl}\right)\|\tilde{\mathbf{Z}}_{\mathcal{S}^{c}}\|_{\max}^{(M^{2})}\leq&\|\mathbf{R}_{\mathcal{S}^{c}}\|_{\max}^{(M^{2})}+\|\mathbf{W}_{\mathcal{S}^{c}}\|_{\max}^{(M^{2})}\\
		&+ C_{\mathbf{\Gamma}^{2}}\left(\|\mathbf{R}_{\mathcal{S}}\|_{\max}^{(M^{2})}+\|\mathbf{W}_{\mathcal{S}}\|_{\max}^{(M^{2})}\right)\\
		&+\lambda_{n} C_{\mathbf{\Gamma}^{2}}\left(\max_{(j,l)\in\mathcal{S}}\tau_{jl}\right).
	\end{align*}
	Using the condition of the lemma, we have
	\begin{align*}
		\|\tilde{\mathbf{Z}}_{\mathcal{S}^{c}}\|_{\max}^{(M^{2})}<\frac{1+ C_{\mathbf{\Gamma}^{2}}}{a_{1}\lambda_{n}}\left(\|\mathbf{W}\|_{\max}^{(M)}+\|\mathbf{R}(\mathbf{\Delta})\|_{\max}^{(M)}\right)+\frac{a_{2}}{a_{1}} C_{\mathbf{\Gamma}^{2}},
	\end{align*}
	which is no greater than 1.
\end{proof}

Our next step is to relate the behavior of the remainder term \eqref{remainder} to the deviation $\mathbf{\Delta}=\tilde{\mathbf{\Omega}}-\mathbf{\Omega}_{0}$.

\medskip
\begin{lemma}[Control of the remainder]\label{lem:remainder}
	Suppose that $\|\mathbf{\Delta}\|_{\max}^{(M)}\leq \frac{1}{3C_{\mathbf{\Sigma}}d}$ holds. Then the matrix
	\begin{align*}
		\mathbf{J}:=\sum_{s=0}^{\infty}(-1)^{s}(\mathbf{\Omega}_{0}^{-1}\mathbf{\Delta})^{s}
	\end{align*}
	satisfies $\|\mathbf{J}\|_{\infty}^{(M)}\leq\frac{3}{2}$. Moreover, the remainder $\mathbf{R}(\mathbf{\Delta})$ is equal to 
	\begin{align*}
		\mathbf{\Omega}_{0}^{-1}\mathbf{\Delta}\mathbf{\Omega}_{0}^{-1}\mathbf{\Delta}\mathbf{J}\mathbf{\Omega}_{0}^{-1}
	\end{align*}
	and has its $M$-block $\ell_{\infty}$ vector norm satisfying
	\begin{align}\label{R:max:norm:bound}
		\|\mathbf{R}(\mathbf{\Delta})\|_{\max}^{(M)}\leq\frac{3}{2}C_{\mathbf{\Sigma}}^{3}d\left(\|\mathbf{\Delta}\|_{\max}^{(M)}\right)^{2}.
	\end{align}
\end{lemma}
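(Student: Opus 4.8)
The plan is to read off $\mathbf{R}(\mathbf{\Delta})$ as the tail of a Neumann series for $\tilde{\mathbf{\Omega}}^{-1}=(\mathbf{\Omega}_{0}+\mathbf{\Delta})^{-1}$ and then bound that tail with the norm inequalities of Appendix~\ref{appendix:b}. First I would factor $\mathbf{\Omega}_{0}+\mathbf{\Delta}=(\mathbf{\Omega}_{0}^{-1})^{-1}(\mathbf{I}+\mathbf{\Omega}_{0}^{-1}\mathbf{\Delta})$, so that $\tilde{\mathbf{\Omega}}^{-1}=(\mathbf{I}+\mathbf{\Omega}_{0}^{-1}\mathbf{\Delta})^{-1}\mathbf{\Omega}_{0}^{-1}$. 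To license the expansion I would control the block $\ell_{\infty}$-matrix norm of $\mathbf{\Omega}_{0}^{-1}\mathbf{\Delta}$: by submultiplicativity \eqref{norm:6}, $\|\mathbf{\Omega}_{0}^{-1}\mathbf{\Delta}\|_{\infty}^{(M)}\leq C_{\mathbf{\Sigma}}\|\mathbf{\Delta}\|_{\infty}^{(M)}$, and since $\mathbf{\Delta}$ is supported on $\mathcal{S}$ (at most $d$ off-diagonal neighbours per row, the diagonal adding only a harmless $+1$), $\|\mathbf{\Delta}\|_{\infty}^{(M)}\leq d\,\|\mathbf{\Delta}\|_{\max}^{(M)}$. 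The hypothesis $\|\mathbf{\Delta}\|_{\max}^{(M)}\leq 1/(3C_{\mathbf{\Sigma}}d)$ then forces $\|\mathbf{\Omega}_{0}^{-1}\mathbf{\Delta}\|_{\infty}^{(M)}\leq 1/3<1$. Because $\|\cdot\|_{\infty}^{(M)}$ is submultiplicative, this both guarantees convergence of $\mathbf{J}=\sum_{s\geq0}(-1)^{s}(\mathbf{\Omega}_{0}^{-1}\mathbf{\Delta})^{s}=(\mathbf{I}+\mathbf{\Omega}_{0}^{-1}\mathbf{\Delta})^{-1}$ and, via the geometric estimate $\|\mathbf{J}\|_{\infty}^{(M)}\leq\sum_{s\geq0}(1/3)^{s}=3/2$, yields the first claim.

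Next, substituting $\tilde{\mathbf{\Omega}}^{-1}=\mathbf{J}\mathbf{\Omega}_{0}^{-1}$ into the definition \eqref{remainder}: the $s=0$ and $s=1$ terms of $\mathbf{J}\mathbf{\Omega}_{0}^{-1}$ are precisely $\mathbf{\Omega}_{0}^{-1}$ and $-\mathbf{\Omega}_{0}^{-1}\mathbf{\Delta}\mathbf{\Omega}_{0}^{-1}$, which cancel the corresponding terms in $\mathbf{R}(\mathbf{\Delta})$, leaving $\mathbf{R}(\mathbf{\Delta})=\sum_{s\geq2}(-1)^{s}(\mathbf{\Omega}_{0}^{-1}\mathbf{\Delta})^{s}\mathbf{\Omega}_{0}^{-1}$. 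Factoring $(\mathbf{\Omega}_{0}^{-1}\mathbf{\Delta})^{2}$ out on the left and re-indexing the surviving series back into $\mathbf{J}$ gives the stated closed form $\mathbf{R}(\mathbf{\Delta})=\mathbf{\Omega}_{0}^{-1}\mathbf{\Delta}\mathbf{\Omega}_{0}^{-1}\mathbf{\Delta}\mathbf{J}\mathbf{\Omega}_{0}^{-1}$; this step is a routine manipulation.

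For the bound \eqref{R:max:norm:bound} I would split $\mathbf{R}=(\mathbf{\Omega}_{0}^{-1}\mathbf{\Delta})\,(\mathbf{\Omega}_{0}^{-1}\mathbf{\Delta}\mathbf{J}\mathbf{\Omega}_{0}^{-1})$ and apply the mixed inequality $\|\mathbf{A}\mathbf{B}\|_{\max}^{(M)}\leq\|\mathbf{A}\|_{\max}^{(M)}\|\mathbf{B}\|_{1}^{(M)}$, which follows from \eqref{norm:4} applied column-block by column-block. The leading factor is handled by the $\infty$–$\max$ version of \eqref{norm:2}, namely $\|\mathbf{\Omega}_{0}^{-1}\mathbf{\Delta}\|_{\max}^{(M)}\leq\|\mathbf{\Omega}_{0}^{-1}\|_{\infty}^{(M)}\|\mathbf{\Delta}\|_{\max}^{(M)}=C_{\mathbf{\Sigma}}\|\mathbf{\Delta}\|_{\max}^{(M)}$. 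For the trailing factor I would convert $\|\cdot\|_{1}^{(M)}$ into $\|\cdot\|_{\infty}^{(M)}$ by \eqref{norm:5}, transpose using the symmetry of $\mathbf{\Omega}_{0}^{-1}$ and $\mathbf{\Delta}$ to write $(\mathbf{\Omega}_{0}^{-1}\mathbf{\Delta}\mathbf{J}\mathbf{\Omega}_{0}^{-1})^{\intercal}=\mathbf{\Omega}_{0}^{-1}\mathbf{J}^{\intercal}\mathbf{\Delta}\mathbf{\Omega}_{0}^{-1}$, and then invoke submultiplicativity \eqref{norm:6} repeatedly together with $\|\mathbf{J}^{\intercal}\|_{\infty}^{(M)}\leq 3/2$ (the same Neumann estimate applied to $(\mathbf{I}+\mathbf{\Delta}\mathbf{\Omega}_{0}^{-1})^{-1}$, whose norm input $\|\mathbf{\Delta}\mathbf{\Omega}_{0}^{-1}\|_{\infty}^{(M)}\leq1/3$ is identical) and $\|\mathbf{\Delta}\|_{\infty}^{(M)}\leq d\,\|\mathbf{\Delta}\|_{\max}^{(M)}$. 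This produces $\|\mathbf{\Omega}_{0}^{-1}\mathbf{\Delta}\mathbf{J}\mathbf{\Omega}_{0}^{-1}\|_{1}^{(M)}\leq\frac{3}{2}C_{\mathbf{\Sigma}}^{2}d\,\|\mathbf{\Delta}\|_{\max}^{(M)}$, and multiplying the two factors gives exactly $\frac{3}{2}C_{\mathbf{\Sigma}}^{3}d\,(\|\mathbf{\Delta}\|_{\max}^{(M)})^{2}$.

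I expect the main obstacle to be the norm bookkeeping in the last step: the only available tools are \eqref{norm:2}--\eqref{norm:6}, and none of them bounds the block max norm of a four-fold product directly. The crux is to choose the grouping $(\mathbf{\Omega}_{0}^{-1}\mathbf{\Delta})\cdot(\mathbf{\Omega}_{0}^{-1}\mathbf{\Delta}\mathbf{J}\mathbf{\Omega}_{0}^{-1})$ and to route each factor through the correct norm — the block max norm for the leading factor and an $\ell_{1}\!\to\!\ell_{\infty}$ conversion (via transposition and symmetry) for the trailing one — so that $\ell_{\infty}$ submultiplicativity applies cleanly and the degree factor $d$ enters exactly once, matching the stated constants. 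The Neumann convergence and the telescoping of the series are, by contrast, entirely routine.
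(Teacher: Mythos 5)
Your proposal is correct and follows essentially the same route as the paper's proof: the same Neumann-series expansion of $(\mathbf{\Omega}_{0}+\mathbf{\Delta})^{-1}$ justified by $\|\mathbf{\Omega}_{0}^{-1}\mathbf{\Delta}\|_{\infty}^{(M)}<1/3$, the same grouping of $\mathbf{R}(\mathbf{\Delta})$ into the leading factor $\mathbf{\Omega}_{0}^{-1}\mathbf{\Delta}$ and trailing factor $\mathbf{\Omega}_{0}^{-1}\mathbf{\Delta}\mathbf{J}\mathbf{\Omega}_{0}^{-1}$, and the same symmetry-based transposition to bound the trailing factor through $\|\mathbf{J}^{\intercal}\|_{\infty}^{(M)}\leq 3/2$. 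The only cosmetic difference is that the paper implements your mixed inequality $\|\mathbf{A}\mathbf{B}\|_{\max}^{(M)}\leq\|\mathbf{A}\|_{\max}^{(M)}\,\|\mathbf{B}\|_{1}^{(M)}$ by introducing block selection matrices $\mathbf{e}_{j}$ and chaining \eqref{norm:3}, \eqref{norm:4}, and \eqref{norm:5}, rather than stating it as a standalone matrix-level inequality.
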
	
\begin{proof}
	We write the remainder in the equivalent form
	\begin{align}\label{remainder:equivalent}
		\mathbf{R}(\mathbf{\Delta})=(\mathbf{\Omega}_{0}+\mathbf{\Delta})^{-1}-\mathbf{\Omega}_{0}^{-1}+\mathbf{\Omega}_{0}^{-1}\mathbf{\Delta}\mathbf{\Omega}_{0}^{-1}.
	\end{align}
	By the submultiplicativity of the $\|\cdot\|_{\infty}^{(M)}$, we have
	\begin{align}\label{geometric.series.bound}
		\|\mathbf{\Omega}_{0}^{-1}\mathbf{\Delta}\|_{\infty}^{(M)}\leq\|\mathbf{\Omega}_{0}^{-1}\|_{\infty}^{(M)}\|\mathbf{\Delta}\|_{\infty}^{(M)}\leq dC_{\mathbf{\Sigma}}\|\mathbf{\Delta}\|_{\max}^{(M)}<\frac{1}{3},
	\end{align}
	where we have used $\|\mathbf{\Delta}\|_{\max}^{(M)}\leq \frac{1}{3C_{\mathbf{\Sigma}}d}$, and the fact that for each $j$, $\mathbf{\Delta}$ has at most $d$ nonzero blocks $\mathbf{\Delta}_{jl}$. Consequently, we have the convergent matrix expansion \citep[p.~627]{schechter1996handbook}
	\begin{align}\label{matrix:expansion}
		(\mathbf{\Omega}_{0}+\mathbf{\Delta})^{-1}&=\left[\mathbf{\Omega}_{0}(\mathbf{I}+\mathbf{\Omega}_{0}^{-1}\mathbf{\Delta})\right]^{-1}\notag\\
		&=(\mathbf{I}+\mathbf{\Omega}_{0}^{-1}\mathbf{\Delta})^{-1}\mathbf{\Omega}_{0}^{-1}\notag\\
		&=\sum_{s=0}^{\infty}(-1)^{s}(\mathbf{\Omega}_{0}^{-1}\mathbf{\Delta})^{s}\mathbf{\Omega}_{0}^{-1}\notag\\
		&=\mathbf{\Omega}_{0}^{-1}-\mathbf{\Omega}_{0}^{-1}\mathbf{\Delta}\mathbf{\Omega}_{0}^{-1}+\mathbf{\Omega}_{0}^{-1}\mathbf{\Delta}\mathbf{\Omega}_{0}^{-1}\mathbf{\Delta}\mathbf{J}\mathbf{\Omega}_{0}^{-1}
	\end{align}
	Substituting \eqref{matrix:expansion} into \eqref{remainder:equivalent} yields
	\begin{align*}
		\mathbf{R}(\mathbf{\Delta})=\mathbf{\Omega}_{0}^{-1}\mathbf{\Delta}\mathbf{\Omega}_{0}^{-1}\mathbf{\Delta}\mathbf{J}\mathbf{\Omega}_{0}^{-1}.
	\end{align*}
	We now prove the bound on $\mathbf{R}(\mathbf{\Delta})$ as follows. Let the block matrix $\mathbf{e}_{j}\in\mathbb{R}^{pM\times M}$ with identity matrix in the $j$-th block and zero matrix elsewhere. Then,
	\begin{align*}
		\|\mathbf{R}(\mathbf{\Delta})\|_{\max}^{(M)}&=\max_{j,l}\|\mathbf{e}_{j}^{\intercal}\mathbf{\Omega}_{0}^{-1}\mathbf{\Delta}\mathbf{\Omega}_{0}^{-1}\mathbf{\Delta}\mathbf{J}\mathbf{\Omega}_{0}^{-1}\mathbf{e}_{l}\|_{F}\\
		&\leq\max_{j}\|\mathbf{\Delta}\mathbf{\Omega}_{0}^{-1}\mathbf{e}_{j}\|_{\max}^{(M)}\,\max_{l}\|\mathbf{\Omega}_{0}^{-1}\mathbf{\Delta}\mathbf{J}\mathbf{\Omega}_{0}^{-1}\mathbf{e}_{l}\|_{1}^{(M)},\quad(\text{by }\ref{norm:3})\\
		&\leq\max_{j}\|\mathbf{\Omega}_{0}^{-1}\mathbf{e}_{j}\|_{1}^{(M)}\,\|\mathbf{\Delta}\|_{\max}^{(M)}\,\max_{l}\|\mathbf{\Omega}_{0}^{-1}\mathbf{\Delta}\mathbf{J}\mathbf{\Omega}_{0}^{-1}\mathbf{e}_{l}\|_{1}^{(M)},\quad(\text{by }\ref{norm:4})\\
		&=\|\mathbf{\Omega}_{0}^{-1}\|_{\infty}^{(M)}\,\|\mathbf{\Delta}\|_{\max}^{(M)}\,\|\mathbf{\Omega}_{0}^{-1}\mathbf{J}^{\intercal}\mathbf{\Delta}\mathbf{\Omega}_{0}^{-1}\|_{\infty}^{(M)},\quad(\text{by }\ref{norm:5})\\
		&\leq C_{\mathbf{\Sigma}}^{3}\|\mathbf{\Delta}\|_{\max}^{(M)}\,
		\|\mathbf{J}^{\intercal}\|_{\infty}^{(M)}\,\|\mathbf{\Delta}\|_{\infty}^{(M)},\quad(\text{by }\ref{norm:6})\\
		&\leq C_{\mathbf{\Sigma}}^{3}d\left(\|\mathbf{\Delta}\|_{\max}^{(M)}\right)^{2}
		\|\mathbf{J}^{\intercal}\|_{\infty}^{(M)}.
	\end{align*} 
	Note that by \eqref{geometric.series.bound}, we have
	\begin{align*}
		\|\mathbf{J}^{\intercal}\|_{\infty}^{(M)}\leq\sum_{s=0}^{\infty}\left(\|\mathbf{\Delta}\mathbf{\Omega}_{0}^{-1}\|_{\infty}^{(M)}\right)^{s}\leq\frac{1}{1-\|\mathbf{\Delta}\|_{\infty}^{(M)}\|\mathbf{\Omega}_{0}^{-1}\|_{\infty}^{(M)}}<\frac{3}{2}.
	\end{align*}
	Thus, \eqref{R:max:norm:bound} holds.    
\end{proof}

Our next lemma provides control on the deviation $\mathbf{\Delta}=\tilde{\mathbf{\Omega}}-\mathbf{\Omega}_{0}$, measured in the $M$-block elementwise norm.

\medskip
\begin{lemma}[Control of $\mathbf{\Delta}$]\label{lem:diff.control}
	Suppose that
	\begin{align*}
		\max_{(j,l)\in\mathcal{S}}\tau_{jl}<a_{2}\quad\text{and}\quad r:=2 C_{\mathbf{\Gamma}}\left(\|\mathbf{W}\|_{\max}^{(M)}+\lambda_{n}a_{2}\right)\leq\min\left\{\frac{1}{3C_{\mathbf{\Sigma}}d},\frac{1}{3C_{\mathbf{\Sigma}}^{3} C_{\mathbf{\Gamma}}d}\right\}.
	\end{align*}
	Then, there exists $\mathbf{\Delta}\in\mathbb{R}^{pM\times pM}$ such that $\tilde{\mathbf{\Omega}}=\mathbf{\Delta}+\mathbf{\Omega}_{0}$ and $\|\mathbf{\Delta}\|_{\max}^{(M)}\leq r$.
\end{lemma}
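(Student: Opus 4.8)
The plan is to prove existence of the bounded deviation through a fixed-point argument in the spirit of the primal--dual witness method. First I would recall that the restricted problem \eqref{res.adap.fglasso} is convex and, by the same coercivity argument used in Lemma \ref{lem:kkt}, has a unique minimizer $\tilde{\mathbf{\Omega}}$, so that $\mathbf{\Delta}=\tilde{\mathbf{\Omega}}-\mathbf{\Omega}_{0}$ is well defined and supported on $\mathcal{S}$. Its stationarity condition, in the vectorized form on the block set $\mathcal{S}$ already derived as \eqref{eq:1}, reads
\[
\mathbf{\Gamma}_{\mathcal{S}\mathcal{S}}\mathbf{\Delta}_{\mathcal{S}}+\mathbf{W}_{\mathcal{S}}-\mathbf{R}_{\mathcal{S}}(\mathbf{\Delta})+\lambda_{n}(\mathbf{T}\otimes\mathbf{1}_{M}\mathbf{1}_{M}^{\intercal})_{\mathcal{S}}\circ\tilde{\mathbf{Z}}_{\mathcal{S}}=\mathbf{0},
\]
where $\mathbf{R}(\mathbf{\Delta})$ is the log-determinant remainder \eqref{remainder} and $\tilde{\mathbf{Z}}_{\mathcal{S}}$ is the subdifferential evaluated at $\mathbf{\Omega}_{0}+\mathbf{\Delta}$. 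Solving for $\mathbf{\Delta}_{\mathcal{S}}$ motivates the map $F(\mathbf{\Delta}_{\mathcal{S}})=-(\mathbf{\Gamma}_{\mathcal{S}\mathcal{S}})^{-1}\big[\mathbf{W}_{\mathcal{S}}-\mathbf{R}_{\mathcal{S}}(\mathbf{\Delta})+\lambda_{n}(\mathbf{T}\otimes\mathbf{1}_{M}\mathbf{1}_{M}^{\intercal})_{\mathcal{S}}\circ\tilde{\mathbf{Z}}_{\mathcal{S}}\big]$, defined on the closed ball $\mathbb{B}(r)=\{\mathbf{\Delta}_{\mathcal{S}}:\|\mathbf{\Delta}_{\mathcal{S}}\|_{\max}^{(M)}\le r\}$, whose fixed points are exactly the solutions of the displayed equation.

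The core computation is to show $F(\mathbb{B}(r))\subseteq\mathbb{B}(r)$. Using \eqref{norm:1} to pass between the $M$- and $M^{2}$-block max norms and \eqref{norm:2} to peel off $(\mathbf{\Gamma}_{\mathcal{S}\mathcal{S}})^{-1}$, which contributes the factor $C_{\mathbf{\Gamma}}$, I would bound the three terms separately: the noise term by $C_{\mathbf{\Gamma}}\|\mathbf{W}\|_{\max}^{(M)}$; the penalty term by $C_{\mathbf{\Gamma}}\lambda_{n}a_{2}$, using $\max_{(j,l)\in\mathcal{S}}\tau_{jl}<a_{2}$, $\tilde{\mathbf{Z}}_{jj}=\mathbf{0}$, and $\|\tilde{\mathbf{Z}}_{jl}\|_{F}\le 1$; and the remainder term by $\tfrac{3}{2}C_{\mathbf{\Gamma}}C_{\mathbf{\Sigma}}^{3}d\,r^{2}$, invoking Lemma \ref{lem:remainder}, which applies because $r\le\tfrac{1}{3C_{\mathbf{\Sigma}}d}$. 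Summing and using $r=2C_{\mathbf{\Gamma}}(\|\mathbf{W}\|_{\max}^{(M)}+\lambda_{n}a_{2})$ gives
\[
\|F(\mathbf{\Delta}_{\mathcal{S}})\|_{\max}^{(M)}\le C_{\mathbf{\Gamma}}\big(\|\mathbf{W}\|_{\max}^{(M)}+\lambda_{n}a_{2}\big)+\tfrac{3}{2}C_{\mathbf{\Gamma}}C_{\mathbf{\Sigma}}^{3}d\,r^{2}=\tfrac{r}{2}+\tfrac{3}{2}C_{\mathbf{\Gamma}}C_{\mathbf{\Sigma}}^{3}d\,r^{2}.
\]
The second hypothesis $r\le\tfrac{1}{3C_{\mathbf{\Sigma}}^{3}C_{\mathbf{\Gamma}}d}$ forces $\tfrac{3}{2}C_{\mathbf{\Gamma}}C_{\mathbf{\Sigma}}^{3}d\,r^{2}\le\tfrac{r}{2}$, whence $\|F(\mathbf{\Delta}_{\mathcal{S}})\|_{\max}^{(M)}\le r$, so that $F$ maps $\mathbb{B}(r)$ into itself. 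Note how the factor of $2$ in the definition of $r$ and the two branches of the $\min$ are tuned precisely so that the affine and quadratic contributions each absorb half of $r$.

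Since $\mathbb{B}(r)$ is compact and convex, Brouwer's fixed-point theorem then produces a fixed point $\mathbf{\Delta}^{*}\in\mathbb{B}(r)$ solving \eqref{eq:1}; by uniqueness of the minimizer of the convex restricted problem this fixed point equals $\tilde{\mathbf{\Omega}}-\mathbf{\Omega}_{0}$, yielding the claimed bound $\|\mathbf{\Delta}\|_{\max}^{(M)}\le r$. I expect the main obstacle to be verifying the continuity of $F$ that Brouwer requires, rather than the norm bookkeeping. The remainder $\mathbf{R}(\mathbf{\Delta})$ is smooth on $\mathbb{B}(r)$ because $r$ is small enough that $\mathbf{\Omega}_{0}+\mathbf{\Delta}\succ\mathbf{0}$, exactly as exploited in Lemma \ref{lem:remainder}; the delicate term is the subdifferential $\tilde{\mathbf{Z}}_{\mathcal{S}}$, which is single-valued and continuous only where the off-diagonal support blocks $\tilde{\mathbf{\Omega}}_{jl}$, $(j,l)\in\mathcal{S}$, do not vanish. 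This is where the minimum-signal condition of Theorem \ref{theorem:3} is used: since $\|\mathbf{\Omega}_{0,jl}\|_{F}$ exceeds $\min\{\tfrac{1}{3C_{\mathbf{\Sigma}}d},\tfrac{1}{3C_{\mathbf{\Sigma}}^{3}C_{\mathbf{\Gamma}}d}\}\ge r\ge\|\mathbf{\Delta}_{jl}\|_{F}$, we get $\|\tilde{\mathbf{\Omega}}_{jl}\|_{F}\ge\|\mathbf{\Omega}_{0,jl}\|_{F}-\|\mathbf{\Delta}_{jl}\|_{F}>0$, so these blocks stay nonzero throughout $\mathbb{B}(r)$ and $F$ is continuous there.
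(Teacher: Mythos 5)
Your fixed-point skeleton is exactly the paper's: your map $F$ coincides with the paper's $\mathbf{F}(\mathbf{\Delta})=-\mathbf{\Gamma}_{\mathcal{S}\mathcal{S}}^{-1}\mathbf{G}(\mathbf{\Omega}_{0}+\mathbf{\Delta})+\mathbf{\Delta}_{\mathcal{S}}$ once $\mathbf{G}$ is expanded, the two half-budgets ($\mathbf{T}_{2}$ bounded by $r/2$ via the definition of $r$, the remainder bounded by $r/2$ via Lemma \ref{lem:remainder} and $r\leq\tfrac{1}{3C_{\mathbf{\Sigma}}^{3}C_{\mathbf{\Gamma}}d}$) are the paper's bounds verbatim, and both proofs conclude with Brouwer. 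The genuine gap is in your handling of the subgradient. You let $\tilde{\mathbf{Z}}_{\mathcal{S}}$ be re-evaluated at the running point $\mathbf{\Omega}_{0}+\mathbf{\Delta}$, which indeed makes $F$ discontinuous wherever an off-diagonal block of $\mathbf{\Omega}_{0}+\mathbf{\Delta}$ with $(j,l)\in\mathcal{S}$ vanishes, and you repair this by importing the minimum-signal condition of Theorem \ref{theorem:3}. But that condition is not a hypothesis of Lemma \ref{lem:diff.control}; the lemma assumes only the bounds on $\tau_{jl}$ and on $r$. A proof that needs the signal condition establishes a strictly weaker statement, and the weakening is not harmless within the paper's architecture: Lemma \ref{lem:concetration:remainder} invokes Lemma \ref{lem:diff.control} without assuming any signal strength, so your version breaks that chain as written.

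The discontinuity is self-inflicted, and the paper's formulation shows how to avoid it. In the paper, $\tilde{\mathbf{Z}}_{\mathcal{S}}$ is the element of $\partial(\sum_{j\ne l}\|\tilde{\mathbf{\Omega}}_{jl}\|_{F})/\partial\mathbf{\Omega}_{\mathcal{S}}$ chosen once, in step \ref{step:b} of the witness construction, at the solution $\tilde{\mathbf{\Omega}}$ of the restricted problem; it is then held \emph{fixed} inside $\mathbf{G}$. With $\tilde{\mathbf{Z}}_{\mathcal{S}}$ constant, $\mathbf{F}$ depends on $\mathbf{\Delta}$ only through $[(\mathbf{\Omega}_{0}+\mathbf{\Delta})^{-1}]_{\mathcal{S}}$, which is smooth on the ball because $\|\mathbf{\Omega}_{0}^{-1}\mathbf{\Delta}\|_{\infty}^{(M)}<1/3$ there keeps $\mathbf{\Omega}_{0}+\mathbf{\Delta}$ invertible; continuity for Brouwer is automatic and no signal condition enters. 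The price of freezing the subgradient is that a fixed point certifies the stationarity equation with the frozen $\tilde{\mathbf{Z}}_{\mathcal{S}}$ rather than the genuine KKT condition, so the identification of the fixed point with $\tilde{\mathbf{\Omega}}-\mathbf{\Omega}_{0}$ cannot use KKT sufficiency as you do; instead one notes that $\mathbf{G}$ is the gradient in $\mathbf{\Omega}_{\mathcal{S}}$ of the strictly convex function $\tr(\hat{\mathbf{\Sigma}}\mathbf{\Omega})-\log\det(\mathbf{\Omega})+\lambda_{n}\langle(\mathbf{T}\otimes\mathbf{1}_{M}\mathbf{1}_{M}^{\intercal})_{\mathcal{S}}\circ\tilde{\mathbf{Z}}_{\mathcal{S}},\mathbf{\Omega}_{\mathcal{S}}\rangle$ on the restricted set, hence has at most one zero, and $\tilde{\mathbf{\Omega}}$ is a zero by its own optimality. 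Your norm bookkeeping needs no change; restructure the continuity step this way and the extra hypothesis can be dropped.
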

\begin{proof}
	By arguing the same way as in Lemma \ref{lem:kkt}, there exists a unique solution to the restricted adaptive fglasso problem \eqref{res.adap.fglasso} and it is equal to $\tilde{\mathbf{\Omega}}$ if and only if $\tilde{\mathbf{\Omega}}_{\mathcal{S}^{c}}=\mathbf{0}$ and it is the root of the first-order derivative equation
	\begin{align*}
		\hat{\mathbf{\Sigma}}_{\mathcal{S}}-(\tilde{\mathbf{\Omega}}^{-1})_{\mathcal{S}}+\lambda_{n}(\mathbf{T}\otimes\mathbf{1}_{M}\mathbf{1}_{M}^{\intercal})_{\mathcal{S}}\circ\tilde{\mathbf{Z}}_{\mathcal{S}}=\mathbf{0},
	\end{align*}
	where $\tilde{\mathbf{Z}}_{\mathcal{S}}$ is a member of the subdifferential $\partial(\sum_{j=1}\|\tilde{\mathbf{\Omega}}_{jl}\|_{F})/\partial\mathbf{\Omega}_{\mathcal{S}}$.
	
	Let $\mathcal{A}=\{\mathbf{\Omega}\succ\mathbf{0}\,:\,\mathbf{\Omega}_{\mathcal{S}^{c}}=\mathbf{0}\}$ and $\mathcal{B}=\{\mathbf{\Delta}\in\mathbb{R}^{pM\times pM}\,:\,\mathbf{\Delta}_{\mathcal{S}^{c}}=\mathbf{0}\}$.  We define the functions $\mathbf{G}:\mathcal{A}\rightarrow\mathbb{R}^{|\mathcal{S}|^{2}M^{2}}$and $\mathbf{F}:\mathcal{B}\rightarrow\mathbb{R}^{|\mathcal{S}|^{2}M^{2}}$ as
	\begin{align}
		\mathbf{G}(\mathbf{\Omega})&=-[\mathbf{\Omega}^{-1}]_{\mathcal{S}}+\hat{\mathbf{\Sigma}}_{\mathcal{S}}+\lambda_{n}(\mathbf{T}\otimes\mathbf{1}_{M}\mathbf{1}_{M}^{\intercal})_{\mathcal{S}}\circ\tilde{\mathbf{Z}}_{\mathcal{S}},\notag\\
		\mathbf{F}(\mathbf{\Delta})&=-\mathbf{\Gamma_{\mathcal{S}\mathcal{S}}}^{-1}\mathbf{G}(\mathbf{\Omega}_{0}+\mathbf{\Delta})+\mathbf{\Delta}_{\mathcal{S}}.\label{F:eq}
	\end{align}
	Finally, we define the closed ball
	\begin{align*}
		\mathcal{C}(\mathcal{B},r)=\left\{\mathbf{\Delta}\in\mathcal{B}\,:\,\|\mathbf{\Delta}_{\mathcal{S}}\|_{\max}^{(M^2)}\leq r\right\}.
	\end{align*}
	Observe that
	\begin{align}
		\mathbf{G}(\mathbf{\Omega}_{0}+\mathbf{\Delta})&=-[(\mathbf{\Omega}_{0}+\mathbf{\Delta})^{-1}]_{\mathcal{S}}+\hat{\mathbf{\Sigma}}_{\mathcal{S}}+\lambda_{n}(\mathbf{T}\otimes\mathbf{1}_{M}\mathbf{1}_{M}^{\intercal})_{\mathcal{S}}\circ\tilde{\mathbf{Z}}_{\mathcal{S}}\notag\\
		&=-[\mathbf{R}(\mathbf{\Delta})-\mathbf{\Omega}_{0}^{-1}\mathbf{\Delta}\mathbf{\Omega}_{0}^{-1}]_{\mathcal{S}}+\mathbf{W}_{\mathcal{S}}+\lambda_{n}(\mathbf{T}\otimes\mathbf{1}_{M}\mathbf{1}_{M}^{\intercal})_{\mathcal{S}}\circ\tilde{\mathbf{Z}}_{\mathcal{S}}\notag\\
		&=-\mathbf{R}_{\mathcal{S}}+\mathbf{\Gamma}_{\mathcal{S}\mathcal{S}}\mathbf{\Delta}_{\mathcal{S}}+\mathbf{W}_{\mathcal{S}}+\lambda_{n}(\mathbf{T}\otimes\mathbf{1}_{M}\mathbf{1}_{M}^{\intercal})_{\mathcal{S}}\circ\tilde{\mathbf{Z}}_{\mathcal{S}}.\label{G:eq}
	\end{align}
	Substituting \eqref{G:eq} into \eqref{F:eq}, we obtain
	\begin{align*}
		\mathbf{F}(\mathbf{\Delta})=\underbrace{\mathbf{\Gamma}_{\mathcal{S}\mathcal{S}}^{-1}\mathbf{R}_{\mathcal{S}}}_{\mathbf{T}_{1}}-\underbrace{\mathbf{\Gamma}_{\mathcal{S}\mathcal{S}}^{-1}[\mathbf{W}_{\mathcal{S}}+\lambda_{n}(\mathbf{T}\otimes\mathbf{1}_{M}\mathbf{1}_{M}^{\intercal})_{\mathcal{S}}\circ\tilde{\mathbf{Z}}_{\mathcal{S}}]}_{\mathbf{T}_{2}}.
	\end{align*}
	
	Let $\mathbf{\Delta}\in\mathcal{C}(\mathcal{B},r)$. Then, by Lemma \ref{lem:remainder} and the conditions of this lemma about $r$, we get
	\begin{align*}
		\|\mathbf{T}_{1}\|_{\max}^{(M^{2})}\leq C_{\mathbf{\Gamma}}\|\mathbf{R}_{\mathcal{S}}\|_{\max}^{(M^{2})}\leq  C_{\mathbf{\Gamma}}\frac{3}{2}C_{\mathbf{\Sigma}}^{3}dr^{2}\leq\frac{r}{2}.
	\end{align*}
	Concerning $\mathbf{T}_{2}$, we have
	\begin{align*}
		\|\mathbf{T}_{2}\|_{\max}^{(M^{2})}\leq C_{\mathbf{\Gamma}}\left(\|\mathbf{W}\|_{\max}^{(M)}+\lambda_{n}\max_{(j,l)\in\mathcal{S}}\tau_{jl}\right)\leq\frac{r}{2}.
	\end{align*}
	Combining the two bounds, we get $\|\mathbf{F}(\mathbf{\Delta})\|\leq r$ for every $\mathbf{\Delta}\in\mathcal{C}(\mathcal{B},r)$, which means that $\mathbf{F}(\mathcal{C}(\mathcal{B},r))\subset \mathcal{C}(\mathcal{B},r)$. By Brouwer's fixed point theorem \citep[p.~161]{ortega2000iterative}, there exists $\mathbf{\Delta}\in\mathcal{C}(\mathcal{B},r)$ such that
	\begin{align*}
		\mathbf{F}(\mathbf{\Delta})=\mathbf{\Delta}\Leftrightarrow \mathbf{G}(\mathbf{\Omega}_{0}+\mathbf{\Delta})=\mathbf{0}\Leftrightarrow\tilde{\mathbf{\Omega}}=\mathbf{\Omega}_{0}+\mathbf{\Delta},
	\end{align*}
	which completes the proof.
\end{proof}

Let $\mathbb{N}$ denote the set of positive integers and let the true covariance matrix be $\mathbf{\Sigma}_{0}=(\sigma_{0,jl})$. To prove the tail bounds of $\|\mathbf{W}\|_{\max}^{(M)}$ and $\|\mathbf{R}(\mathbf{\Delta})\|_{\max}^{(M)}$, we make use of the following useful definition.

\medskip
\begin{definition}[Tail condition]
	The random vector $\mathbf{X}\in\mathbb{R}^{p_{n}M_{n}}$ satisfies the tail condition $\mathcal{T}(f,\nu_{\star})$ if there exists a constant $\nu_{\star}\in(0,\infty]$ and a function $f:\mathbb{N}\times(0,\infty)\rightarrow(0,\infty)$ such that for any $i,j=1,\ldots,p_{n}M_{n}$, we have
	\begin{align*}
		P(|\hat{\sigma}_{ij}-\sigma_{0,ij}|\geq\delta)\leq\frac{1}{f(n,\delta)},\quad\text{for all }\delta\in(0,1/\nu_{\star}].
	\end{align*}
\end{definition}
Given a larger sample size $n$, we expect the tail probability bound $1/f(n,\delta)$ to be smaller, or equivalently, for the tail function $f(n,\delta)$ to be larger. Accordingly, we require that $f$ is monotonically increasing in $n$, so that for each fixed $\delta>0$, we can define the inverse function
\begin{align}\label{bar:n}
	\bar{n}_{f}(\delta,r)&=\argmax\{n:f(n,\delta)\leq r\}.
\end{align}
Similarly, we expect that $f$ is monotonically increasing in $\delta$, so that for each fixed $n$, we can define the inverse in the second argument
\begin{align}\label{bar:delta}
	\bar{\delta}_{f}(n,r)&=\argmax\{\delta:f(n,\delta)\leq r\},
\end{align}
where $r\in[1,\infty)$. For future reference, we note a simple consequence of the monotonicity of the tail function $f$, that is
\begin{align}\label{n:delta}
	n>\bar{n}_{f}(\delta,r)\quad\text{for some }\delta>0\quad\Longrightarrow\quad \bar{\delta}_{f}(n,r)<\delta.
\end{align}

It can be proven \citep[Theorem 1; ][]{qiao2019functional} that under condition \ref{condition:concetration}, one such tail function is
\begin{align*}
	f(n,\delta)=\frac{1}{C_{2}}\exp\left\{C_{1}n^{1-2\alpha(1+\beta)}\delta^{2}\right\},
\end{align*}
for some constants $C_{1},C_{2}>0$ and any $0<\delta\leq C_{1}$. Define
\begin{align*}
	\delta_{1}=\frac{\min\left\{\frac{1}{3C_{\mathbf{\Sigma}}d_{n}},\frac{1}{3C_{\mathbf{\Sigma}}^{3} C_{\mathbf{\Gamma}}d_{n}}\right\}}{2M_{n} C_{\mathbf{\Gamma}}\left[1+\frac{2a_{2}(1+ C_{\mathbf{\Gamma}^{2}})}{a_{1}-a_{2} C_{\mathbf{\Gamma}^{2}}}\right]},\quad
	\delta_{2}=\frac{1}{6M_{n}d_{n}C_{\mathbf{\Sigma}}^{3} C_{\mathbf{\Gamma}}^{2}\left[1+\frac{2a_{2}(1+ C_{\mathbf{\Gamma}^{2}})}{a_{1}-a_{2} C_{\mathbf{\Gamma}^{2}}}\right]^{2}}
\end{align*}

\medskip
\begin{lemma}\label{lem:concetration:remainder}
	Suppose that
	\begin{align*}
		\|\mathbf{W}\|_{\max}^{(M)}\leq \frac{\lambda_{n}\left(a_{1}-a_{2} C_{\mathbf{\Gamma}^{2}}\right)}{2\left(1+ C_{\mathbf{\Gamma}^{2}}\right)}\quad\text{and}\quad\max_{(j,l)\in\mathcal{S}}\tau_{jl}<a_{2},
	\end{align*}
	for $a_{1},a_{2}$ specified in condition \ref{condition:weights}. Then, for any $\gamma>2$,
	\begin{align*}
		n>\bar{n}_{f}\left(\min\{C_{1},\delta_{1},\delta_{2}\},(M_{n}p_{n})^{\gamma}\right),\quad
		\lambda_{n}=\frac{2(1+ C_{\mathbf{\Gamma}^{2}})M_{n}\bar{\delta}_{f}\left(n,(M_{n}p_{n})^{\gamma}\right)}{a_{1}-a_{2} C_{\mathbf{\Gamma}^{2}}},
	\end{align*}
	we have,
	\begin{align*}
		\|\mathbf{R}(\mathbf{\Delta})\|_{\max}^{(M)}\leq \frac{\lambda_{n}\left(a_{1}-a_{2} C_{\mathbf{\Gamma}^{2}}\right)}{2\left(1+ C_{\mathbf{\Gamma}^{2}}\right)}.
	\end{align*}
\end{lemma}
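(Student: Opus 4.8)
The plan is to chain together the three preceding lemmas so that the definitions of $\delta_{1}$ and $\delta_{2}$ perform exactly the bookkeeping required, with the tail-function machinery only used to guarantee that $\bar{\delta}_{f}(n,(M_{n}p_{n})^{\gamma})$ is small. First I would abbreviate $\delta^{\star}=\bar{\delta}_{f}(n,(M_{n}p_{n})^{\gamma})$ and $\kappa=1+\frac{2a_{2}(1+C_{\mathbf{\Gamma}^{2}})}{a_{1}-a_{2}C_{\mathbf{\Gamma}^{2}}}$. The hypothesis $n>\bar{n}_{f}(\min\{C_{1},\delta_{1},\delta_{2}\},(M_{n}p_{n})^{\gamma})$, combined with the monotonicity consequence \eqref{n:delta}, gives $\delta^{\star}<\min\{C_{1},\delta_{1},\delta_{2}\}$. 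Here $\delta^{\star}<C_{1}$ ensures the tail-function formula is in force, while $\delta^{\star}<\delta_{1}$ and $\delta^{\star}<\delta_{2}$ are the two inequalities that drive the two halves of the argument. I would also record the identity $\frac{\lambda_{n}(a_{1}-a_{2}C_{\mathbf{\Gamma}^{2}})}{2(1+C_{\mathbf{\Gamma}^{2}})}=M_{n}\delta^{\star}$, which follows by substituting the prescribed $\lambda_{n}$ and which identifies the target bound as simply $M_{n}\delta^{\star}$. The hypothesis on $\mathbf{W}$ then reads $\|\mathbf{W}\|_{\max}^{(M)}\le M_{n}\delta^{\star}$.

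Next I would feed this into the quantity $r=2C_{\mathbf{\Gamma}}(\|\mathbf{W}\|_{\max}^{(M)}+\lambda_{n}a_{2})$ from Lemma~\ref{lem:diff.control}. Since $\lambda_{n}a_{2}=M_{n}\delta^{\star}\cdot\frac{2a_{2}(1+C_{\mathbf{\Gamma}^{2}})}{a_{1}-a_{2}C_{\mathbf{\Gamma}^{2}}}$, factoring out $M_{n}\delta^{\star}$ yields $r\le 2C_{\mathbf{\Gamma}}M_{n}\delta^{\star}\kappa$. Because $\delta_{1}$ is defined precisely so that $2C_{\mathbf{\Gamma}}M_{n}\delta_{1}\kappa=\min\{\frac{1}{3C_{\mathbf{\Sigma}}d_{n}},\frac{1}{3C_{\mathbf{\Sigma}}^{3}C_{\mathbf{\Gamma}}d_{n}}\}$, the inequality $\delta^{\star}<\delta_{1}$ gives $r\le\min\{\frac{1}{3C_{\mathbf{\Sigma}}d},\frac{1}{3C_{\mathbf{\Sigma}}^{3}C_{\mathbf{\Gamma}}d}\}$, which is exactly the hypothesis of Lemma~\ref{lem:diff.control}. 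Invoking that lemma produces $\mathbf{\Delta}$ with $\tilde{\mathbf{\Omega}}=\mathbf{\Omega}_{0}+\mathbf{\Delta}$ and $\|\mathbf{\Delta}\|_{\max}^{(M)}\le r$; in particular $\|\mathbf{\Delta}\|_{\max}^{(M)}\le\frac{1}{3C_{\mathbf{\Sigma}}d}$, so the hypothesis of Lemma~\ref{lem:remainder} is also satisfied.

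Finally I would apply Lemma~\ref{lem:remainder} to obtain $\|\mathbf{R}(\mathbf{\Delta})\|_{\max}^{(M)}\le\frac{3}{2}C_{\mathbf{\Sigma}}^{3}d\,r^{2}$, substitute the bound $r\le 2C_{\mathbf{\Gamma}}M_{n}\delta^{\star}\kappa$ to reach $6C_{\mathbf{\Sigma}}^{3}dC_{\mathbf{\Gamma}}^{2}M_{n}^{2}(\delta^{\star})^{2}\kappa^{2}$, and then use $\delta^{\star}<\delta_{2}$. Since $\delta_{2}$ is defined as the reciprocal of $6M_{n}d\,C_{\mathbf{\Sigma}}^{3}C_{\mathbf{\Gamma}}^{2}\kappa^{2}$, the inequality $\delta^{\star}<\delta_{2}$ collapses this quadratic expression to $M_{n}\delta^{\star}$, which is precisely the target $\frac{\lambda_{n}(a_{1}-a_{2}C_{\mathbf{\Gamma}^{2}})}{2(1+C_{\mathbf{\Gamma}^{2}})}$, completing the proof.

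The argument carries no genuine analytic difficulty beyond the three lemmas already established; the real work lies in the constant bookkeeping. The quantities $\delta_{1}$ and $\delta_{2}$ have been reverse-engineered so that $\delta^{\star}<\delta_{1}$ activates Lemma~\ref{lem:diff.control} and $\delta^{\star}<\delta_{2}$ converts the quadratic remainder estimate into a linear one matching the target. Accordingly, the main obstacle is to verify that these two definitions align exactly with the constants generated when $r$ is substituted, tracking the factor $\kappa$ faithfully through both chains of inequalities.
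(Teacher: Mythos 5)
Your proposal is correct and follows essentially the same route as the paper's proof: bound $r=2C_{\mathbf{\Gamma}}(\|\mathbf{W}\|_{\max}^{(M)}+\lambda_{n}a_{2})$ by $2C_{\mathbf{\Gamma}}\kappa M_{n}\bar{\delta}_{f}$, use $\bar{\delta}_{f}<\delta_{1}$ to activate Lemma~\ref{lem:diff.control}, chain with Lemma~\ref{lem:remainder} for the quadratic bound, and use $\bar{\delta}_{f}<\delta_{2}$ to collapse it to $M_{n}\bar{\delta}_{f}$, which equals the target. Your write-up is merely more explicit about the constant bookkeeping (the identity $M_{n}\delta^{\star}=\frac{\lambda_{n}(a_{1}-a_{2}C_{\mathbf{\Gamma}^{2}})}{2(1+C_{\mathbf{\Gamma}^{2}})}$ and the role of $\kappa$) that the paper leaves implicit.
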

\begin{proof}
	Observe that
	\begin{align*}
		2 C_{\mathbf{\Gamma}}(\|\mathbf{W}\|_{\max}^{(M)}+\lambda_{n}a_{2})\leq 2 C_{\mathbf{\Gamma}}\left[1+\frac{2a_{2}(1+ C_{\mathbf{\Gamma}^{2}})}{a_{1}-a_{2} C_{\mathbf{\Gamma}^{2}}}\right]M_{n}\bar{\delta}_{f}\left(n,(M_{n}p_{n})^{\gamma}\right).
	\end{align*}
	From the lower bound on $n$ and the monotonicity of $f$, we have
	\begin{align*}
		2 C_{\mathbf{\Gamma}}\left[1+\frac{2a_{2}(1+ C_{\mathbf{\Gamma}^{2}})}{a_{1}-a_{2} C_{\mathbf{\Gamma}^{2}}}\right]M_{n}\bar{\delta}_{f}\left(n,(M_{n}p_{n})^{\gamma}\right)\leq\min\left\{\frac{1}{3C_{\mathbf{\Sigma}}d_{n}},\frac{1}{3C_{\mathbf{\Sigma}}^{3} C_{\mathbf{\Gamma}}d_{n}}\right\}.
	\end{align*}
	By Lemmas \ref{lem:remainder} and \ref{lem:diff.control}, we have
	\begin{align*}
		\|\mathbf{R}(\mathbf{\Delta})\|_{\max}^{(M)}
		&\leq6d_{n}C_{\mathbf{\Sigma}}^{3} C_{\mathbf{\Gamma}}^{2}\left[1+\frac{2a_{2}(1+ C_{\mathbf{\Gamma}^{2}})}{a_{1}-a_{2} C_{\mathbf{\Gamma}^{2}}}\right]^{2}M_{n}^{2}\bar{\delta}_{f}\left(n,(M_{n}p_{n})^{\gamma}\right)^{2}\\
		&\leq M_{n}\bar{\delta}_{f}\left(n,(M_{n}p_{n})^{\gamma}\right)=\frac{\lambda_{n}\left(a_{1}-a_{2} C_{\mathbf{\Gamma}^{2}}\right)}{2\left(1+ C_{\mathbf{\Gamma}^{2}}\right)},
	\end{align*}
	where the last inequality follows from the lower bound on the sample size $n$ and the monotonicity of $f$.
\end{proof}

\medskip
\begin{lemma}\label{lem:concentration:noise}
	Suppose condition \ref{condition:concetration} holds. For any $\gamma>2$, $n>\bar{n}_{f}\left(C_{1},(M_{n}p_{n})^{\gamma}\right)$ and
	\begin{align*}
		\lambda_{n}=\frac{2(1+ C_{\mathbf{\Gamma}^{2}})M_{n}\bar{\delta}_{f}\left(n,(M_{n}p_{n})^{\gamma}\right)}{a_{1}-a_{2} C_{\mathbf{\Gamma}^{2}}},
	\end{align*}
	we have,
	\begin{align*}
		\|\mathbf{W}\|_{\max}^{(M)}\leq \frac{\lambda_{n}\left(a_{1}-a_{2} C_{\mathbf{\Gamma}^{2}}\right)}{2\left(1+ C_{\mathbf{\Gamma}^{2}}\right)},
	\end{align*}
	with probability greater than $1-(M_{n}p_{n})^{2-\gamma}$.
\end{lemma}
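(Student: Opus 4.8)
The plan is to show that, once the prescribed $\lambda_n$ is substituted, the claimed bound on $\|\mathbf{W}\|_{\max}^{(M)}$ reduces to a clean entrywise statement that follows from a union bound over the coordinates of $\hat{\mathbf{\Sigma}}-\mathbf{\Sigma}_{0}$. First I would plug the given $\lambda_n$ into the right-hand side: since
\begin{align*}
	\frac{\lambda_n(a_1 - a_2 C_{\mathbf{\Gamma}^2})}{2(1+C_{\mathbf{\Gamma}^2})} = M_n\,\bar{\delta}_f\left(n,(M_n p_n)^\gamma\right),
\end{align*}
the assertion is equivalent to $\|\mathbf{W}\|_{\max}^{(M)} \leq M_n\,\bar{\delta}_f(n,(M_n p_n)^\gamma)$ with the stated probability. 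Recalling that $\mathbf{W}=\hat{\mathbf{\Sigma}}-\mathbf{\Omega}_0^{-1}=\hat{\mathbf{\Sigma}}-\mathbf{\Sigma}_0$, the first key step is to pass from the block Frobenius max norm to the coordinatewise deviation: writing $\delta=\bar{\delta}_f(n,(M_n p_n)^\gamma)$, each block $\mathbf{W}_{jl}\in\mathbb{R}^{M\times M}$ has $M^2$ entries, so on the event $\{\max_{i,j}|\hat{\sigma}_{ij}-\sigma_{0,ij}|\leq\delta\}$ one has $\|\mathbf{W}_{jl}\|_F\leq\sqrt{M^2\delta^2}=M_n\delta$ for every $(j,l)$, whence $\|\mathbf{W}\|_{\max}^{(M)}\leq M_n\delta$.

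It then remains to lower-bound the probability of this event. Here I invoke the tail function $f(n,\delta)=C_2^{-1}\exp\{C_1 n^{1-2\alpha(1+\beta)}\delta^2\}$ supplied under Condition~\ref{condition:concetration} by Theorem~1 of \cite{qiao2019functional}, which is valid for $0<\delta\leq C_1$. Before applying it I must check that the chosen $\delta=\bar{\delta}_f(n,(M_n p_n)^\gamma)$ falls in this admissible range; this is exactly where the hypothesis $n>\bar{n}_f(C_1,(M_n p_n)^\gamma)$ enters, since by \eqref{n:delta} it forces $\bar{\delta}_f(n,(M_n p_n)^\gamma)<C_1$. A union bound over the $(M_n p_n)^2$ index pairs then yields
\begin{align*}
	P\left(\max_{i,j}|\hat{\sigma}_{ij}-\sigma_{0,ij}|\geq\delta\right)\leq (M_n p_n)^2\,\frac{1}{f(n,\delta)}.
\end{align*}

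Finally I would exploit the definition $\bar{\delta}_f(n,r)=\argmax\{\delta:f(n,\delta)\leq r\}$ with $r=(M_n p_n)^\gamma$: because $f(n,\cdot)$ is continuous and strictly increasing in $\delta$, its value at $\delta=\bar{\delta}_f$ equals $(M_n p_n)^\gamma$, so the union bound becomes $(M_n p_n)^2/f(n,\delta)=(M_n p_n)^{2-\gamma}$. Combining this with the norm reduction gives $\|\mathbf{W}\|_{\max}^{(M)}\leq M_n\bar{\delta}_f(n,(M_n p_n)^\gamma)$ with probability at least $1-(M_n p_n)^{2-\gamma}$, which is the desired conclusion. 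The argument is essentially concentration plus a union bound, so there is no deep obstacle; the only points requiring care are the passage from the block Frobenius norm to the entrywise deviation (the origin of the factor $M_n$) and the verification that $\delta$ remains in the admissible range $(0,C_1]$ of the tail function, which is precisely what the lower bound on $n$ guarantees.
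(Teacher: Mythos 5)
Your proposal is correct and takes essentially the same route as the paper: the paper's own proof simply invokes Lemma 14 of \cite{qiao2019functional} for the concentration bound $P\left(\|\mathbf{W}\|_{\max}^{(M)}\geq M_{n}\bar{\delta}_{f}\left(n,(M_{n}p_{n})^{\gamma}\right)\right)\leq(M_{n}p_{n})^{2-\gamma}$ and then substitutes $\lambda_{n}$, whereas you re-derive that cited bound from the entrywise tail condition via the block Frobenius norm reduction and a union bound over the $(M_{n}p_{n})^{2}$ entries. The two points you flag as delicate --- the factor $M_{n}$ arising from the Frobenius norm of an $M\times M$ block, and the admissibility check $\bar{\delta}_{f}\left(n,(M_{n}p_{n})^{\gamma}\right)<C_{1}$ forced by $n>\bar{n}_{f}\left(C_{1},(M_{n}p_{n})^{\gamma}\right)$ through \eqref{n:delta} --- are precisely the content hidden inside that citation, so nothing is missing from your argument.
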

\begin{proof}
	By Lemma 14 of \cite{qiao2019functional}, we have
	\begin{align*}
		P\left(\|\mathbf{W}\|_{\max}^{(M)}\geq M_{n}\bar{\delta}_{f}\left(n,(M_{n}p_{n})^{\gamma}\right)\right)\leq(Mp)^{2-\gamma}.
	\end{align*}
	Thus, for
	\begin{align*}
		\lambda_{n}=\frac{2(1+ C_{\mathbf{\Gamma}^{2}})M_{n}\bar{\delta}_{f}\left(n,(M_{n}p_{n})^{\gamma}\right)}{a_{1}-a_{2} C_{\mathbf{\Gamma}^{2}}},
	\end{align*}
	we have
	\begin{align*}
		\|\mathbf{W}\|_{\max}^{(M)}\leq \frac{\lambda_{n}\left(a_{1}-a_{2} C_{\mathbf{\Gamma}^{2}}\right)}{2\left(1+ C_{\mathbf{\Gamma}^{2}}\right)},
	\end{align*}
	with probability greater than $1-(M_{n}p_{n})^{2-\gamma}$. 
\end{proof}

We are now ready to prove graph selection consistency.

\medskip
\begin{theorem}
	Suppose conditions \ref{condition:concetration} and \ref{condition:weights} hold, $\gamma>2$,
	\begin{align*}
		\lambda_{n}=\frac{2(1+ C_{\mathbf{\Gamma}^{2}})M_{n}}{a_{1}-a_{2} C_{\mathbf{\Gamma}^{2}}}\sqrt{\frac{\log C_{2}+\gamma\log(M_{n}p_{n})}{C_{1}n^{1-2\alpha(1+\beta)}}},
	\end{align*}
	and    
	\begin{align*}
		\min_{(j,l)\in\mathcal{S}}\|\mathbf{\Omega}_{0,jl}\|_{F}>\min\left\{\frac{1}{3C_{\mathbf{\Sigma}}d_{n}},\frac{1}{3C_{\mathbf{\Sigma}}^{3} C_{\mathbf{\Gamma}}d_{n}}\right\}.
	\end{align*}
	Then, for all $n$ satisfying the lower bound
	\begin{scriptsize}
		\begin{align*}
			n^{1-2\alpha(1+\beta)}>\frac{\log[C_{2}(M_{n}p_{n})^{\gamma}]}{C_{1}}\max\left\{\frac{1}{C_{1}},\frac{2M_{n} C_{\mathbf{\Gamma}}\left[1+\frac{2a_{2}(1+ C_{\mathbf{\Gamma}^{2}})}{a_{1}-a_{2} C_{\mathbf{\Gamma}^{2}}}\right]}{\min\left\{\frac{1}{3C_{\mathbf{\Sigma}}d_{n}},\frac{1}{3C_{\mathbf{\Sigma}}^{3} C_{\mathbf{\Gamma}}d_{n}}\right\}},6M_{n}d_{n} C_{\mathbf{\Gamma}}^{2} C_{\mathbf{\Gamma}^{2}}\left[1+\frac{2a_{2}(1+ C_{\mathbf{\Gamma}^{2}})}{a_{1}-a_{2} C_{\mathbf{\Gamma}^{2}}}\right]^{2}\right\}^{2},
		\end{align*}
	\end{scriptsize}
	we have $\hat{\mathcal{S}}=\mathcal{S}$ with probability greater than $1-3(M_{n}p_{n})^{2-\gamma}$.
\end{theorem}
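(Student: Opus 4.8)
The plan is to run the primal–dual witness (PDW) argument assembled in the preceding lemmas, using the explicit Gaussian tail function to convert the abstract thresholds $\bar{n}_f,\bar{\delta}_f$ into the stated $\lambda_n$ and sample-size bound. First I would record that for $f(n,\delta)=C_2^{-1}\exp\{C_1 n^{1-2\alpha(1+\beta)}\delta^2\}$, solving $f(n,\delta)=(M_np_n)^{\gamma}$ gives $\bar{\delta}_f(n,(M_np_n)^{\gamma})=\sqrt{(\log C_2+\gamma\log(M_np_n))/(C_1 n^{1-2\alpha(1+\beta)})}$, so the $\lambda_n$ of Lemma~\ref{lem:concentration:noise} coincides exactly with the $\lambda_n$ in the statement. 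Likewise $n>\bar{n}_f(\delta,(M_np_n)^{\gamma})$ is equivalent to $n^{1-2\alpha(1+\beta)}>\log[C_2(M_np_n)^{\gamma}]/(C_1\delta^2)$; taking $\delta=\min\{C_1,\delta_1,\delta_2\}$ reproduces the displayed lower bound, since $(\min\{C_1,\delta_1,\delta_2\})^{-2}=\max\{C_1^{-1},\delta_1^{-1},\delta_2^{-1}\}^2$ and the three quantities $C_1^{-1},\delta_1^{-1},\delta_2^{-1}$ are precisely the entries of the maximum appearing there.

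With the thresholds matched, I would chain the lemmas in the order noise $\to$ $\mathbf{\Delta}$ $\to$ remainder $\to$ dual feasibility. Lemma~\ref{lem:concentration:noise} yields $\|\mathbf{W}\|_{\max}^{(M)}\le\lambda_n(a_1-a_2 C_{\mathbf{\Gamma}^2})/[2(1+C_{\mathbf{\Gamma}^2})]$ on an event of probability at least $1-(M_np_n)^{2-\gamma}$. Feeding this bound together with the weight inequalities of Condition~\ref{condition:weights} into Lemma~\ref{lem:concetration:remainder} produces the matching estimate $\|\mathbf{R}(\mathbf{\Delta})\|_{\max}^{(M)}\le\lambda_n(a_1-a_2 C_{\mathbf{\Gamma}^2})/[2(1+C_{\mathbf{\Gamma}^2})]$; the role of the sample-size bound here is to ensure, through Lemma~\ref{lem:diff.control}, that $\|\mathbf{\Delta}\|_{\max}^{(M)}\le r\le\min\{(3C_{\mathbf{\Sigma}}d)^{-1},(3C_{\mathbf{\Sigma}}^3 C_{\mathbf{\Gamma}}d)^{-1}\}$, which is exactly the hypothesis under which the quadratic remainder estimate \eqref{R:max:norm:bound} is valid.

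Now both hypotheses of the strict–dual–feasibility Lemma~\ref{lem:dual.feas.} are in force, so the witness $\tilde{\mathbf{Z}}_{\mathcal{S}^c}$ built in step~\ref{step:c} satisfies $\|\tilde{\mathbf{Z}}_{\mathcal{S}^c}\|_{\max}^{(M^2)}<1$, whence $\tilde{\mathbf{\Omega}}=\hat{\mathbf{\Omega}}$ by the uniqueness in Lemma~\ref{lem:kkt}. Since $\tilde{\mathbf{\Omega}}_{\mathcal{S}^c}=\mathbf{0}$ by the restricted construction in step~\ref{step:a}, this already delivers the no–false–edge inclusion $\hat{\mathcal{S}}\subseteq\mathcal{S}$. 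For the reverse inclusion I would combine the deviation bound $\|\hat{\mathbf{\Omega}}-\mathbf{\Omega}_0\|_{\max}^{(M)}=\|\mathbf{\Delta}\|_{\max}^{(M)}\le\min\{(3C_{\mathbf{\Sigma}}d)^{-1},(3C_{\mathbf{\Sigma}}^3 C_{\mathbf{\Gamma}}d)^{-1}\}$ from Lemma~\ref{lem:diff.control} with the minimum–signal hypothesis: for each edge $(j,l)\in\mathcal{S}$ the reverse triangle inequality gives $\|\hat{\mathbf{\Omega}}_{jl}\|_F\ge\|\mathbf{\Omega}_{0,jl}\|_F-\|\mathbf{\Delta}_{jl}\|_F>0$, so $(j,l)\in\hat{\mathcal{S}}$, giving $\mathcal{S}\subseteq\hat{\mathcal{S}}$ and hence $\hat{\mathcal{S}}=\mathcal{S}$. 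A union bound over the noise event of Lemma~\ref{lem:concentration:noise} and the two weight events of Condition~\ref{condition:weights}, each of probability at least $1-(M_np_n)^{2-\gamma}$, yields the stated probability $1-3(M_np_n)^{2-\gamma}$.

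The part I expect to be most delicate is the bookkeeping that keeps this chain non-circular. The remainder bound of Lemma~\ref{lem:concetration:remainder} presupposes control of $\mathbf{\Delta}$, whereas Lemma~\ref{lem:diff.control} controls $\mathbf{\Delta}$ through $r=2C_{\mathbf{\Gamma}}(\|\mathbf{W}\|_{\max}^{(M)}+\lambda_n a_2)$, a quantity depending only on the noise and the tuning parameter and not on the remainder itself — so the two must be applied in the right order. I would therefore verify explicitly that the single sample-size threshold simultaneously forces $r\le\min\{(3C_{\mathbf{\Sigma}}d)^{-1},(3C_{\mathbf{\Sigma}}^3 C_{\mathbf{\Gamma}}d)^{-1}\}$, the remainder bound, and the strict inequality against the minimum signal. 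The outer square on $\max\{C_1^{-1},\delta_1^{-1},\delta_2^{-1}\}$ in the $n$-bound is exactly what is needed for all three requirements to close at once.
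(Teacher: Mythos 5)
Your proposal is correct and follows essentially the same route as the paper's own proof: the primal--dual witness chain (noise bound $\Rightarrow$ control of $\mathbf{\Delta}$ via Lemma~\ref{lem:diff.control} $\Rightarrow$ remainder bound via Lemmas~\ref{lem:remainder} and~\ref{lem:concetration:remainder} $\Rightarrow$ strict dual feasibility via Lemma~\ref{lem:dual.feas.}), the conversion of $\bar{n}_f$ and $\bar{\delta}_f$ into the explicit $\lambda_n$ and sample-size bound using the Gaussian tail function with $\delta=\min\{C_1,\delta_1,\delta_2\}$, the union bound over the three events giving $1-3(M_np_n)^{2-\gamma}$, and the minimum-signal argument for $\mathcal{S}\subseteq\hat{\mathcal{S}}$ (your reverse triangle inequality is the same step the paper phrases as a contradiction). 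The only cosmetic mismatch is that the third entry of the maximum in the stated $n$-bound reads $C_{\mathbf{\Gamma}}^{2}C_{\mathbf{\Gamma}^{2}}$ while $\delta_2^{-1}$ as defined carries $C_{\mathbf{\Sigma}}^{3}C_{\mathbf{\Gamma}}^{2}$; this discrepancy is in the paper itself, not in your argument.
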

\begin{proof}
	By Lemmas \ref{lem:dual.feas.} and \ref{lem:concetration:remainder} we can see that
	\begin{align*}
		\left\{\hat{\mathbf{\Omega}}=\tilde{\mathbf{\Omega}}\right\}\supset&\left\{\max_{(j,l)\in\mathcal{S}}\tau_{jl}<a_{2}\right\}\cap\left\{\min_{(j,l)\in\mathcal{S}^{c}}\tau_{jl}>a_{1}\right\}\\
		&\cap\left\{\|\mathbf{W}\|_{\max}^{(M)}\leq \frac{\lambda_{n}\left(a_{1}-a_{2} C_{\mathbf{\Gamma}^{2}}\right)}{2\left(1+ C_{\mathbf{\Gamma}^{2}}\right)}\right\}.
	\end{align*}
	Hence, by Lemmas \ref{lem:dual.feas.}, \ref{lem:concetration:remainder} and \ref{lem:concentration:noise}, we have
	\begin{align}\label{prob:lower:bound}
		P\left(\hat{\mathbf{\Omega}}=\tilde{\mathbf{\Omega}}\right)\geq 1-3(M_{n}p_{n})^{2-\gamma}.
	\end{align}
	To find the lower bound of $n$ so that \eqref{prob:lower:bound} is satisfied we use Definition \ref{bar:n}, with $\delta=\min\{C_{1},\delta_{1},\delta_{2}\}$ and $r=(M_{n}p_{n})^{\gamma}$. Then we fix an $n>\bar{n}_{f}(\min\{C_{1},\delta_{1},\delta_{2}\},(M_{n}p_{n})^{\gamma})$ and find $\bar{\delta}_{f}\left(n,(M_{n}p_{n})^{\gamma}\right)$ using Definition \ref{bar:delta}, relationship \eqref{n:delta}, and substituting in $\lambda_{n}$.
	
	Conditioning on the event $\hat{\mathbf{\Omega}}=\tilde{\mathbf{\Omega}}$, we have $\mathcal{S}^{c}\subset\hat{\mathcal{S}}^{c}$. By Lemma \ref{lem:diff.control}, for any $(j,l)\in\hat{\mathcal{S}}^{c}\cap\mathcal{S}$, we have
	\begin{align*}
		\|\mathbf{\Omega}_{0,jl}\|_{F}=\|\mathbf{\Omega}_{0,jl}-\hat{\mathbf{\Omega}}_{jl}\|_{F}=\|\mathbf{\Omega}_{0,jl}-\tilde{\mathbf{\Omega}}_{jl}\|_{F}=\|\mathbf{\Delta}_{jl}\|\leq \min\left\{\frac{1}{3C_{\mathbf{\Sigma}}d_{n}},\frac{1}{3C_{\mathbf{\Sigma}}^{3} C_{\mathbf{\Gamma}}d_{n}}\right\},
	\end{align*}
	which is a contradiction. Thus, $\hat{\mathcal{S}}=\mathcal{S}$.
\end{proof}

\newpage
\nocite{*}
\bibliographystyle{asa}
\bibliography{references}

\end{document}